\documentclass[11pt]{article}
\usepackage{amssymb}
\usepackage[LGR]{fontenc}
\usepackage[greek,english]{babel}

\usepackage{tcolorbox}

\usepackage{fullpage}
\usepackage{authblk}
\usepackage{MnSymbol}
\usepackage{complexity}
\usepackage{nicefrac}
\usepackage{gensymb}

\usepackage{enumitem}
\usepackage{microtype}
\usepackage{graphicx}
\usepackage{subfigure}
\usepackage{booktabs} 
\usepackage{pifont}

\usepackage{tikz}
\usetikzlibrary{positioning}

\usepackage{bbm}

\usepackage{natbib}

\newcommand{\CLS}{\textsf{CLS}\xspace}
\renewcommand{\NP}{\textsf{NP}\xspace}

\renewcommand{\PLS}{\textsf{PLS}\xspace}
\newcommand{\UEOPL}{\textsf{UEOPL}\xspace}
\renewcommand{\TFNP}{\textsf{TFNP}\xspace}
\renewcommand{\FNP}{\textsf{FNP}\xspace}
\renewcommand{\PPAD}{\textsf{PPAD}\xspace}
\newcommand{\FIXP}{\textsf{FIXP}\xspace}
\newcommand{\LOCALOPT}{\textsc{LocalOpt}\xspace}
\newcommand{\ITER}{\textsc{Iter}\xspace}

\newcommand{\CSOSP}{\textsc{Constrained}-\textsc{Sosp}\xspace}
\newcommand{\ISOSP}{\textsc{Interior}-\textsc{Sosp}\xspace}

\allowdisplaybreaks

\usepackage{url}
\usepackage{amsmath}
\usepackage{amssymb}
\usepackage{amsthm}
\usepackage{bbm}
\usepackage{algorithm}
\usepackage[noend]{algorithmic}
\usepackage[algo2e,vlined,ruled,linesnumbered]{algorithm2e}
\usepackage{footmisc}
\usepackage[table]{xcolor}
\makeatletter
\newcommand\footnoteref[1]{\protected@xdef\@thefnmark{\ref{#1}}\@footnotemark}
\makeatother
\usepackage{arydshln}

\SetAlFnt{\small}
\SetAlCapFnt{\small}
\SetAlCapNameFnt{\small}
\SetAlCapHSkip{0pt}
\IncMargin{-\parindent}

\usepackage[hidelinks, colorlinks = true, linkcolor=red, citecolor=red]{hyperref}
\usepackage{cleveref}
\newcommand{\declarecolor}[2]{\definecolor{#1}{RGB}{#2}\expandafter\newcommand\csname #1\endcsname[1]{\textcolor{#1}{##1}}}
\declarecolor{White}{255, 255, 255}
\declarecolor{Black}{0, 0, 0}
\declarecolor{Maroon}{128, 0, 0}
\declarecolor{Coral}{255, 127, 80}
\declarecolor{Red}{182, 21, 21}
\declarecolor{LimeGreen}{50, 205, 50}
\declarecolor{DarkGreen}{0, 80, 0}
\declarecolor{Purple}{146, 42, 158}
\declarecolor{Navy}{0, 0, 128}
\declarecolor{LightBlue}{84, 101, 202}
\definecolor{mydarkblue}{rgb}{0,0.08,0.45}
\usepackage{tcolorbox}

\usepackage{xcolor}	

\definecolor{CardinalRed}{HTML}{C41E3A}
\definecolor{Dartmouth}{HTML}{00693E}
\definecolor{SapphireBlue}{HTML}{0F52BA}

\colorlet{MyRed}{CardinalRed}
\colorlet{MyGreen}{Dartmouth}

\colorlet{MyLightRed}{MyRed!25}
\colorlet{MyLightGreen}{MyGreen!25}

\colorlet{AlertColor}{MyRed}	
\colorlet{BadColor}{MyRed}	
\colorlet{FocusColor}{MyRed}	
\colorlet{GoodColor}{MyGreen}	
\colorlet{MacroColor}{MyRed}	

\renewcommand{\R}{\mathbb{R}}	

\hypersetup{ %
    pdftitle={},
    pdfkeywords={},
    pdfborder=0 0 0,
    pdfpagemode=UseNone,
    colorlinks=true,
    linkcolor=MyRed,
    citecolor=MyGreen,
    filecolor=Purple,
    urlcolor=blue,
}

\usepackage[%
linewidth=2pt,
linecolor=gray,
middlelinecolor= black,
middlelinewidth=0.4pt,
roundcorner=1pt,
topline = false,
rightline = false,
bottomline = false,
rightmargin=0pt,
skipabove=0pt,
skipbelow=0pt,
leftmargin=0pt,
innerleftmargin=4pt,
innerrightmargin=0pt,
innertopmargin=0pt,
innerbottommargin=0pt,
]{mdframed}

\let\oldnl\nl
\newcommand{\nonl}{\renewcommand{\nl}{\let\nl\oldnl}}

\newcounter{protocol}


\renewcommand*{\G}{\mathcal{G}}

\usepackage{array}
\usepackage[utf8]{inputenc} 
\usepackage[T1]{fontenc}    
\usepackage{booktabs}       
\usepackage{amsfonts}       
\usepackage{nicefrac}       
\usepackage{microtype}      
\usepackage{amsmath}
\usepackage{amssymb}
\usepackage{mathtools}
\usepackage{amsthm}
\usepackage{enumitem}

\usepackage{multirow}

\theoremstyle{plain}
\newtheorem*{theorem*}{Theorem}
\newtheorem{theorem}{Theorem}[section]
\newtheorem{proposition}[theorem]{Proposition}
\newtheorem{lemma}[theorem]{Lemma}
\newtheorem*{lemma*}{Lemma}
\newtheorem{corollary}[theorem]{Corollary}
\newtheorem*{corollary*}{Corollary}
\theoremstyle{definition}
\newtheorem{definition}[theorem]{Definition}
\newtheorem{assumption}[theorem]{Assumption}
\newtheorem{remark}[theorem]{Remark}
\theoremstyle{observation}

\usepackage[textsize=tiny]{todonotes}

\usepackage{dsfont}

\title{The Computational Complexity of Avoiding Strict \\ Saddle Points in Constrained Optimization}

\author[1,2]{Andreas Kontogiannis}
\author[2,3]{Ioannis Panageas}
\author[2,4]{Vasilis Pollatos}

\affil[1]{National Technical University of Athens}
\affil[2]{Archimedes, Athena Research Center, Greece}
\affil[3]{University of California, Irvine}
\affil[4]{National and Kapodistrian University of Athens}

\begin{document}

\maketitle


\begin{abstract}

While first-order stationary points (FOSPs) are the traditional targets of non-convex optimization, they often correspond to undesirable strict saddle points. 
To circumvent this limitation, recent attention has shifted towards finding second-order stationary points (SOSPs). 
In the unconstrained setting, as it has recently been established by Kontogiannis et al. (ICML 2024) \cite{kontogiannis2024computational}, the problem of finding approximate SOSPs is \PLS-complete.
Notably, this problem is as hard as finding approximate unconstrained FOSPs\textemdash with the latter also been recently showed to be \PLS-complete by Hollender and Zampetakis (COLT 2023, Math. Program. 2025) \cite{hollender2023computational}.
That said, the complexity of finding SOSPs in constrained settings has remained notoriously unclear and has been highlighted as an important open question by both the works of Hollender and Zampetakis \cite{hollender2023computational} and Kontogiannis et al. \cite{kontogiannis2024computational}. 
In particular, under one strict definition of constrained second-order stationarity, even verifying whether a point is an approximate SOSP is \NP-hard as shown by Murty and Kabadi (Math. Program. 1987)  \cite{murty1987some}. Under another widely adopted, relaxed definition\textemdash where non-negative curvature is required only along the null space of the active constraints\textemdash the problem lies in \TFNP, and algorithms with $\mathcal{O}(\text{poly}(1/\epsilon))$ running times have been proposed by Lu et al. (NeurIPS 2020) \cite{snap2020}.

In this paper, we settle the complexity of constrained SOSP by proving that computing an $\epsilon$-approximate SOSP under the tractable definition is \PLS-complete. 
We demonstrate that our hardness result continues to hold in domains as simple as the 2-dimensional unit square $[0,1]^2$, and remarkably, even when promised that all stationary points are isolated at a distance of $\Omega(1)$ away from the domain's boundary. 
Our result establishes a fundamental barrier: unless $\PLS \subseteq \PPAD$ (which in turn would imply that \PLS $=$ \CLS), no deterministic, iterative algorithm with a computational efficient, continuous update rule can exist for finding approximate SOSPs. 
This comes in stark contrast to the complexity of its constrained first-order counterpart, for which the celebrated work of Fearnley, Goldberg, Hollender and Savani (STOC 2021, JACM 2022) \cite{fearnley2022complexity} showed that finding an approximate KKT point drops to \CLS-complete. 
Furthermore, to the best of our knowledge, our result yields the first  problem defined in a compact domain to be shown \PLS-complete\textemdash beyond the canonical \PLS-complete problem \textsc{Real-LocalOpt} \cite{daskalakis2011continuous}.

\end{abstract}

\newpage

\clearpage
\tableofcontents

\newpage

\section{Introduction}

The quest for efficient algorithms in non-convex optimization has become a cornerstone of modern machine learning and data science, and ultimately boils down to solving a continuous optimization problem of the form
\begin{equation}
\label{eq:min}
\operatorname*{minimize}_{x\in\mathcal{X}} f(x) 
\end{equation}
where $\mathcal{X}$, the problem's search domain, is a subset of $\mathbb{R}^d$ and $f:\mathcal{X}\to\mathbb{R}$ is the problem's objective function. 
In this paper, we focus on the very standard case where $f$ is innately non-convex, and it has Lipschitz values, gradients and Hessian; i.e., it is Lipschitz, smooth and Hessian-Lipschitz.


While finding a global minimum of a general non-convex function is known to be \NP-hard \cite{murty1987some}, researchers have traditionally focused on computing first-order stationary points (FOSPs), where the gradient (or proximal gradient for constrained domains) vanishes. Standard first-order methods, such as Gradient Descent, are remarkably effective at finding these points efficiently. However, a significant drawback of FOSPs is that they include not only local minima but also strict saddle points, which can significantly degrade the performance of machine learning models and crucially, in many applications of interest, the number of saddle points can significantly outnumber the number of local minima \cite{dauphin2014identifying, choromanska2015loss}. 

To address this, recent attention has shifted towards finding \textit{second-order stationary points} (SOSPs)\textemdash points that satisfy first-order optimality and have a positive semi-definite Hessian (also known as \textit{not strict saddle} points). In unconstrained domains, the computational landscape of FOSPs and SOSPs has recently been clarified \cite{hollender2023computational,kontogiannis2024computational}: finding an approximate FOSP, or even SOSP, is complete for the class \PLS (Polynomial Local Search) \cite{johnson1988easy}\textemdash
a class which captures problems of optimizing a given objective function through a sequence of incremental, locally improving steps.
This suggests that, at least in the unconstrained case, second-order stationarity is not fundamentally "harder" to achieve than first-order stationarity. 

However, the situation is far more complex in constrained optimization. While finding a FOSP (or KKT point) in a bounded domain is known to be complete for the class \CLS (Continuous Local Search) \cite{fearnley2022complexity}\textemdash a subclass of \PLS that characterizes problems solvable by continuous, gradient-based methods\textemdash finding a SOSP remains poorly understood and has been highlighted as an important open question by previous recent work \cite{hollender2023computational,kontogiannis2024computational}. 
To the best of our knowledge, the only relevant result of which we are aware showed that, for certain strict notions of stationarity, even checking whether a point is an approximate SOSP in the presence of linear constraints is \NP-hard \cite{murty1987some,nouiehed2018convergence,mokhtari2018escaping}.
Under another well-established, relaxed definition of $\epsilon$-approximate second-order stationary points\textemdash where non-negative curvature is required only along the null space of the active constraints\textemdash the problem lies in \TFNP \cite{megiddo1991total}\footnote{It is not hard to see that verifying that a point is an $\epsilon$-approximate SOSP according to the relaxed notion can be done in polynomial time, and SOSPs always exist in constrained optimization.}, and algorithms, such as SNAP \cite{snap2020}, have been proposed that run in time $\mathcal{O}(\text{poly}(1/\epsilon))$.
Therefore, we are naturally led to the following important question:

\begin{quote}
    \textbf{Question 1:} \textit{What is the computational complexity of finding SOSPs in constrained optimization?}
\end{quote}


Furthermore, admittedly, a critical property of the most successful optimization algorithms, such as Gradient Descent or Newton's method, is their continuity. In particular, such algorithms are iterative of the form $x_{t+1} = \mathcal{A}lg(x_t)$ where $\mathcal{A}lg(\cdot)$ is a continuous function that can be evaluated via polynomial-time Turing machines. As mentioned earlier, the class \CLS \cite{daskalakis2011continuous} was specifically designed to capture the complexity of search problems where the solution can be found via a continuous path of local improvements. 
This naturally raises our second  important question:
\begin{quote}
\textbf{Question 2: }\textit{Does there exist an efficient, iterative algorithm with a continuous update rule for finding $\epsilon$-approximate SOSPs?}
\end{quote}

\subsection{Our contribution and its significance}


In this paper, we resolve the aforementioned questions by proving that the problem of finding an $\epsilon$-approximate SOSP (as established in \cite{snap2020}) over polytope constraints (which we denote by \CSOSP; see Definition \ref{def: csosp}) is \PLS-complete:


\begin{theorem*}
    \CSOSP is hard for \PLS and also lies in \PLS.
\end{theorem*}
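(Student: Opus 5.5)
The proof has two parts: membership in \PLS and \PLS-hardness. I treat membership first because it is the more routine of the two.

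\textbf{Membership.} I would reduce \CSOSP to \textsc{Real-LocalOpt}, which is \PLS-complete \cite{daskalakis2011continuous}. The potential is $f$ itself on the polytope. The neighbourhood map $N(x)$ is a step of a second-order projected method of the kind used in SNAP \cite{snap2020}. Given $x$, it first tries a projected gradient step of size $1/L$. If the gradient mapping at $x$ is small, it computes the active set and the projection $P$ onto the null space of the active constraints. It then looks for an eigenvector $v$ of $P\nabla^2 f(x) P$ with eigenvalue below $-\sqrt{\rho\epsilon}$, and moves along $\pm\eta v$, truncating the step at the first constraint that becomes active. The map $N$ is polynomial-time computable; it does not need to be continuous, and this is exactly the point where \PLS differs from \CLS.

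By the descent lemma and Hessian-Lipschitzness, whenever $x$ is not an $\epsilon$-SOSP we get
\[
f(N(x)) \le f(x) - \mathrm{poly}(\epsilon).
\]
So an $\epsilon'$-approximate local optimum of $(f,N)$ is an $\epsilon$-SOSP. The subtle case is a truncated step along $v$ that lands on a new face before making enough progress. I would handle it as in SNAP: either the step is long enough to decrease $f$ anyway, or the active set strictly grows. The active set can grow at most $d$ times, so I would compose $N$ with at most $d$ sub-steps to get a guaranteed decrease per call.

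\textbf{Hardness.} I would reduce from the unconstrained \PLS-complete SOSP/FOSP problem of \cite{kontogiannis2024computational,hollender2023computational}, or more concretely from \textsc{Iter}/grid local search in 2D as in those works. The target instance lives on $[0,1]^2$.

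1. The starting point is the Hollender--Zampetakis style construction: a 2D grid of cells carrying a \PLS-hard potential. Within each cell one interpolates smoothly, using bicubic or higher-order smooth bump interpolation, so that every approximate stationary point lies near a grid solution.

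2. In the unconstrained setting, boundary effects are avoided because the domain is unbounded. Here I must instead prevent spurious approximate SOSPs on $\partial[0,1]^2$. Under the relaxed definition, at an edge only the tangential curvature is tested, and at a corner nothing beyond KKT is tested. The key gadget is therefore a frame: near the boundary, $f$ must strictly decrease inward, with an outward-pointing gradient component of size $\Omega(1)$. That prevents KKT points on the boundary. Meanwhile the interior embedding must keep all stationary points isolated and $\Omega(1)$ away from the boundary, which delivers the promised version of the statement.

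3. For the interior, I would argue that $\epsilon$-FOSPs that are not on a designated solution cell are strict saddles. Their Hessian has an eigenvalue $\le -\Omega(1)$ in the scaled instance, obtained by adding a negative-curvature "ridge" along the path directions. This is the same idea as in \cite{kontogiannis2024computational}, where saddles are created deliberately so that only genuine local-search solutions yield SOSPs.

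\textbf{Main obstacle.} The hardest step is reconciling the boundary frame with the \PLS-hard potential. The potential values from the instance are arbitrary, so the inward-decreasing frame must dominate them near the boundary without creating new interior critical points where frame and core meet. My first attempt would be to rescale the core potential to range in $[0,\delta]$ and embed it in the disk of radius $1/4$ around the centre. I would add a radial bowl term $c\,\|x-(1/2,1/2)\|^2$ that is active only outside radius $1/4$, glued with a smooth partition of unity. I would then check that in the gluing annulus the radial derivative of the bowl exceeds the Lipschitz bound of the scaled core. Finally, I would verify that the Lipschitz, smoothness and Hessian-Lipschitz constants stay polynomial.
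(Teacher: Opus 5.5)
Your membership half is a workable variant of the paper's argument, but the hardness half has a structural gap. Read literally, your steps 1--2 plus the gluing check would give a function on $[0,1]^2$ with:
\begin{itemize}
\item no approximate KKT points on the boundary, because of the frame;
\item none in the gluing annulus, because the bowl dominates;
\item none in the core except near \ITER solutions, by the Hollender--Zampetakis property you invoke in step 1.
\end{itemize}
That would make the constrained \emph{first-order} problem on a compact domain \PLS-hard. But that problem lies in \CLS \cite{fearnley2022complexity}, so your plan would prove \PLS $\subseteq$ \CLS. Hence, unless \PLS $=$ \CLS, KKT points away from \ITER solutions are unavoidable, and your dominance check must fail somewhere.

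You can see where it fails. At the radius where the cutoff turns the bowl on, the cutoff and its derivative vanish (you need $C^2$ gluing). So near that circle the bowl contributes almost nothing to $\nabla f$, and the field is the core's own. Avoiding critical points there would require the core's $-\nabla f$ to point inward all around a closed curve, and then the core by itself would already be a \PLS-hard KKT instance. The obstacle you single out therefore cannot be removed by eliminating critical points; it has to be handled by controlling their type.

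That is exactly your step 3, which the proposal leaves as a slogan, and the attribution is wrong. In the unconstrained setting FOSP-hardness already yields SOSP-hardness, so \cite{kontogiannis2024computational} never engineered saddles. The paper stresses that it is the first to analyse second-order stationarity inside interpolation cells. You need a construction in which you know where the forced KKT points sit and can prove $\lambda_{\min}(\nabla^2 f)\le-\Omega(1)$ at each of them. The paper achieves this with three ingredients:
\begin{itemize}
\item It gives the default (red) background a \emph{higher} value than the blue columns. The flow is then pushed off the orange/green frame and recycled through a red column into a black corridor that feeds the columns.
\item It prescribes $f_{xx}=f_{yy}=-1/2$ at every grid corner in the biquintic interpolation, which makes the curvature negative near cell edges.
\item It classifies the colour-junction cells into new Groups A--G. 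Groups A--E do contain FOSPs, but explicit, computer-verified polynomial bounds show that at every point of these cells either $|\partial_x f|>\epsilon_0$, or $|\partial_y f|>\epsilon_0$, or $\lambda_{\min}<-\epsilon_0$. These bounds use subregions that shrink toward the cell edges at rates set by the colour gaps. Groups F--G contain no FOSPs at all.
\end{itemize}
A ``negative-curvature ridge'' would need the same cell-by-cell second-order verification. Since adding it changes the critical set, it cannot simply be bolted onto an existing FOSP construction.

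On membership, your route differs from the paper's. You use \textsc{Real-LocalOpt} with potential $f$ and $d$ composed SNAP sub-steps. The paper uses a single SNAP step, an active-set-preserving rounding (MapToGrid), and the potential $f+\frac{\epsilon^4}{100\,d\,L_{\max}^2}\dim\operatorname{Null}(A'(x))$. Two points in your version need repair:
\begin{itemize}
\item Composing $d$ polynomial-time maps can blow up bit length exponentially in $d$. You must therefore round between sub-steps, and the rounding must not deactivate constraints.
\item A truncated step may land, with no decrease, on a face where the point is already an SOSP. Then $x$ is an approximate local optimum without being an SOSP, so your solution map must return the SOSP found along the trajectory rather than $x$.
\end{itemize}
The paper's dimension term in the potential sidesteps both issues.
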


Our (hardness) result continues to hold even when the domain is as simple as the 2-dimensional unit square box $[0,1]^2$, and it applies to the standard “white box” model, where functions are represented as polynomial-time Turing machines.
Remarkably, to the best of our knowledge, the problem of finding approximate SOSPs is the first problem defined in a compact domain to be shown complete for \PLS\textemdash apart from the canonical \PLS-complete problem \textsc{Real-LocalOpt} \cite{daskalakis2011continuous,hollender2023computational}.

In addition, we prove that the hardness continues to hold even when we are promised that all stationary points of the objective function lie at a distance $\Omega(1)$ away from the boundary of the domain!
This stronger result implies that the inherent "discreteness" of \PLS-hard problems fundamentally clashes with the requirement of algorithmic continuity, showing that any iterative, deterministic algorithm, which can be implemented efficiently via Turing machines, for finding SOSPs in the constrained setting must necessarily be discontinuous.
That said, we also prove the following theorem:

\begin{theorem*}[Abridged; Formally stated in Theorem \ref{thm:main}]
    Unless \PLS $\subseteq$ \PPAD (or equivalently \PLS $=$ \CLS), there exists no iterative algorithm with a continuous, polynomial-time implementable update rule for finding $\epsilon$-approximate SOSPs, even if the objective function is promised to be 1-Lipschitz, 1-smooth and 1-Hessian-Lipschitz and the domain $\mathcal{X}$ is fixed to be the unit square box $[0,1]^2$.
\end{theorem*}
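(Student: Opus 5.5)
We prove the abridged theorem by contradiction: a continuous, efficiently computable update rule would put \CSOSP in \PPAD, while the main theorem says \CSOSP is \PLS-hard on $[0,1]^2$ even under the interior-isolation promise. Together these give $\PLS\subseteq\PPAD$.

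\textbf{Step 1 (hard instances).} From the \PLS-hardness reduction we keep the instances whose domain is $\mathcal{X}=[0,1]^2$. On these, $f$ is $1$-Lipschitz, $1$-smooth and $1$-Hessian-Lipschitz, after rescaling $f$ and $\epsilon$ polynomially. The promise is that every (approximate) first-order stationary point lies at distance $\Omega(1)$ from $\partial\mathcal{X}$. Near such points the active set is empty, so an $\epsilon$-SOSP is an interior point with $\|\nabla f(x)\|\le\epsilon$ and $\lambda_{\min}(\nabla^2 f(x))\ge-\sqrt{\epsilon}$.

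\textbf{Step 2 (what the algorithm is).} We formalize an algorithm as a map $\mathcal{A}lg:\mathcal{X}\to\mathcal{X}$ with three properties:
\begin{itemize}
\item[(i)] it is continuous, and in fact $L$-Lipschitz with $L$ polynomially bounded, so it fits the \PPAD model of arithmetic circuits or Turing machines;
\item[(ii)] it is computable in polynomial time given white-box access to $f$;
\item[(iii)] any approximate fixed point $x$, meaning $\|\mathcal{A}lg(x)-x\|\le\delta$ with $\delta=1/\mathrm{poly}$, is an $\epsilon$-approximate SOSP.
\end{itemize}
Property (iii) is the natural requirement for iterative methods: the iteration stops where it stalls, and a correct algorithm only stalls at solutions.

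\textbf{Step 3 (placing the problem in \PPAD).} Finding a $\delta$-approximate fixed point of a polynomially Lipschitz, polynomial-time computable map from $[0,1]^2$ to itself is in \PPAD, by Brouwer and the standard \textsc{Brouwer}/\textsc{End-of-Line} reductions. Feeding the hard instances of Step 1 into $\mathcal{A}lg$ yields a polynomial-time reduction from a \PLS-hard problem to a \PPAD problem, so $\PLS\subseteq\PPAD$. We then use $\PLS\cap\PPAD=\CLS$ \cite{fearnley2022complexity} to restate this as $\PLS=\CLS$.

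\textbf{Main obstacle: the role of the promise.} The isolation-from-the-boundary promise is what rules out cheating through the constraints. Without it, an update rule could be continuous yet map to projected-gradient fixed points, that is KKT points, on the boundary. Those are found in \CLS, and the \CLS-completeness of KKT \cite{fearnley2022complexity} shows why this must be excluded.

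So we must check two things. First, in the hard instances there are no boundary KKT points that are not $\epsilon$-SOSPs; more precisely, every $\delta$-fixed point of the algorithm is a genuine interior $\epsilon$-SOSP. Second, the hardness embedding keeps $f$ with bounded derivatives near the boundary, for example pointing outward along the boundary, so no spurious KKT points appear there. We would verify this directly from the construction: $f$ increases toward $\partial[0,1]^2$ on an $\Omega(1)$ collar, so the boundary contains no stationary points.

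The remaining risk is the choice of parameters. The polynomial relations among $\epsilon$, $\delta$ and $L$ must be set so that Step 3's \PPAD membership applies with polynomially bounded precision.
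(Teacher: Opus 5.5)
Your proposal is correct and follows the paper's proof: the hypothesized Lipschitz, efficiently computable update rule, run on the \PLS-hard \ISOSP instances over $[0,1]^2$, defines a \textsc{Brouwer} instance whose approximate fixed points are SOSPs by property (iii), which gives \PLS $\subseteq$ \PPAD (the paper pads the instance to $[0,1]^3$ with a dummy coordinate $z$, only to match its three-dimensional definition of \textsc{Brouwer}). Your ``main obstacle'' discussion is unnecessary and its reasoning is off: property (iii) alone already excludes non-SOSP KKT points on the boundary as fixed points, and a boundary SOSP would be an accepted solution anyway (of \CSOSP directly, and of \ISOSP via its violation clause), so the boundary-isolation promise plays no logical role in the contradiction and nothing about a collar near $\partial[0,1]^2$ needs to be checked.
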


Furthermore, our result implies that finding approximate SOSPs in linearly constrained optimization is as hard as in the unconstrained case, the latter of which was recently shown by Kontogiannis et al. \cite{kontogiannis2024computational} to be \PLS-complete. 
This stands in contrast to the complexity of the problem of finding approximate FOSPs, where the unconstrained setting is provably harder than its linearly constrained counterpart;
specifically, while the problem of finding approximate FOSPs is \CLS-complete in linearly constrained optimization \cite{fearnley2022complexity}, it is \PLS-complete in the unconstrained setting \cite{hollender2023computational}.

\begin{remark}
    \textit{From a high-level perspective, our hardness result distinguishes itself from that of Hollender and Zampetakis \cite{hollender2023computational} as follows: the unconstrained FOSP problem lacks compactness yet is solvable via standard efficient continuous algorithms (e.g., Gradient Descent or Newton's method). In contrast, the constrained SOSP problem is defined on a compact domain but admits no efficient continuous algorithms.}
\end{remark}







\subsection{Further related work}

\paragraph{Further relevant results in the white box model.} In the broader context of optimization complexity, 
Daskalakis et al. \cite{daskalakis2021complexity} showed that the problem of computing first-order stationary points in linearly constrained min-max optimization under coupled domains with nonconvex-nonconcave objectives is \PPAD-complete. 
This result was later improved to hold even for degree 2 polynomials, and for general constraints even with convex-concave objectives \cite{bernasconi2024role}.
Recently, Anagnostides et al. \cite{anagnostides2025complexity} showed that computing symmetric stationary points in symmetric min-max optimization is \textsf{PPAD}-complete, even for quadratic functions.
Even more recently, Bernasconi and Castiglioni \cite{bernasconi2026complexity} remarkably showed that computing first-order stationary points in linearly constrained min-max optimization with nonconvex-nonconcave objectives remains \PPAD-hard even under product domains.
Moreover, it was recently shown that the problem of finding a KKT point of a quadratic polynomial over box constraints is \CLS-complete \cite{fearnley2025complexity}. 
Regarding higher-order optimization, the problem of finding fourth-order stationary points has been shown to be \NP-hard \cite{anima}.
Last but not least, other possibly relevant results include that the general Tarski problem of finding some fixed point, when the monotone function is given succinctly, lies in \CLS \cite{etessami2020tarski} (see also the relevant results of \cite{fearnley2017cls, fearnley2020unique}), as well as that the problem of finding a fixed point of a monotone and contracting function lies in the complexity class \UEOPL \cite{batziou2025monotone} (see also \cite{chen2026quadratic}).

\paragraph{Finding approximate SOSPs in the black box model.}

Numerous efficient approaches have been proposed to find approximate SOSPs in the unconstrained setting. 
More specifically, Nesterov and Polyak \cite{nesterov2006cubic} proposed a deterministic algorithm that can find approximate SOSPs via cubic regularization.
The authors also provide an efficient implementation of the algorithm via a reduction to a 1-dimensional convex problem.
Moreover, the methods proposed in \cite{carmon2018accelerated,curtis2019exploiting} compute approximate SOSPs in time $\mathcal{O}(\poly(1/\epsilon))$ by leveraging approximate negative curvature directions of the function (e.g., via the Lanczos method \cite{lanczos1950iteration} or the method proposed in \cite{sidford2016faster}). 
Concurrently, there is a significant line of work focusing on stochastic approaches or random initialization in order to provide guarantees for avoiding strict saddle points (e.g., see \cite{tripuraneni2018stochastic,jin2017escape,PP17,DJLJ+17,LSJR16,lee2019first,MHKC20}). 
However, such approaches are outside of the scope of our work, as we study complexity classes within \TFNP which captures only deterministic settings.

The constrained optimization setting is more complicated and less explored.  
There are two main well-established $\epsilon$-approximate SOSP definitions. 
A first definition was shown to be computationally intractable, where even checking is \NP-hard \cite{murty1987some, nouiehed2018convergence,mokhtari2018escaping}.
In this paper, we consider the definition for approximate SOSPs introduced in \cite{snap2020}, where checking can be performed in polynomial time\textemdash and thus the white-box version of the problem lies in \TFNP.  
Lu et al. \cite{snap2020} also proposed algorithms, such as SNAP, which find $\epsilon$-approximate SOSPs under polytope constraints in time $\mathcal{O}(\poly(1/\epsilon))$.

\section{Overview}

\subsection{Preliminaries}

\subsubsection{Basic notions and canonical problems from complexity theory}

In this paper, we investigate optimization problems that are guaranteed to have a solution, a property inherited directly from the fact that \eqref{eq:min} always admits one.
Therefore, such problems belong to the total search problems that are captured by subclasses of the class \TFNP \cite{megiddo1991total}. In particular, \TFNP is a subclass of the \FNP class, which contains all search problems where all solutions have size polynomial in the size of the instance and any solution can be checked in polynomial time.
That said, \FNP contains search problems whose decision version lies in \NP. 

The main focus of this paper revolves around the complexity class \PLS (Polynomial Local Search), which was defined in \cite{johnson1988easy}.
In particular, \PLS is defined as the set of all \TFNP problems that formalize standard local search algorithms, which seek to optimize a given objective function through a sequence of incremental, locally improving steps. To prove our hardness result, we will use the canonical \PLS-complete problem \ITER \cite{morioka2001classification}, which has been recently used to establish hardness for FOSP and KKT points \cite{hollender2023computational,fearnley2022complexity,kontogiannis2024computational}.
\ITER is defined as follows.

\begin{tcolorbox}[colback=white!10, colframe=black!80!black, arc=2mm, boxrule=1.5pt] \label{iter-informal}
\begin{definition} \textsc{Iter} (\textit{informal)}:\\
\textbf{Input:} A function $C: [2^n] \to [2^n]$ with $C(1)>1$.

\textbf{Goal:} Find $v \in [2^n]$ such that either
\begin{itemize}
    \item $C(v) < v$, or
    \item $C(v) > v$ and $C(C(v)) = C(v)$.
\end{itemize}
\end{definition}
\end{tcolorbox}

In the \ITER\ problem, the nodes in $[2^n]$ are arranged sequentially on a line from left to right. The objective is to identify either a node $v$ that $C$ maps to the left, or a node $v$ mapped to the right whose image is a fixed point (meaning $C(C(v)) = C(v)$). Given that $C(1) > 1$, it is straightforward to see that \ITER\ is a total problem, as every instance is guaranteed to contain at least one valid solution. Fig. \ref{fig:iter} illustrates an example instance of the \ITER problem.

\begin{figure}
    \centering
    \includegraphics[width=0.5\linewidth]{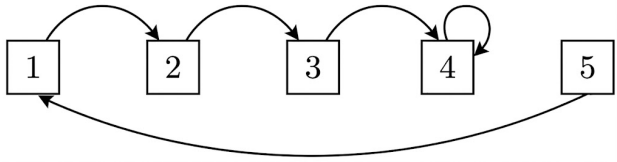}
    \caption{A diagram of an ITER instance with 5 sequentially arranged nodes, labeled 1 to 5 from left to right. The nodes are represented by distinct squares, and curved black arrows indicate the mapping function $C$. Node 3 is a solution because it maps to the fixed point at node 4 ($C(3) = 4$), which means $C(C(3)) = C(4) = 4$. This is a mapping to a right fixed point. Node 5 is a solution because $C(5) = 1$ is a mapping to the left, and $1 < 5$.}
    \label{fig:iter}
\end{figure}

In this paper, we will also make use of the canonical \PLS-complete problem \textsc{LocalOpt} \cite{fearnley2022complexity}, which is defined as follows:

\begin{tcolorbox}[colback=white!10, colframe=black!80!black, arc=2mm, boxrule=1.5pt] \label{localopt-informal}
\begin{definition} \textsc{LocalOpt} (\textit{informal)}:\\
\textbf{Input:} Functions $p,g: [2^n] \to [2^n]$. ($p$ is the potential function and $g$ is the neighbor function)

\textbf{Goal:} Find $v \in [2^n]$ such that $p(g(v)) \geq p(v)$.
\end{definition}
\end{tcolorbox}



\subsubsection{The optimization problems of interest}
We consider the following class of non-convex optimization problems with polytope constraints
\begin{equation}\label{eq.pro}
\min_{x \in \mathcal{X}} f(x), \quad \text{where } \mathcal{X} := \{x \in \mathbb{R}^d \mid Ax \le b\}
\end{equation}
where $f:\mathbb{R}^{d}\rightarrow\mathbb{R}$ is twice differentiable (possibly non-convex) $L$-Lipschitz continuous, $L_1$-smooth and $L_2$-Hessian-Lipschitz (see Section \ref{background: non-convex} for a detailed definition) and 
$\mathcal{X}$ is a polytope characterized by a system of $m$ linear inequalities, represented by the matrix $A \in \mathbb{R}^{m \times d}$ and the vector $b \in \mathbb{R}^m$.

Let $\mathcal{A}(x) = \{j \in [m] \mid A_j x = b_j\}$ denote the set of active constraints at a given point $x \in \mathcal{X}$, where $A_j$ is the $j$th row of $A$ and $b_j$ is the $j$th entry of $b$. We define $A'(x) \in \mathbb{R}^{|\mathcal{A}(x)| \times d}$ as the submatrix of $A$ consisting of the rows indexed by $\mathcal{A}(x)$. Similarly, let $b'(x) \in \mathbb{R}^{|\mathcal{A}(x)|}$ be the subvector of $b$ containing the corresponding entries, such that $A'(x)x = b'(x)$.

In order to define our constrained SOSP problem, we need to specify the stopping criteria by which a favorable algorithm identifies a SOSP.
In contrast to unconstrained optimization, where it is straightforward how to define second-order stationarity (e.g., see \cite{kontogiannis2024computational}), the situation in constrained optimization is significantly more complicated and different SOSP definitions have been proposed. 
In this paper, we consider a well-established definition of approximate SOSP (see Definition \ref{def:approx_sosp1}) introduced in \cite{snap2020}, where checking can be performed in polynomial time and search in time $\mathcal{O}(\poly(1/\epsilon))$. This comes in contrast to a previous computationally intractable definition of SOSP, where even checking is \NP-hard \cite{nouiehed2018convergence}. For a more detailed discussion on the two definitions, we prompt the interested reader to look at Section \ref{background: non-convex}.


More formally, to measure the first-order optimality of a given  point in a constrained domain, we define the  {\it proximal gradient} (also known as the \textit{proximal mapping}) as follows:
\begin{equation}\label{eq:projection}
g_\pi(x):=L_1 \cdot \left(\pi_{\mathcal{X}}\left(x-\frac{1}{L_1}\nabla f(x)\right)-x\right),\;\; \mbox{with}\;\;\pi_{\mathcal{X}}(v) :=  \arg\min_{w\in \mathcal{X}}\|w -v\|^2
\end{equation}
The proximal gradient norm $\| g_\pi(x)\|$ can be used to define the first-order optimality gap for a given point $x\in\mathcal{X}$. 
Moreover, the minimal eigenvalue of the Hessian along feasible directions is used as the second-order optimality criterion. 
We define the \CSOSP problem as follows.


\medskip

\begin{tcolorbox}[colback=white!10, colframe=black!80!black, arc=2mm, boxrule=1.5pt]
\begin{definition} \CSOSP (\textit{informal)}: \\ \label{CSOSP_black_box}
\textbf{Input:}
\begin{itemize}
    \item precision parameters $\epsilon_G,\epsilon_H>0$, 
    \item $A \in \mathbb{R}^{m\times d}$, $B \in \mathbb{R}^m$ defining a bounded non-empty domain $\mathcal{X} = \{x\in \mathbb{R}^d : Ax\le B\}$
    \item a function $f:\mathcal{X}\rightarrow\mathbb{R}$ which is $L$-Lipschitz, $L_1$-smooth and $L_2$-Hessian-Lipschitz
\end{itemize}
\textbf{Goal:} Find a point $x^*\in\mathcal{X}$ such that:
\begin{subequations}\label{eq.cond1:exact}
\begin{align}
&\bullet \quad \|g_{\pi}(x^*)\|\le\epsilon_G, & \textrm{(approx. first-order condition)}\label{eq.cond11_informal}
\\
&\bullet \quad y^{T}\nabla^2 f(x^*)y\ge-\epsilon_H,\quad \forall  y~\mbox{s.t.}~~A'(x^*)y=0 \text{ and } \|y\|=1 &  \textrm{(approx. second-order condition)}\label{eq.cond12_informal}
\end{align}
\end{subequations}
\end{definition}
\end{tcolorbox}

In other words, a point $x^* \in \mathcal{X}$ is a solution of the \CSOSP problem if the proximal gradient is small and also the curvature along directions defined by the active constraints is approximately non-negative.
Notably, if a point $x$ lies entirely in the interior of $\mathcal{X}$, then condition \eqref{eq.cond12_informal} becomes the unconstrained second-order condition used in \cite{kontogiannis2024computational}.

The other problem of our interest is the \ISOSP problem, which is defined as follows.

\begin{tcolorbox}[colback=white!10, colframe=black!80!black, arc=2mm, boxrule=1.5pt]
\begin{definition} \ISOSP (\textit{informal)}: \\
\textbf{Input:}
\begin{itemize}
    \item precision parameters $\epsilon_G,\epsilon_H>0$, 
    \item $A \in \mathbb{R}^{m\times d}$, $B \in \mathbb{R}^m$ defining a bounded non-empty domain $\mathcal{X} = \{x\in \mathbb{R}^d : Ax\le B\}$
    \item a function $f:\mathcal{X}\rightarrow\mathbb{R}$ which is $L$-Lipschitz, $L_1$-smooth and $L_2$-Hessian-Lipschitz, which is promised to attain an approximate SOSP only in the interior of $\mathcal{X}$
\end{itemize}
\textbf{Goal:} Find a point $x^*\in\mathcal{X}$ such that:
\begin{subequations}\label{eq.cond1:exact}
\begin{align}
&\bullet \quad \|g_{\pi}(x^*)\|\le\epsilon_G, & \textrm{(approx. first-order condition)}\label{eq.cond11_interior}
\\
&\bullet \quad y^{T}\nabla^2 f(x^*)y\ge-\epsilon_H,\quad \forall~ y \in \mathbb{R}^d ~\mbox{s.t.}~~ \|y\|=1 &  \textrm{(approx. second-order condition)}\label{eq.cond12_interior}
\end{align}
\end{subequations}
\end{definition}
\end{tcolorbox}

\ISOSP is a promised version of \CSOSP in the sense that every solution of \ISOSP, $x^*$, is promised to lie strictly in the interior of $\mathcal{X}$. In other words, every solution $x^*$ is effectively an unconstrained approximate SOSP. 
It is easy to see that this problem is easier than the \CSOSP as every solution of \ISOSP is also a solution of the corresponding \CSOSP instance. 

\paragraph{White box model.}  We evaluate the above computational problems using the so-called ``white box'' model, meaning the input explicitly includes the computational mechanisms for evaluating $f$, $\nabla f$ and $\nabla^2f$. Specifically, we evaluate $f$, $\nabla f$ and $\nabla^2f$ via polynomial-time Turing machines. 

\paragraph{Total vs Promise Versions.} When provided with polynomial-time Turing machines for $f$, $\nabla f$ and $\nabla^2 f$, verifying that $\nabla f$ and $\nabla^2 f$ are truly the gradient and the Hessian of $f$, and that all satisfy the Lipschitz continuity property, is computationally non-trivial. Following the lines of \cite{fearnley2022complexity}, we can either study the promise version, assuming all input instances inherently satisfy these requirements, or adopt the ``violation'' framework proposed by \cite{daskalakis2011continuous}, which accepts a proof of a violated condition as a valid solution. While the promise version is intuitively simpler, the violation approach guarantees that the problem belongs to the \TFNP class. Consequently, we will rely on the total version for our formal definitions in the sequel of the paper. Nevertheless, because our reductions are ``promise-preserving'', the established hardness results extend seamlessly to the promise versions of these problems.

\subsection{Overview of our results}

The main technical contribution of this paper is Theorem \ref{thm: interior_sosp}, which shows that the \ISOSP problem is \PLS-hard, even when the domain is the unit square $[0,1]^2$. 
Furthermore, we prove that \CSOSP lies in \PLS (Theorem \ref{thm: membership}) by reducing the problem to \LOCALOPT. 
We note that our results also hold for the promise version of the problem, as the hard instances we construct inherently satisfy the required promises. 
Before delving into the main ideas behind our main result in the subsequent section, we first briefly outline the main results of our paper.

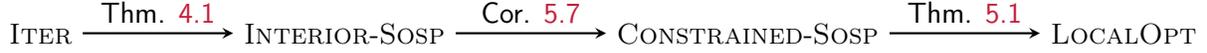
\begin{figure}[t]
    \centering
    \begin{tikzpicture}[
        node distance=2cm, 
        >=stealth,           
        thick,               
        font=\sffamily       
    ]
        
        \node (iter) {\ITER};
        \node (interior) [right=of iter] {\ISOSP};
        \node (constrained) [right=of interior] {\CSOSP};
        \node (localopt) [right=of constrained] {\LOCALOPT};
        
        \draw[->] (iter) -- node[above] {Thm. \ref{thm: interior_sosp}} (interior);
        \draw[->] (interior) -- node[above] {Cor. \ref{cor:pls-complete}} (constrained);
        \draw[->] (constrained) -- node[above] {Thm. \ref{thm: membership}} (localopt);
        
    \end{tikzpicture}
    \caption{Our reductions.}
    \label{fig: reductions}
\end{figure}

Following the line of our reductions illustrated in Figure \ref{fig: reductions}, we prove our main theorem:

\begin{theorem*}[\PLS-completeness]\label{cor:pls-complete}
    \CSOSP is \PLS-complete, even when the domain is the unit square box $[0,1]^2$, and also even if one considers the promise-version of the problem, i.e., only instances without violations.
\end{theorem*}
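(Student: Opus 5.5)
The corollary follows by composing the reductions in Figure~\ref{fig: reductions}. No new construction is needed; the work is to check that each arrow preserves the domain $[0,1]^2$ and the promise (violation-free) property.

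\emph{Hardness.} I start from Theorem~\ref{thm: interior_sosp}, which reduces \ITER to \ISOSP with domain $[0,1]^2$. The \ISOSP instance it produces is in particular a \CSOSP instance: same $\epsilon_G,\epsilon_H$, same $A,B$ encoding $[0,1]^2$, and the same Turing machines for $f,\nabla f,\nabla^2 f$. So I must show that every \CSOSP solution of this instance maps back to an \ITER solution. There are two kinds of solution.

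\begin{itemize}
\item \emph{Violations.} These are literally the same objects in both problems, since the Lipschitz, gradient and Hessian consistency checks are identical. They are handled by the map already given in Theorem~\ref{thm: interior_sosp}.
\item \emph{Genuine $\epsilon$-SOSPs.} Here I use the promise of \ISOSP: the function attains approximate SOSPs only in the interior, indeed at distance $\Omega(1)$ from the boundary. At such an interior point $x^*$ we have $\mathcal{A}(x^*)=\emptyset$, so condition \eqref{eq.cond12_informal} ranges over all unit $y$ and coincides with \eqref{eq.cond12_interior}.
\end{itemize}

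For interior points I also need $g_\pi(x^*)=-\nabla f(x^*)$. This holds when $x^*-\nabla f(x^*)/L_1$ stays in $\mathcal{X}$, which is true at distance $\Omega(1)$ from the boundary provided $\|\nabla f\|/L_1$ is small there. If it does not hold automatically, I rely on the reduction guaranteeing that every $x$ satisfying \eqref{eq.cond11_informal} and \eqref{eq.cond12_informal} is interior. Either way, the \CSOSP solution is an \ISOSP solution, and Theorem~\ref{thm: interior_sosp} maps it back to \ITER.

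\emph{Main obstacle.} The delicate point is boundary points. A point on $\partial[0,1]^2$ can satisfy the \CSOSP conditions even when it is not an unconstrained SOSP: the proximal gradient can vanish because the gradient points outward, and the curvature test only checks the null space of the active constraints. To exclude such points I would use the specific form of the construction. Near the boundary, $f$ should have gradient pointing inward, i.e. $-\nabla f$ points into the interior with magnitude bounded below. Then $\|g_\pi(x)\|>\epsilon_G$ at every boundary point, and no spurious solutions exist. This is exactly what the phrase ``promised to attain an approximate SOSP only in the interior'' in the \ISOSP definition certifies. If the theorem only gives the weaker statement about \ISOSP solutions, I would verify this inward-gradient property directly from the construction.

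\emph{Membership and promise version.} Membership in \PLS follows from Theorem~\ref{thm: membership} (\CSOSP $\le$ \LOCALOPT), together with \PLS being closed under reductions.

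For the promise version, note that the instances built by Theorem~\ref{thm: interior_sosp} are violation-free. \ITER is \PLS-hard already for its natural form, and the reduction is promise-preserving. Hence the promise version of \CSOSP on $[0,1]^2$ is \PLS-hard. It also lies in \PLS, since any promise instance is a total instance with no violation solutions.

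Combining hardness and membership gives \PLS-completeness.
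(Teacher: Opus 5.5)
Your proposal is correct and matches the paper's proof. The \PLS-hard instance built in Theorem~\ref{thm: interior_sosp} is reused verbatim as a \CSOSP instance on $[0,1]^2$; interior points have $\mathcal{A}(x)=\emptyset$, so the two solution notions coincide there. Membership, including for the promise version, then comes from Theorem~\ref{thm: membership}.

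The boundary issue you single out is exactly what Section~\ref{sec:boundary} and the ``approximate SOSP on the boundary'' violation clause of Definition~\ref{def:isosp} handle. Note that the paper relies on either an inward-perpendicular component or a large boundary-parallel component of $-\nabla f$, because at the corners the flow runs along an edge rather than strictly inward. Your separate worry about whether $g_\pi(x^*)=-\nabla f(x^*)$ is unnecessary, since \ISOSP already states its first-order condition through $\|g_\pi\|$.
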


To the best of our knowledge, our result yields the first problem defined on a compact domain to be shown complete for \PLS\textemdash beyond the canonical \PLS-complete problem \textsc{Real-LocalOpt} \cite{daskalakis2011continuous,fearnley2022complexity}.
In addition, our hardness result continues to hold even when we are promised that all stationary points of the objective function lie at a distance $\Omega(1)$ away from the boundary of the domain.
This stronger result implies that the inherent "discreteness" of \PLS-hard problems fundamentally clashes with the requirement of algorithmic continuity.

In particular, let $\mathcal{A}lg$ be an iterative algorithm of the form $x_{t+1} = \mathcal{A}lg(x_t)$ where $\mathcal{A}lg(\cdot)$ is a continuous function that can be evaluated via polynomial-time Turing machines (for a formal definition, see Definition \ref{continuous_alg}).
Our main result implies the following theorem:

\begin{theorem*}[There exists no continuous algorithm for finding SOSPs]
    Unless \PLS $\subseteq$ \PPAD (or equivalently \PLS$=$\CLS), there exists no continuous algorithm according to Definition \ref{continuous_alg} for finding $(\epsilon, \sqrt{\epsilon})$-SOSP (as defined in Definition \ref{def:approx_sosp1}), even if the objective function is promised to be 1-Lipschitz, 1-smooth and 1-Hessian-Lipschitz and the domain $\mathcal{X}$ is fixed to be the unit square $[0,1]^2$.
\end{theorem*}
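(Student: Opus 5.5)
The plan is a contradiction argument that routes through \PPAD via \textsc{Brouwer}. Suppose $\mathcal{A}lg$ is a continuous algorithm in the sense of Definition \ref{continuous_alg}. Its update map $x\mapsto \mathcal{A}lg(x)$ is continuous, polynomial-time computable (with a polynomially bounded modulus of continuity), and its fixed points, or approximate fixed points, are $(\epsilon,\sqrt{\epsilon})$-SOSPs. We can assume the map sends $[0,1]^2$ into itself; if needed we compose with the projection $\pi_{[0,1]^2}$, which is $1$-Lipschitz and preserves continuity.

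Given an instance $f$ of \ISOSP on $[0,1]^2$, I would define the \textsc{Brouwer} instance $F(x)=\pi_{[0,1]^2}(\mathcal{A}lg(x))$. Finding an approximate fixed point of $F$ lies in \PPAD. This step uses only the efficient computability and Lipschitz/continuity guarantees of $\mathcal{A}lg$, and invokes the known membership of approximate Brouwer in \PPAD. Composing with Theorem \ref{thm: interior_sosp} (\ITER $\le$ \ISOSP) would then place a \PLS-complete problem in \PPAD, so \PLS $\subseteq$ \PPAD. Since $\CLS=\PLS\cap\PPAD$, this gives $\PLS=\CLS$.

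The key step is showing that every approximate fixed point $x$ of $F$ is an $(\epsilon,\sqrt{\epsilon})$-SOSP of $f$, or at least that it can be converted to one in polynomial time. This is where the interior promise of Theorem \ref{thm: interior_sosp} enters: all stationary points lie at distance $\Omega(1)$ from $\partial[0,1]^2$. An approximate fixed point of $F$ might sit on the boundary only because of the projection. I would rule this out in two ways. First, the hard instances can be taken with the gradient pointing inward (or the function increasing toward the boundary) near $\partial[0,1]^2$, so no boundary point is an approximate KKT point. Second, Definition \ref{continuous_alg} presumably requires that approximate fixed points of $\mathcal{A}lg$ are solutions. Together these force any approximate fixed point to be interior, where $\pi$ acts as the identity, and hence to be a genuine \ISOSP solution.

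The main obstacle is matching the quantitative notions. The Brouwer approximation parameter must be chosen, via the modulus of continuity of $\mathcal{A}lg$, so that an $\delta$-approximate fixed point satisfies $\|\mathcal{A}lg(x)-x\|\le\delta$ small enough to certify $(\epsilon,\sqrt{\epsilon})$ stationarity. If the definition only guarantees exact fixed points are SOSPs, I would first round to a nearby exact fixed point, or use a robust version of the definition. I expect Definition \ref{continuous_alg} to be phrased so that this works. I would also check that the violation solutions of \textsc{Brouwer} (Lipschitz violations) map back to violations of the \ISOSP instance or of the algorithm's promises, so the reduction remains a valid total-problem reduction.
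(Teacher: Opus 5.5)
Your proposal is essentially the paper's proof. The paper also feeds the update map into \textsc{Brouwer} (padding with a dummy coordinate, $M(x,y,z)=(\mathcal{A}lg(f,\epsilon,x,y),z)$ on $[0,1]^3$, with $\gamma=c\epsilon$) and combines \PPAD-membership of \textsc{Brouwer} with the \PLS-hardness of \ISOSP from Theorem~\ref{thm: interior_sosp}. Your projection and boundary detour is unnecessary: Definition~\ref{continuous_alg} already requires $\mathcal{A}lg$ to map into $[0,1]^2$, guarantees that $c\epsilon$-approximate fixed points are $(\epsilon,\sqrt{\epsilon})$-SOSPs, and boundary SOSPs are accepted as \ISOSP solutions anyway. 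That is fortunate, because the inward-gradient argument you sketch would not by itself show that approximate fixed points of $\pi\circ\mathcal{A}lg$ are approximate fixed points of $\mathcal{A}lg$.
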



\subsection{Proof overview of Theorem \ref{thm: interior_sosp}}\label{sec: overview_hardness}

\begin{figure}[t]
    \centering
\includegraphics[scale=0.33]{figs/diagram.pdf}
    \caption{A high-level illustration of our construction: We have embedded an abstract \ITER instance within the square box $[0,N]^2$, which is the domain of function $f$.
    Each arrow shows the direction of the negative gradient flow.
    In this \ITER example, node 3 and 5 are solutions, and it holds that $C(1) = 2$, $C(2) = 5$, $C(3) = 4$, $C(4) = 4$ and $C(5) < 5$. The example can be similarly extended in an arbitrary way for all $2^n$ nodes of the generic \ITER problem.}
    \label{main: construction}
\end{figure}

Suppose that we are given an arbitrary \ITER instance.
Let $N={\Theta}(2^n)$, where $[2^n]$ denotes the domain of the values of the \ITER instance and $N$ is greater than a sufficiently large constant.
Our goal is to construct an \ISOSP instance for some function $f$ defined on the unit box $[0,N]^2$ and some constant $\epsilon_0$ sufficiently small, so that every solution of the \ISOSP instance gives a solution of the \ITER instance.
In doing so, we will construct a function $f$ that satisfies the input conditions of \ISOSP, while embedding an arbitrary \ITER instance into a 2D discrete grid represented within the box $[0,N]^2$, which is the domain of $f$. 
Our construction will ensure that every $(\epsilon_0, \epsilon_0)$-SOSP (i.e., a point satisfying Conditions \ref{eq.cond11_interior} and \ref{eq.cond12_interior}; see also Definition \ref{def:approx_sosp1}) of $f$ corresponds to a solution of the \ITER instance. 
To do so, we follow the logic of the hardness proofs for FOSPs and KKT points \cite{fearnley2022complexity,hollender2023computational,kontogiannis2024computational}, in the sense that we will construct $f$ by assigning colors to the corners of each grid cell (we refer to a grid cell as a \textit{small box}). Each color represents a different magnitude of $f$ as a function of $N$, and each corner has a specific direction for the gradient of $f$. 
To define $f$ in the interior of each small box and satisfy second-order smoothness over the whole domain, we interpolate the corner values of the small box via biquintic interpolation, similar to \cite{kontogiannis2024computational}. 
\paragraph{Technical novelty.}

Applying the above methodology to construct $f$ while ensuring that $f$ does not attain any approximate second-order stationary points is far from trivial. 
The main challenge here is that the techniques known thus far for characterizing stationarity in a small box (constructed using the above methodology) are based on the four groups of grid cells (denoted by Groups 1--4) introduced in \cite{fearnley2022complexity}. 
These four groups are only able to identify the non-existence of approximate FOSPs, so using these groups one could only isolate approximate FOSPs (which could be strict saddle points, or even local maxima) and locate them around the solutions of the \ITER instance (e.g., see \cite{hollender2023computational}).

In this paper, we introduce \textit{7 new groups} (see Section \ref{sec: groups}), five of which guarantee the non-existence of $(\epsilon_0, \epsilon_0)$-SOSP, and two of which guarantee the non-existence of $\epsilon_0$-approximate FOSPs. 
Based on these groups, we construct a novel reduction which isolates second-order stationary points and locates them around the solutions of \ITER. 
The key idea of our construction is to guide the (negative) gradient flow away from the boundary, towards locations that are candidate solutions of the \ITER instance.
In doing so, our construction uses  high magnitude values for the "background", and although this vector field creates FOSPs away from \ITER solutions, we show that such points are not approximate SOSPs but strict saddle points.
Furthermore, we demonstrate that our reduction forces all FOSPs to be strictly in the interior by ensuring none exist on the boundary, or even within a distance of $\Omega(\poly(N))$ from it\textemdash thus giving the certificate of \PLS-hardness for the \ISOSP problem.

\paragraph{Proof sketch.}

The high-level overview of our construction is illustrated in Figure \ref{main: construction}.
We split the grid $\G$ into $N\times N$ small boxes of size $1\times1$.
First, we define the function and the gradient on the $[0,N]^2$ grid (see  also Section \ref{sec: function-on-grid}) and then apply biquintic interpolation (see also Section \ref{sec: biquintic}) to each small box, so that that the resulting function is Lipschitz, smooth, and Hessian-Lipschitz.
The colors we have defined in increasing order of magnitude are blue, black, red, green and orange.
We present the detailed illustration of our construction in Figures \ref{fig: main_iter_embedding} and \ref{fig: main_remaining_areas}, with all different groups of small boxes appearing in our proof. 
Next, we describe the main regions of our construction:

\begin{itemize}[font=\bfseries]
    \item \textbf{The orange and green boundaries}. These regions define the boundary of the domain. The negative gradient flows from the boundary into the interior, ensuring than no FOSP is introduced in the orange and green regions. Notably, these regions can be extended in the same way up to distance $\Omega(\poly(N))$ away from the interior regions.
    \item \textbf{The blue columns}. Each blue column represents an \ITER node $k$ which is either a solution (i.e., $C(k)<k$ or, $C(k)>k>1$ and $C(C(k))=C(k)$), or it has a neighbour (i.e., $C(k)>k$).
    A blue column pushes the gradient flow toward the uppermost point of the column. 
    If node $k$ associated with a blue column is an \ITER solution, then our construction introduces approximate SOSPs (and local minima) near the uppermost points of that blue column.
    We start each blue column from the point $(x=6k-3, y=3)$ until the point $(x=6k-3,y=6k+2)$. Each blue column has width 2.
    The base of the blue column consists of a single grid point of blue color at $(x=6k-2, y=2)$.
    The blue columns are illustrated in detail in Figure \ref{fig: main_remaining_areas}.

\begin{figure}[t]
    \centering
\includegraphics[scale=0.2]{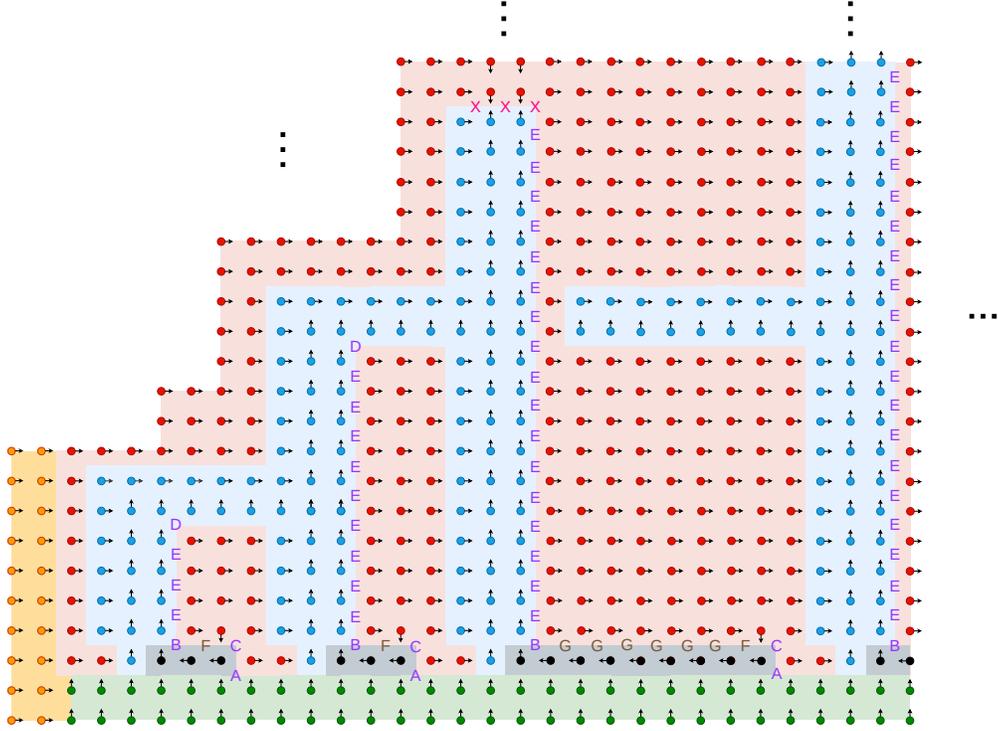}
    \caption{The \ITER embedding area. We annotate the newly introduced groups (Groups A--G): Groups A--E (in purple) introduce strict saddle points, that is FOSPs which are not SOSPs, while Groups F--G (in brown) do not even introduce FOSPs. Groups X indicate the small boxes where SOSPs can exist. The remaining untagged small boxes belong to the standard Groups 1--4 (which do not introduce FOSPs).}
    \label{fig: main_iter_embedding}
\end{figure}

\begin{figure}[h]
    \centering
    \includegraphics[scale=0.2]{figs/remaining_areas.png}
    \caption{A detailed illustration of our construction in the following  areas of interest: (a) The upper left area, (b) the start of the vertical red arrow for an ITER solution, (c) the right area, and (d) the lower left and central area.}
    \label{fig: main_remaining_areas}
\end{figure}


\item \textbf{The vertical red arrow close to the green boundary.} 
This region is a vertical line connecting the orange region with the black arrow.
This region essentially "recycles" the gradient flow toward the black arrow.
It is illustrated in detail in area (c) in Figure \ref{fig: main_remaining_areas}.

\item \textbf{The vertical red arrow for an ITER solution.}  
This is a vertical line of width 1 that connects the orange region with a blue column corresponding to an \ITER solution $k$. 
We show that this vertical line creates SOSPs (and local minima) on the uppermost small boxes of a blue column corresponding to an \ITER solution. 
We have as many of these regions as the number of \ITER solutions. 
This region is illustrated in detail in area (b) in Figure \ref{fig: main_remaining_areas}.

\item \textbf{The red arrows inside the black arrow}. 
In order to ensure that no approximate SOSPs are created by the biquintic interpolation at the base of a blue column (i.e., at $y=2$), corresponding to a node $k>1$, we place a small line of two grid points of red color at $(x=6k-3, y=2)$ and $(x=6k-4, y=2)$ that pushes the gradient flow to the right, that is pointing to the adjacent grid node of blue color at $(x=6k-2,y=2)$. 
However, this small red line is also adjacent to the black arrow starting at $(x=6k-3,y=2)$, with the latter pushing the gradient flow upwards.
To avoid creating any approximate SOSPs inside the small boxes starting at $(6k-5,2)$ and $(6k-6,2)$ due to the interpolation, we also place a grid point of red color, just above the start of the black arrow, at $(x=6k-5,y=3)$, pushing the gradient flow towards the grid point of black color.
These regions are illustrated in detail in Figure \ref{fig: main_iter_embedding}.


\item  \textbf{The black arrow.}
This region corresponds to the horizontal black arrow appearing in Figure \ref{main: construction}. 
It moves the gradient flow from the vertical red arrow (close to the green region) and the green region to the blue columns.
Near the base of a blue column $k$, the point of black color at $(x=6k-1,y=2)$ pushes the negative gradient flow toward the point of blue color at $(x=6k-1,y=3)$. 
In all other points of black color, the negative gradient flows to the left.
This region is illustrated in detail in Figure \ref{fig: main_iter_embedding} and Figure \ref{fig: main_remaining_areas}.

\item \textbf{The blue corridors}. This region corresponds to the horizontal arrows connecting two blue columns of the construction which represent neighboring nodes according to \ITER's function $C$. 
A blue corridor starts from the blue column of node $k$ toward the blue column of $C(k)$ when $C(k)>k$ and $C(k)$ is a blue column. 
If $C(k) \le k$ or $C(k)$ is not a blue column, no blue corridor is started from node $k$.
It is a horizontal line of width 1 directing the negative gradient from the uppermost point of the blue column corresponding to $k$ towards the blue column corresponding to node $C(k)$. 
The main difference from the standard corridor implementations (see \cite{fearnley2022complexity,hollender2023computational}) is that, here when a blue corridor exits an intermediate blue column, it starts with two nodes of red color pointing to the right, where the standard blue part of the corridor continues the way to the next blue column.
The blue corridor consists of nodes of blue color elsewhere.
We make use of this trick because the background red color is of higher magnitude than the blue color of the corridor. 
Therefore, the use of the standard implementation of a blue corridor may introduce spurious approximate SOSPs around such areas generated by the interpolation.
This region is illustrated in detail in Figure \ref{fig: main_iter_embedding}.

\item \textbf{Default red background}.
The default red background is illustrated in Figure \ref{main: construction} as red arrows pointing to the right, showing that the default red background pushes the gradient flow to the right.  
Notably, if for an \ITER node $k$, there is no blue column (i.e., $C(k)=k$), then instead of the blue column, we place the default red background.
\end{itemize}

Based on the above construction, we show that the polynomials defined by the biquintic interpolation within Groups A--G (see Lemma \ref{lem:new_groups}) or Groups 1--4 (see Lemma \ref{lem:original_groups}) do not have $\epsilon_0$-approximate SOSPs.
We classify each small box used in our construction either as one of Groups A--G and Groups 1--4, or as a group containing an approximate SOSP (denoted by Group X in Figure \ref{fig: main_iter_embedding}). 
That said, we show that an $\epsilon_0$-approximate SOSP can only lie within Group X, which in turn is found only around \ITER solutions. 
This way, \ITER reduces to \ISOSP.
Finally, by applying appropriate rescaling on $f$ and for some $\epsilon=\Theta(1/N^4)$, we show that \ISOSP is \PLS-hard even if $\mathcal{X} = [0,1]^2$, all Lipschitz constants equal to $1$ and $(\epsilon_G, \epsilon_H) = (\epsilon, \sqrt{\epsilon})$.
We note that the detailed proof can be found in Section \ref{sec: isosp}.


\subsection{Proof overview of Theorem \ref{thm: membership}}\label{sec: overview_membership}

Suppose that we are given an arbitrary \CSOSP instance under the polytope constraints $Ax \le b$.
Our goal is to construct a \LOCALOPT instance defined on a discrete $\G$ consisting of the $2^n$ nodes of \LOCALOPT, so that every solution of the \LOCALOPT instance gives a solution of the \CSOSP instance.

Our proof is based on the SNAP algorithm proposed in \cite{snap2020} which computes an $\epsilon$-approximate SOSP in $\mathcal{O}(\poly(1/\epsilon))$ number of iterations. 
The key idea of our construction is to deploy the update rule of SNAP (denoted by function $h:\mathcal{X}\rightarrow\mathcal{X}$) in order to define the neighbor function $g$ of the \LOCALOPT instance, exploiting its favorable potential-improvement guarantees: 
If the magnitude of the proximal gradient at a point $x \in \mathcal{X}$ is large, the algorithm performs a Projected Gradient Descent (PGD) step, which guarantees a sufficient decrease of $\Omega(\epsilon^2)$ (Lemma \ref{descent1}). 
Otherwise, and supposing that $x$ is not a SOSP, the algorithm performs a line-search along
the negative curvature direction of the projected Hessian, which guarantees that either a sufficient decrease of $\Omega(\epsilon^3)$ is achieved (Lemma \ref{descent2}), or hits the boundary increasing the number of linearly independent active constraints by 1 (Lemma \ref{descent3}). 
That said, we define the neighbor function $g$ and the potential function $p$ at a point $v \in \G$ as follows:
\begin{align*}
p(v)= f(v) + \frac{\epsilon^4}{C\cdot d\cdot L^2} \cdot \text{dim}(\text{Null}(A'(v))) \quad \text{and} \quad g(v) = \text{Rounding}(h(v))
\end{align*}
\noindent
where $L$ is an upper bound to the Lipschitz constants of $f$, $\nabla f$ and $\nabla^2 f$, $C$ denotes a positive constant and the function $\text{Rounding}$ maps its input to some discrete point of $\G$.

However, how to design the grid over the polytope constraints is far from trivial. 
Specifically, for each point $x \in \mathcal{X}$, we compute the orthogonal basis vectors $v_1, \dots, v_d$ spanning the null space $\ker(A'(x))$, and a reference point $x_I$ that is a minimum-norm vector satisfying $A'(x) x_I = b'(x)$.

Based on the above, we define an infinite algebraic lattice $L_I$ on this subspace with a sufficiently small step size $\delta=\frac{\epsilon_r}{d\sqrt{d}} \le \frac{\epsilon^5}{C \cdot d^2 \cdot L^3}$ and also define the grid $\G$ as follows:
\begin{equation*}
    L_I = \left\{ x_I + \sum_{k=1}^d c_k v_k \ \Bigg| \ c_k \in \delta \mathbb{Z} \right\} \quad \text{and} \quad \G = \mathcal{X} \cap \left( \bigcup_{I \subseteq \{1, \dots, m\}} L_I \right)
\end{equation*}
On a high level, we can perform an efficient mapping to the grid via a rounding procedure (the MapToGrid algorithm), which runs in time and bit complexity polynomial in the input sizes, and most importantly guarantees the desired property that rounding any point in $\mathcal{X}$ to its nearest neighbor in $\mathcal{G}$ increases the value of $f$ by at most $\epsilon_r \cdot L$ and, crucially, does not deactivate any active constraints (Lemma \ref{lem: polytope_properties}).

Putting everything together, we conclude the reduction by showing that a non-SOSP point admits a neighbor with a strictly lower potential. Therefore, any solution of the constructed \textsc{LocalOpt} instance must be a solution of \CSOSP instance.
We note that the detailed proof can be found in Section \ref{sec: membership}.

\section{Extended Background}

\subsection{Turing machines, Boolean circuits and complexity classes}

\paragraph{Polynomial-time Turing machines and Boolean circuits.}

In this paper, we will make use of the classical model used in complexity theory to characterize the computational complexity of optimization problems. 
In this model we are given the description of a polynomial-time Turing machine $\mathcal{C}_f$ that computes $f(x)$, $\nabla f(x)$ and $\nabla^2 f(x)$. 
In particular, given some input $x \in \mathcal{X}\subset \R^d$, described using $b$ bits, $\mathcal{C}_f$ runs in time upper bounded by some polynomial in $b$ and outputs approximate values for $f(x)$, $\nabla f(x)$ and $\nabla^2 f(x)$. 
Notably, a running time upper bound on a given Turing machine can be enforced syntactically by stopping the computation and outputting a fixed output whenever the computation exceeds the bound.
For a more detailed description, we prompt the interested reader to look at Section 2 in \cite{daskalakis2021complexity}.

When provided with polynomial-time Turing machines for $f$, $\nabla f$ and $\nabla^2 f$, verifying that $\nabla f$ and $\nabla^2 f$ are truly the gradient and the Hessian of $f$, and that all satisfy the Lipschitz continuity property, is computationally non-trivial. Following the lines of \cite{fearnley2022complexity}, we can study the promise version, assuming all input instances inherently satisfy these requirements, or use the "violation" framework which accepts a proof of a violated condition as a valid solution. Notably, the violation approach guarantees that the problem belongs to the \TFNP class. Consequently, we will rely on the total version for our formal definitions in the sequel of the paper. Nevertheless, because our reductions are ``promise-preserving'', the established hardness results extend seamlessly to the promise versions of these problems.



We also make use of Boolean circuits to define problems in \TFNP.
Formally, a Boolean circuit $C : \{0, 1\}^d \rightarrow \{0, 1\}^d$ with $d$ inputs and $d$ outputs, is allowed to use the logic gates $\land$ (AND), $\lor$ (OR) and $\neg$ (NOT), where the $\land$ and $\lor$ gates have fan-in 2, and the $\neg$ gate has fan-in 1.

\paragraph{The \TFNP complexity class.}
\TFNP, which stands for ``total search problems within \NP'', lies between the complexity classes \textsf{FP} and \FNP, which contain problems whose decision versions lie in \textsf{P} and \NP respectively.
\TFNP consists of all problems $R$ in \FNP that are guaranteed to have a solution for any given input. Formally, an \FNP problem $R$ is total when, for any input $x \in \{0, 1\}^*$, there is guaranteed to exist $s \in \{0, 1\}^*$ satisfying $(x, s) \in R$. Notably, a \TFNP problem cannot be \NP-hard, unless \NP = co-\NP.

\paragraph{The PLS complexity class.} \PLS (Polynomial Local Search) is a subclass of \TFNP which captures problems of finding a local minimum of an objective function $f$, in contexts where any candidate solution has a local neighbourhood within which we can readily check for the existence of some other point having a lower value of $f$. That said, \PLS is traditionally defined as the complexity class which contains all \TFNP problems that reduce to the \PLS-complete \textsc{Localopt} problem \cite{johnson1988easy}, which is defined as follows:

\begin{tcolorbox}[colback=white!10, colframe=black!80!black, arc=2mm, boxrule=1.5pt] \label{def: localopt}
\begin{definition} \textsc{LocalOpt}:\\
\textbf{Input:} Boolean circuits $p,g: [2^n] \to [2^n]$. ($p$ is the potential function and $g$ is the neighbor function)

\textbf{Goal:} Find $v \in [2^n]$ such that $p(g(v)) \geq p(v)$.
\end{definition}
\end{tcolorbox}

To prove our hardness result, we will make use of the \PLS-complete problem \ITER, which is formally defined as follows.  

\begin{tcolorbox}[colback=white!10, colframe=black!80!black, arc=2mm, boxrule=1.5pt]
\begin{definition}\label{def: iter}\textsc{Iter}:\\
\textbf{Input:} Boolean circuit $C: [2^n] \to [2^n]$ with $C(1)>1$.

\textbf{Goal:} Find $v \in [2^n]$ such that either
\begin{itemize}
    \item $C(v) < v$, or
    \item $C(v) > v$ and $C(C(v)) = C(v)$.
\end{itemize}
\end{definition}
\end{tcolorbox}






\subsection{Non-convex optimization and stationary points} \label{background: non-convex}

We consider the following class of non-convex linearly constrained optimization problems
\begin{equation}\label{eq.pro}
\min_{x} f(x), \quad x\in \mathcal{X}:=\{{x} \in \mathbb{R}^d : {A} x\le {b}\}
\end{equation}
where  $f:\mathbb{R}^{d}\rightarrow\mathbb{R}$ is twice differentiable (possibly non-convex); $A\in\mathbb{R}^{m\times d}$, and $b\in\mathbb{R}^{m}$ are some given matrix and vector.
Throughout this paper, we denote $\|\cdot\|$ by the Euclidean norm $\| \cdot \|_2$.

\begin{assumption}
\label{asm:blanket}
The objective function $f : \mathbb{R}^{d} \to \mathbb{R}$ of \eqref{eq.pro} satisfies the following conditions for some $L,L_1,L_2  > 0$:
\begin{subequations}
\begin{enumerate}
\item 
\textit{Lipschitz continuity:}
\begin{align}
& |f(x)- f(y)|\le L\|x-y\|,\quad\forall x,y\in\mathcal{X}
\shortintertext{
\item
\textit{Smoothness:}}
& 
\|\nabla f(x)-\nabla f(y)\|\le L_1\|x-y\|,\quad\forall x,y\in\mathcal{X}
\shortintertext{\item
\textit{Second-order smoothness:} (\textit{Hessian-Lipschitzness})}
& \|\nabla^2f(x)-\nabla^2f(y)\|\le L_2\|x-y\|,\quad\forall x,y\in\mathcal{X}
    \hspace{4em}
\end{align}
\end{enumerate}
\end{subequations}
\end{assumption}

Let $\mathcal{A}(x) = \{j \in [m] \mid A_j x = b_j\}$ denote the active set at a given point $x$, where $A_j$ is the $j$th row of matrix $A$ and $b_j$ is the $j$th entry of $b$. We denote its complement by $\overline{\mathcal{A}(x)} := [m] \setminus \mathcal{A}(x)$. 
We also define $A'(x) \in \mathbb{R}^{|\mathcal{A}(x)| \times d}$ as the submatrix of $A$ formed by stacking the rows corresponding to the active constraints. 
Similarly, $b'(x) \in \mathbb{R}^{|\mathcal{A}(x)|}$ denotes the corresponding subvector of $b$. 
At a given point $x \in \R^d$, we define the projection operator onto the null space of the active constraints of $x$ for any vector $v \in \mathbb{R}^d$ as follows:
\begin{equation}\label{eq:P}\pi_{\mathcal{A}(x)}(v) := P(x)v, \quad \text{where} \quad P(x) := I - A'(x)^{\top} \left(A'(x) A'(x)^{\top}\right)^\dagger A'(x).\end{equation}
$P(x) \in \mathbb{R}^{d \times d}$ denotes the orthogonal projection matrix onto the null space of $A'(x)$. 

To measure the first-order optimality of a given point $x \in \mathcal{X}$, we define the  {\it proximal gradient} (also known as \textit{proximal mapping}):
\begin{equation}\label{eq:projection}
g_\pi(x):=L_1 \cdot \left(\pi_{\mathcal{X}}\left(x-\frac{1}{L_1}\nabla f(x)\right)-x\right),\;\; \mbox{with}\;\;\pi_{\mathcal{X}}(v) :=  \arg\min_{w\in \mathcal{X}}\|w -v\|^2
\end{equation}





Now, we are ready to proceed with the definition of approximate second-order stationary points established in \cite{snap2020}:

\medskip

\begin{definition}[$({\epsilon}_G,{\epsilon}_H)$-SOSP]\label{def:approx_sosp1}
Fix some precision target $\epsilon_G, \epsilon_H >0$.
Then a point $x^*\in\mathcal{X}$ is an $(\epsilon_G,\epsilon_H)$-SOSP of problem \eqref{eq.pro} if:
\begin{subequations}\label{eq.cond1:exact}
\begin{align}
&\bullet \quad \|g_{\pi}(x^*)\|\le\epsilon_G, & \textrm{(approx. first-order condition)}\label{eq.cond11}
\\
&\bullet \quad y^{T}\nabla^2 f(x^*)y\ge-\epsilon_H,\quad \forall y~\mbox{s.t.}~A'(x^*)y=0, &  \textrm{(approx. second-order condition)}\label{eq.cond12}
\end{align}
\end{subequations}
where $ {A}'({x}^*)$  is the matrix that collects the active constraints at $x^*$.
\end{definition}

By using the definition of the null space of the active set in \eqref{eq:P},  we can rewrite condition \eqref{eq.cond12} as
\begin{equation}\label{eq:Hessian:small}
\lambda_{\min}(H_{P}(x^*))\ge -\epsilon_H, \quad \mbox{with}\quad H_{P}(x^*):={P(x^*)\nabla^2 f(x^*)P(x^*)},
\end{equation}
where $\lambda_{\min}(H_{P}(x^*))$ denotes the minimum eigenvalue of the projected hessian $H_{P}$.

\begin{remark}
    \textit{Definition \ref{def:approx_sosp1} is tractable in the sense that checking whether a point $x \in \mathcal{X}$ is $(\epsilon_G,\epsilon_H)$-SOSP can be computed in polynomial time,
    since it only requires finding the active constraints, computing its null space, and performing an eigenvalue decomposition.
    Lu et al. \cite{snap2020} proposed algorithms, such as SNAP, which given any starting point $x_0 \in \mathcal{X}$, can find an $(\epsilon_G,\epsilon_H)$-SOSP in $\mathcal{O}(\poly(1/\epsilon))$ number of iterations.}
\end{remark}



\medskip

For completeness, we also present the definition for approximate constrained SOSP used in \cite{nouiehed2018convergence, mokhtari2018escaping}, where even checking whether a given $x\in\mathcal{X}$ is an approximate SOSP is \NP-hard:

\begin{definition}[$({\epsilon}_G,{\epsilon}_H)$-SOSP2]\label{def: sosp2}
A point $x^*\in \mathcal{X}$ is an $({\epsilon}_G,{\epsilon}_H)$-second-order stationary point of the second kind of problem \eqref{eq.pro} if the following conditions are satisfied:
\begin{subequations} \label{eq.cond2}
\begin{align} 
&\nabla f(x^*)^{T}(x-x^*)\ge-{\epsilon}_G,\quad\forall x\in\mathcal{X},\quad \text{ s.t. } \quad\|x-x^*\|\le 1\label{eq.cond21}\\
&(x-x^*)^{T}\nabla^2 f(x^*)(x-x^*)\ge-{\epsilon}_H,\quad \forall x\in\mathcal{X} \quad \text{ s.t. } \quad\nabla f(x^*)^{T}(x-x^*)=0.\label{eq.cond22}
\end{align}
\end{subequations}
\end{definition}

It is easy to see that \eqref{eq.cond22} is intractable to check if it holds as it requires solving a quadratic program defined on a compact domain using an indefinite matrix; a problem which is known to be hard \cite{murty1987some}.

\begin{remark}
    \textit{Definitions \ref{def:approx_sosp1} and \ref{def: sosp2} are equivalent in their exact versions when strict complementarity holds. Moroever, if strict complementarity holds, then any $(0,\epsilon_H)$-SOSP is also a $(0,\epsilon_H)$-SOSP2, and vice versa (see also Corollary 2 in \cite{snap2020}). However, whether strict complementarity suffices to show the equivalence of $(0,\epsilon_H)$-SOSP and $(0,\epsilon_H)$-SOSP2 has been highlighted as an open problem by the authors of \cite{snap2020}.}
\end{remark}

Based on Definition \ref{def:approx_sosp1}, we define the white-box definition of the constrained SOSP problem as follows:

\begin{tcolorbox}[colback=white!10, colframe=black!80!black, arc=2mm, boxrule=1.5pt]
\begin{definition}\label{def: csosp} \CSOSP: \\
\textbf{Input:}
\begin{itemize}
    \item precision parameters $\epsilon_G,\epsilon_H>0$, 
    \item $A \in \mathbb{R}^{m\times d}$, $B \in \mathbb{R}^m$ defining a bounded non-empty domain $\mathcal{X} = \{x\in \mathbb{R}^d : Ax\le B\}$
    \item Turing machine for $f:\mathbb{R}^d\rightarrow\mathbb{R}$, $\nabla f:\mathbb{R}^d\rightarrow\mathbb{R}^d$ and $\nabla^2 f:\mathbb{R}^d\rightarrow\mathbb{R}^{d\times d}$
    \item Lipschitz constants $L$, $L_1$, $L_2 >0$
\end{itemize}
\textbf{Goal:} Find a point $x^*\in\mathcal{X}$ such that:
\begin{subequations}\label{eq.cond1:exact}
\begin{align}
&\bullet \quad \|g_{\pi}(x^*)\|\le\epsilon_G, & \textrm{(approx. first-order condition)}\label{eq.cond11sosp}
\\
&\bullet \quad y^{T}\nabla^2 f(x^*)y\ge-\epsilon_H,\quad \forall~ y~\mbox{s.t.}~~A'(x^*)y=0 \text{ and } \|y\|=1 &  \textrm{(approx. second-order condition)}\label{eq.cond12sosp}
\end{align}
\end{subequations}

Alternatively, we also accept one of the following violations as a solution:

\begin{itemize}
    \item ($f$ or $\nabla f$ or $\nabla^2 f$ is not Lipschitz) $x,y \in \mathcal{X}$ such that: 
    \begin{align*}
        & |f(x)-f(y)| > L\|x-y\| \quad  \text{or} \quad ||\nabla f(x)- \nabla f(y)\| > L_1\|x-y\| \\ & \text{or} \quad ||\nabla^2 f(x)- \nabla^2 f(y)\| > L_1\|x-y\|
    \end{align*}
    \item ($\nabla f$ is not the  gradient) $x,y \in \mathcal{X}$ that contradict the first-order Taylor theorem; i.e., 

    $$\left|f(y)-f(x) - \langle\nabla f(x),y-x\rangle \right| > \frac{L_1}{2}\|x-y\|^2$$

    \item ($\nabla^2 f$ is not the Hessian) $x,y \in \mathcal{X}$ that contradict the second-order Taylor theorem; i.e., 

    $$\left|f(y)-f(x) - \langle\nabla f(x),y-x\rangle - \frac{1}{2}(y-x)^{\top}\nabla^2f(x)(y-x) \right| > \frac{L_2}{6}\|x-y\|^3$$
\end{itemize}
\end{definition}
\end{tcolorbox}

To show our hardness result for the \CSOSP problem, we will show the even stronger result that the problem remains \PLS-hard even if we are promised that the solution $x^*\in\mathcal{X}$ lies strictly in the interior of $\mathcal{X}$.
To this aim, we define this relaxed promised version of the problem (denoted by \ISOSP), as follows:

\begin{tcolorbox}[colback=white!10, colframe=black!80!black, arc=2mm, boxrule=1.5pt]
\begin{definition} \ISOSP: \\
\label{def:isosp}
\textbf{Input:}
\begin{itemize}
    \item precision parameters $\epsilon_G,\epsilon_H>0$, 
    \item $A \in \mathbb{R}^{m\times d}$, $B \in \mathbb{R}^m$ defining a bounded non-empty domain $\mathcal{X} = \{x\in \mathbb{R}^d : Ax\le B\}$
    \item Turing machine for $f:\mathbb{R}^d\rightarrow\mathbb{R}$, $\nabla f:\mathbb{R}^d\rightarrow\mathbb{R}^d$ and $\nabla^2 f:\mathbb{R}^d\rightarrow\mathbb{R}^{d\times d}$
    \item Lipschitz constants $L$, $L_1$, $L_2 >0$
\end{itemize}
\textbf{Goal:} Find a point $x^*$ in the interior of $\mathcal{X}$, i.e., $x^*\in \mathcal{X} \setminus \{x \in \mathcal{X}: \mathcal{A}(x) \ne \emptyset \}$, such that:
\begin{subequations}\label{eq.cond1:exact}
\begin{align}
&\bullet \quad \|g_{\pi}(x^*)\|\le\epsilon_G, & \textrm{(approx. first-order condition)}\label{eq.cond11interior}
\\
&\bullet \quad y^{T}\nabla^2 f(x^*)y\ge-\epsilon_H,\quad \forall~ y \in \mathbb{R}^d ~\mbox{s.t.}~~ \|y\|=1 &  \textrm{(approx. second-order condition)}\label{eq.cond12interior}
\end{align}
\end{subequations}

Alternatively, we also accept one of the following violations as a solution:

\begin{itemize}
    \item ($f$ or $\nabla f$ or $\nabla^2 f$ is not Lipschitz) $x,y \in \mathcal{X}$ such that: 
    \begin{align*}
        & |f(x)-f(y)| > L\|x-y\| \quad \text{or} \quad ||\nabla f(x)- \nabla f(y)\| > L_1\|x-y\| \\ & \text{or} \quad ||\nabla^2 f(x)- \nabla^2 f(y)\| > L_1\|x-y\|
    \end{align*}
    \item ($\nabla f$ is not the gradient) $x,y \in \mathcal{X}$ that contradict the first-order Taylor theorem; i.e., 

    $$\left|f(y)-f(x) - \langle\nabla f(x),y-x\rangle \right| > \frac{L_1}{2}\|x-y\|^2$$

    \item ($\nabla^2 f$ is not the  Hessian) $x,y \in \mathcal{X}$ that contradict the second-order Taylor theorem; i.e., 

    $$\left|f(y)-f(x) - \langle\nabla f(x),y-x\rangle - \frac{1}{2}(y-x)^{\top}\nabla^2f(x)(y-x) \right| > \frac{L_2}{6}\|x-y\|^3$$

    \item (Approximate SOSP on the boundary) $x \in \mathcal{X}$ such that 
    \begin{align*}
        \mathcal{A}(x) \ne \emptyset \quad \text{and} \quad x \text{ satisfies Definition } \ref{def:approx_sosp1}
    \end{align*}
\end{itemize}
\end{definition}
\end{tcolorbox}

\medskip

\begin{remark}
    \textit{For the \ISOSP problem, which is the problem we consider in our \PLS-hardness proof, the two definitions (Definitions \ref{def:approx_sosp1} and \ref{def: sosp2}) are equivalent in the fully approximate versions under strict complementarity. This is due to the fact that every solution, $x^*$, of \ISOSP is effectively an unconstrained approximate SOSP.}
\end{remark}

Finally, it is easy to see that the above problem is easier than the \CSOSP as every solution of \ISOSP is also a solution of the corresponding \CSOSP instance.

\section{INTERIOR-SOSP is \PLS-hard}\label{sec: isosp}

In this section, we prove our main theorem, which is the following:

\begin{theorem}\label{thm: interior_sosp}
\ISOSP is \PLS-hard, even when the domain is fixed to be the unit square box $[0,1]^2$.
The hardness continues to hold even if one considers the promise-version of the problem, i.e., only instances without violations.
\end{theorem}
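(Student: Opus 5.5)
We reduce from \ITER, following the construction sketched in Section~\ref{sec: overview_hardness}. Given a circuit $C:[2^n]\to[2^n]$ with $C(1)>1$, set $N=\Theta(2^n)$ (say $N = 6\cdot 2^n + c$ for a suitable constant $c$) and build a function $f$ on $[0,N]^2$ in two stages.

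\textbf{Discrete stage.} On the integer grid we assign to every grid point a color (blue, black, red, green, orange, with increasing value magnitudes, e.g.\ multiples of $N$ or $N^2$) and a prescribed gradient direction. The assignment follows the layout of Figures~\ref{main: construction}--\ref{fig: main_remaining_areas}:
\begin{itemize}
\item a blue column for each node $k$ with $C(k)\neq k$;
\item blue corridors from $k$ to $C(k)$ when $C(k)>k$;
\item a recycling black arrow along $y=2$, with its red gadgets at column bases;
\item a vertical red arrow into the top of each column that is an \ITER solution;
\item orange/green boundary layers pushing the flow inward;
\item a red background everywhere else.
\end{itemize}
Crucially, the color and direction at any grid point must be computable in $\poly(n)$ time from a constant number of evaluations of $C$ (e.g.\ $C(k)$ and $C(C(k))$ for the column index $k$ and nearby indices). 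This makes the resulting Turing machines for $f,\nabla f,\nabla^2 f$ polynomial.

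\textbf{Continuous stage.} Inside each unit cell we apply the biquintic interpolation of Section~\ref{sec: biquintic}, matching values, gradients and (zero) second derivatives at the corners. This yields a $C^2$ function with $\Lip$, smoothness and Hessian-Lipschitz constants bounded by $\poly(N)$. Because the interpolation is exact, the reduction is promise-preserving and no violation solutions exist.

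\textbf{Correctness.} The core of the argument is a case analysis over cell types, showing that every $(\epsilon_0,\epsilon_0)$-SOSP lies in a cell of Group X, which by construction only occurs near the top of a blue column whose node $k$ satisfies $C(k)<k$, or $C(k)>k$ with $C(C(k))=C(k)$. We proceed as follows.
\begin{itemize}
\item Invoke Lemma~\ref{lem:original_groups} for cells in Groups 1--4, which contain no approximate FOSP.
\item Invoke Lemma~\ref{lem:new_groups} for Groups A--G: Groups F--G have no FOSP, while Groups A--E may contain FOSPs but the Hessian has an eigenvalue $\le -\epsilon_0$ there, so these points are strict saddles.
\item Enumerate all local corner-configurations the construction can produce (columns, column tops, corridor entries and exits, red gadgets at bases, arrow junctions, boundary corners) and check that each falls into one of these groups.
\end{itemize}
The boundary layers have no FOSPs at all and can be thickened to width $\poly(N)$, so every SOSP lies at distance $\Omega(1)$ (after rescaling) from $\partial\mathcal{X}$. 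This also rules out the ``approximate SOSP on the boundary'' violation. Hence from any returned point $x^*$ we read off the cell, recover $k=\lfloor (x_1+3)/6\rfloor$, and output $k$ (or $C(k)$ where appropriate) as an \ITER solution.

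\textbf{Rescaling.} Define $\tilde f(x)=f(Nx)/N^{c}$ on $[0,1]^2$ for a suitable degree $c$, so that all three Lipschitz constants become at most $1$. The gradient and Hessian thresholds then scale by $N^{1-c}$ and $N^{2-c}$ respectively. Choosing $\epsilon=\Theta(1/N^4)$ (or whatever the constants dictate) makes $(\epsilon,\sqrt{\epsilon})$-SOSPs of $\tilde f$ correspond to $(\epsilon_0,\epsilon_0)$-SOSPs of $f$. Since both thresholds have polynomial bit length, the reduction is polynomial.

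\textbf{Main obstacle.} I expect the hardest step to be the exhaustive cell classification, together with verifying the second-order property of the new Groups A--E. There the biquintic polynomial with mixed-magnitude corners (e.g.\ red background adjacent to a blue corridor, or the base of a blue column) must be shown to have strictly negative curvature wherever its gradient is small. My plan is to fix coordinates so the small-gradient set is confined to a thin strip, and there show that the dominant term of $\partial_{xx}f$ or $\partial_{yy}f$ is negative by a margin linear in the color gap. This is exactly where the red gadgets and the two-red-node corridor exits are needed. If a configuration resists this, I would modify the local coloring (adding red pushing points) until it fits a provable group.
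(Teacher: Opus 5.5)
Your architecture matches the paper's: reduction from \ITER, a colored grid with prescribed arrows, biquintic interpolation, classification into Groups 1--4, A--G and X, inward-pushing boundary layers, and rescaling by $N^{-4}$ with $\epsilon=\Theta(N^{-4})$. However, your continuous stage contains an error that undermines the correctness step. You interpolate with \emph{zero} second derivatives at the grid points, whereas the paper prescribes $f_{xx}=f_{yy}=-1/2$ (and $f_{xy}=0$) at every grid point, and this is the key new device.

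Write the interpolant as $\sum_{p,q}V_{pq}\alpha_p(x)\alpha_q(y)$, with $\alpha_p$ the one-dimensional quintic Hermite basis. Then $f_{yy}\equiv-1/2$ along every horizontal cell edge and $f_{xx}\equiv-1/2$ along every vertical one, regardless of the colors. Hence $\lambda_{\min}\le-1/2$ on the whole cell skeleton, and it stays negative in a strip of width of order $1/(\text{color gap})$.

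The paper relies on exactly this in the subregions hugging the cell boundary. There, FOSPs of Groups A--C can lie arbitrarily close to an edge, and the color-gap parts of the half-trace carry vanishing factors (e.g.\ $\mathrm{term1}=30y(1-y)(2y-1)$ in Group A, $30x(x-1)(2x-1)$ in Group E) that vanish exactly at the corners. Group A region 9 uses that the color-free part of the half-trace is $\le-1/2$; at a corner it equals $(f_{xx}+f_{yy})/2=-1/2$. Group B region 1 and Group E regions 5--6 use that it is $\le-0.001$.

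With your zero data this quantity is $0$ at every corner, so these bounds fail. Consequently Lemma~\ref{lem:new_groups} does not apply to your $f$: its computer-verified inequalities are for the $-1/2$ data, and the paper re-verifies even Groups 1--4 for this reason. Your fallback, showing that a dominant curvature term is negative ``by a margin linear in the color gap'', gives nothing in precisely these regions, since the color-gap contributions vanish there.

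A secondary point: negativity of the Hessian is not controlled by one color gap at a time. In Group B the half-trace is $\mathrm{term1}\cdot(R-B)+\mathrm{term2}\cdot(M-B)+\mathrm{term3}$, with $\mathrm{term1}\ge0\ge\mathrm{term2}$ near the lower-left corner. The paper can conclude $\lambda<-\epsilon_0$ there only because $(R-B)/(M-B)$ is confined to a bounded window ($9995B\le R-B\le 10^4B$ and $99B\le M-B\le 101B$). So the colors must be the specific constant multiples of $N$ used in the paper. Your option of putting colors on different scales, such as $N$ and $N^2$, would break this step.

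Finally, to exclude the ``approximate SOSP on the boundary'' violation you need to reason about the proximal gradient rather than the gradient. The paper argues that any clipping of the projected step forces a large $\|g_\pi\|$. With inward-pointing boundary layers this is a minor omission.
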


\begin{proof}[Proof of Theorem \ref{thm: interior_sosp}]

In order to prove the hardness result of the statement, we will show a polynomial-time reduction from the \PLS-complete problem \ITER. 

Suppose that we are given an arbitrary \ITER instance.
Let $N={\Theta}(2^n)$, where $[2^n]$ denotes the domain of the values of the \ITER instance and $N$ is greater than a sufficiently large constant.
Our goal is to construct an \ISOSP instance for some function $f$ defined on the unit box $[0,N]^2$ and some constant $\epsilon_0$ sufficiently small, so that every solution of the \ISOSP instance gives a solution of the \ITER instance.
In doing so, we will construct a function $f$ that satisfies the input conditions of \ISOSP (Definition \ref{def:isosp}) while embedding an \ITER instance into a 2D discrete grid represented within the box $[0,N]^2$, which is the domain of $f$. 
Here, $N={\Theta}(2^n)$ where $[2^n]$ denotes the domain of the values of an \ITER instance. Our construction will ensure that every $(\epsilon_0,\epsilon_0)$-SOSP of $f$ corresponds to a solution of the \ITER instance. 
To do so, we follow the logic of the hardness proofs for FOSPs and KKT points \cite{fearnley2022complexity,hollender2023computational,kontogiannis2024computational}, in the sense that we will construct $f$ by assigning colors to the corners of each grid cell (we refer to a grid cell as a \textit{small box}). Each color represents a different magnitude of $f$ as a function of $N$, and each corner has a specific direction for the gradient of $f$. 
To define $f$ in the interior of each small box and satisfy second-order smoothness over the whole domain, we interpolate the corner values of the small box via biquintic interpolation, similar to \cite{kontogiannis2024computational}.

\paragraph{High-level overview of the proof and new techniques.}

Applying the above methodology to construct $f$ while ensuring that $f$ does not attain any approximate second-order stationary points is far from trivial. 
The main challenge here is that the techniques known thus far for characterizing stationarity in a small box (constructed using the above methodology) are based on the four groups of grid cells introduced in \cite{fearnley2022complexity}. 
These four groups are only able to identify the non-existence of approximate FOSPs, so using these groups one could only isolate approximate FOSPs (which could be strict saddle points, or even local maxima) and locate them around the solutions of the \ITER instance (e.g., see \cite{hollender2023computational}).

In this paper, we introduce \textit{7 new groups} (see Section \ref{sec: groups}), five of which guarantee the non-existence of $(\epsilon_0,\epsilon_0)$-SOSP, and two of which guarantee the non-existence of $\epsilon_0$-FOSP, for $\epsilon_0=10^{-10}$. 
Based on these groups, we construct a novel reduction which isolates second-order stationary points and locates them around the solutions of \ITER. 
The key idea of our construction is to guide the (negative) gradient flow away from the boundary but towards locations that are candidate solutions of the \ITER instance.
In doing so, our construction uses high magnitude values for the "background", and although this creates FOSPs away from \ITER solutions, we will show that such points are not approximate SOSPs.
Furthermore, we demonstrate that our reduction forces all FOSPs to be strictly in the interior by ensuring none exist on the boundary, or even within a distance of $\Omega(\poly(N))$ from it\textemdash thus giving the certificate of \PLS-hardness for the \ISOSP problem.

\subsection{The discrete grid $\G$}\label{sec:grid}

We define $N = 6\cdot 2^n + 6$, where $2^n$ represents the domain of \ITER.
First, we define the values and the gradients and Hessian of $f$ on a discrete grid that is a subset of the box square $[0, N]^2$, and then we use biquintic interpolation, discussed in the next subsection, to define $f$ in the rest of $[0, N]^2$. 
Formally, the grid that we use is the following:
\begin{equation} \label{eq:7}
\G = \{(a,b) \mid a,b \in \{0, 1 \dots, N-1\}\}.
\end{equation}
We will refer to a grid cell of $\G$ as small box.
More specifically, we denote the small box starting at $(a,b)$ by $\text{Box}(a,b)$, which is defined as follows:
\begin{align}\label{small_boxes}
    \text{Box}(a,b) = \{ (x,y) : x \in [a,a+1], y \in [b,b+1] \}, \quad a,b \in \{0, 1 \dots, N-1\}
\end{align}



\subsection{Biquintic Interpolation}\label{sec: biquintic}

For now, assume that we have defined the function and the gradient values of $f$ on $\G$ (i.e, the values at the corners of each small box of $\G$). We will formally define these values in Section \ref{sec: function-on-grid}.

Similar to \cite{kontogiannis2024computational}, we apply \textit{biquintic interpolation} in every {small box} of $\G$ to define $f$ everywhere in $[0, N]^2$. Consider the small box $\text{Box}(a,b)$, with $a,b \in \{0, 1 \dots, N-1\}$. If we have the function, the gradient and the Hessian values of $f$ for all the four corners $(a, b)$, $(a, b + 1)$, $(a + 1, b)$ and $(a + 1, b + 1)$, then we can define $f$ in every point of the $\text{Box}(a,b)$ using a polynomial of the form:

\begin{equation}\label{biquintic_polynomial}
f(x,y) = \sum_{i=0}^5 \sum_{j=0}^5 c^{a,b}_{ij} \cdot (x - a)^i \cdot (y - b)^j, \quad \forall (x,y) \in \text{Box}(a,b)
\end{equation}

To determine the unknown coefficients $c_{i,j}^{a,b}$ based on the corner values and derivatives, we apply the following methodology. First, we consider matrix $A$ as follows:
$$A:=\begin{bmatrix}
1 & 0 & 0 & 0 & 0 & 0\\
1 & 1 & 1 & 1 & 1 & 1\\
0 & 1 & 0 & 0 & 0 & 0\\
0 & 1 & 2 & 3 & 4 & 5\\
0 & 0 & 2 & 0 & 0 & 0\\
0 & 0 & 2 & 6 & 12 & 20 
\end{bmatrix}$$ 
Also, we consider matrix $V^{a,b}$ as follows: 
$$V^{a,b}:=\begin{bmatrix}
  f(a,b) & f(a,b+1) & f_y(a,b) & f_y(a,b+1) & f_{yy}(a,b) & f_{yy}(a,b+1) \\
   f(a+1,b) & f(a+1,b+1) & f_y(a+1,b) & f_y(a+1,b+1) & f_{yy}(a+1,b) & f_{yy}(a+1,b+1)\\
   f_x(a,b) & f_x(a,b+1) & 0 & 0 & 0 & 0\\
   f_x(a+1,b) & f_x(a+1,b+1) & 0 & 0 & 0 & 0\\
   f_{xx}(a,b) & f_{xx}(a,b+1) & 0 & 0 & 0 & 0\\
   f_{xx}(a+1,b) & f_{xx}(a+1,b+1) & 0 & 0 & 0 & 0
\end{bmatrix}$$
where $f_x$, $f_y$ denote the first-order partial derivatives and $f_{xx}$, $f_{yy}$ denote second-order partial derivatives, and with $V^{a,b}$ containing the target function and derivative values on the corners. 
In this work, for all small boxes of $\G$, we set all but the following second-order partial derivatives equal to zero: 

$$ f_{xx}(a,b) = f_{xx}(a+1,b) = f_{xx}(a,b+1) = f_{xx}(a+1,b+1) = -1/2 $$

\noindent
and 

$$ f_{yy}(a,b) = f_{yy}(a+1,b) = f_{yy}(a,b+1) = f_{yy}(a+1,b+1) = -1/2 $$

\bigskip

\begin{remark}
    \textit{Our biquintic interpolation methodology differs from \cite{kontogiannis2024computational} solely in that the above second-order partial derivatives are not set to zero. The reason for this is that in order to identify whether biquintic interpolation creates approximate SOSPs within a small box, we will use the second-order stationarity condition \ref{eq.cond12interior} as a criterion, which examines whether the minimum eigenvalue of the Hessian is negative. Specifically, assigning $-1/2$ to $f_{xx}$ and $f_{yy}$ at the corners of each box streamlines our analysis when proving the non-existence of approximate SOSPs within our newly introduced groups. However, a consequence of this choice is that we must re-establish the arguments of \cite{kontogiannis2024computational} for the standard Groups 1-4 which were also utilized in the hardness proofs of \cite{fearnley2022complexity,hollender2023computational}. We will extensively analyze this in Section \ref{sec: groups}.}
\end{remark}

\noindent
The set of unknown polynomial coefficients can be written in matrix form as follows:  
$$C^{a,b}:=\begin{bmatrix}
 c_{0,0}^{a,b} & c_{0,1}^{a,b} & c_{0,2}^{a,b} & c_{0,3}^{a,b} & c_{0,4}^{a,b} & c_{0,5}^{a,b}  \\
c_{1,0}^{a,b} & c_{1,1}^{a,b} & c_{1,2}^{a,b} & c_{1,3}^{a,b} & c_{1,4}^{a,b} & c_{1,5}^{a,b}  \\
c_{2,0}^{a,b} & c_{2,1}^{a,b} & c_{2,2}^{a,b} & c_{2,3}^{a,b} & c_{2,4}^{a,b} & c_{2,5}^{a,b}  \\
c_{3,0}^{a,b} & c_{3,1}^{a,b} & c_{3,2}^{a,b} & c_{3,3}^{a,b} & c_{3,4}^{a,b} & c_{3,5}^{a,b}  \\
c_{4,0}^{a,b} & c_{4,1}^{a,b} & c_{4,2}^{a,b} & c_{4,3}^{a,b} & c_{4,4}^{a,b} & c_{4,5}^{a,b}  \\
c_{5,0}^{a,b} & c_{5,1}^{a,b} & c_{5,2}^{a,b} & c_{5,3}^{a,b} & c_{5,4}^{a,b} & c_{5,5}^{a,b} & 
\end{bmatrix}$$ 
The corner constraints of the polynomial in \ref{biquintic_polynomial}  give a system of equations which can be written in matrix form as follows:
\[A C^{a,b} {A}^{\top} =V^{a,b} \]
Using the fact that $A$ is invertible, and thus the system is solvable, the coefficients are calculated using the following closed-form solution:

\begin{align}
C^{a,b}=A^{-1} V^{a,b} {(A^{-1})}^{\top}.
\end{align}

Now, using the same arguments as \cite{kontogiannis2024computational}, it is easy to see that the resulting polynomials of two adjacent small boxes on the common edge are uniquely determined by the corner constraints and are equal, since the constraints are the same. Therefore, our biquintic interpolation satisfies zero-order, first-order and second-order continuity across the whole domain $[0,N]^2$.

\subsection{Defining function $f$ on the grid}\label{sec: function-on-grid}

\paragraph{Color value regimes.}

\begin{figure}[h]
    \centering
    \includegraphics[scale=0.9]{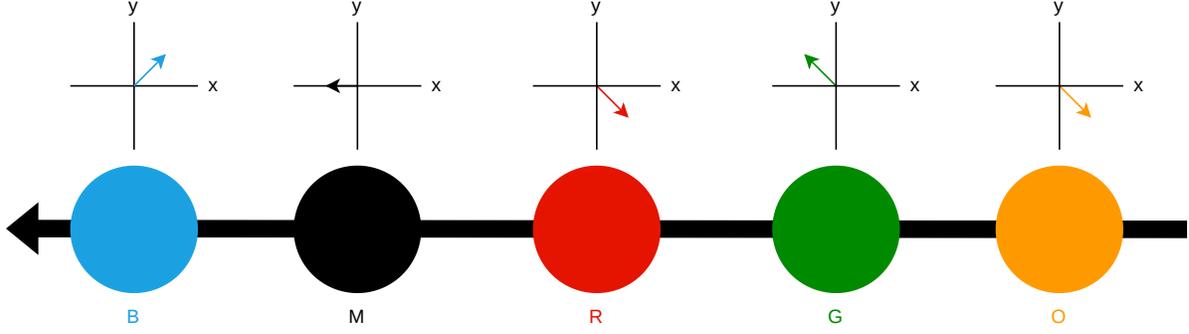}
    \caption{The color value regimes: The colors (B, M, R, G, O) are ordered according to decreasing value, from right to left. The direction of each arrow indicates the direction which decreases the value of the function in the x-y plane.}
    \label{fig:colors}
\end{figure}

We want to define the value of $f$ and $-\nabla f$ (the direction of steepest descent) at all points on the grid $\G$. We will define the following \textit{color value regimes}. Namely, if a point $(x, y) \in \G$ is in:
\begin{itemize}
    \item \color{blue}{\textit{the blue value regime} (denoted by B)}, then $$f(x, y) := \Phi_B(x,y) := 10^4N - x - y$$
    \item {\textit{the black value regime} (denoted by M)
    }, then $$f(x, y) := \Phi_M(a,b) := (10^6+1)N+x-y$$
    \item \color{red}{\textit{the red value regime} (denoted by R)}, then $$f(x, y) := \Phi_R(x,y) := 10^4(10^4-2)N - x + y$$
    \item \color{MyGreen}{\textit{the green value regime} (denoted by G)}, then $$f(x, y) := \Phi_G(x,y) := 10^{15} N + x - y$$
    \item \color{orange}{\textit{the orange value regime} (denoted by O)}, then $$f(x, y) := \Phi_O(x,y) := 10^{16} N - x + y$$
\end{itemize}

\noindent
Based on the above, it is easy to see that for any point on the grid it holds that: 
\begin{quote}
\centering
    {\color{blue}{B}} < {\color{black}{M}} < {\color{red}{R}} < {\color{MyGreen}{G}} < {\color{orange} O}
\end{quote}

We set the direction of steepest descent, i.e., $-\nabla f(x, y)$, at every point $(x, y)$ on $\mathcal{G}$ to be one of the four possible cardinal directions, i.e., left $(1/2,0)$, right $(-1/2,0)$, up $(0,-1/2)$, or down $(0,1/2)$. We will discuss extensively how we define these values for each point on the grid in the sequel of this section. 

\paragraph{Regions of our construction.}

Our goal here is to embed an \ITER instance (see Definition \ref{def: iter}) inside grid $\G$, and to define function $f$ such that every ($\epsilon_0,\epsilon_0$)-SOSP of $f$ corresponds to an \ITER solution. 

\begin{figure}[t]
    \centering
\includegraphics[scale=0.35]{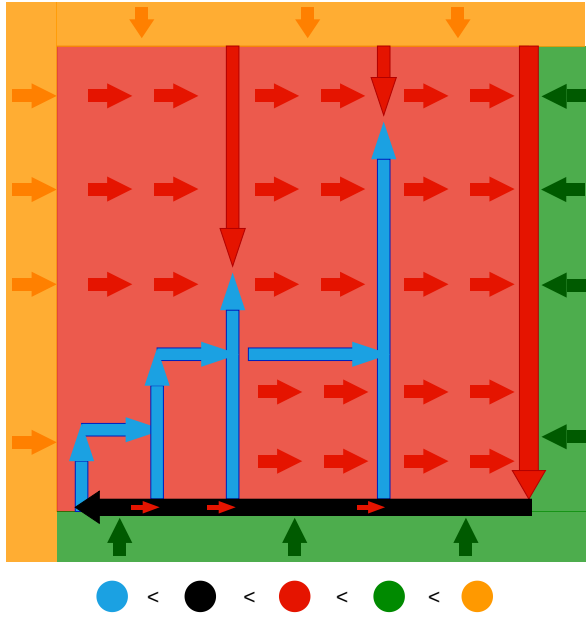}
    \caption{A high-level illustration of our construction: We have embedded an abstract \ITER instance within the square box $[0,N]^2$, which is the domain of function $f$.
    Each arrow shows the direction of the negative gradient flow.
    In this \ITER example, node 3 and 5 are solutions, and it holds that $C(1) = 2$, $C(2) = 5$, $C(3) = 4$, $C(4) = 4$ and $C(5) < 5$. The example can be similarly extended in an arbitrary way for all $2^n$ nodes of the generic \ITER problem.}
    \label{fig: construction}
\end{figure}

We split grid $\G$ into $N\times N$ small boxes, defined as $\text{Box}(a,b)$ for any $(a,b)\in\G$ (see eq. \ref{small_boxes}), each one of size $1 \times 1$.
The high-level overview of our construction is illustrated in Figure \ref{fig: construction}.
In the figure we have embedded an abstract \ITER example for which the node 3 and 5 are solutions, and it holds that $C(1) = 2$, $C(2) = 5$, $C(3) = 4$, $C(4) = 4$ and $C(5) < 5$. 
We present the detailed illustration of our construction in Figures \ref{fig: iter_embedding} and \ref{fig: remaining_areas}, with all different groups of small boxes appearing in our proof.
These figures also show that the illustrated example can be similarly extended in an arbitrary way for all $2^n$ nodes of the generic \ITER problem.
Next, we define the main regions of our construction:

\begin{enumerate}[font=\bfseries]
    \item \textbf{The orange and green boundaries}. These regions define the left and upper boundary of the domain. The negative gradient flows from the boundary into the interior. 
    Notably, these regions can be extended in the same way up to distance $\Omega(\poly(N))$ away from the interior regions.
    \item \textbf{The blue columns}. Each blue column represents an \ITER node $k$ which is either a solution (i.e., $C(k)<k$ or, $C(k)>k>1$ and $C(C(k))=C(k)$), or it has a neighbour (i.e., $C(k)>k$).
    A blue column pushes the gradient flow toward the uppermost point of the column. 
    If node $k$ associated with a blue column is an \ITER solution, then our construction isolates approximate SOSPs (and local minima) around the uppermost points of that blue column.
    We start each blue column from the point $(x=6k-3, y=3)$ until the point $(x=6k-3,y=6k+2)$. Each blue column has width 2.
    The base of the blue column consists of a single grid point of blue color at $(x=6k-2, y=2)$.
    The blue columns are illustrated in detail in Figure \ref{fig: iter_embedding}.

\begin{figure}[t]
    \centering
\includegraphics[scale=0.25]{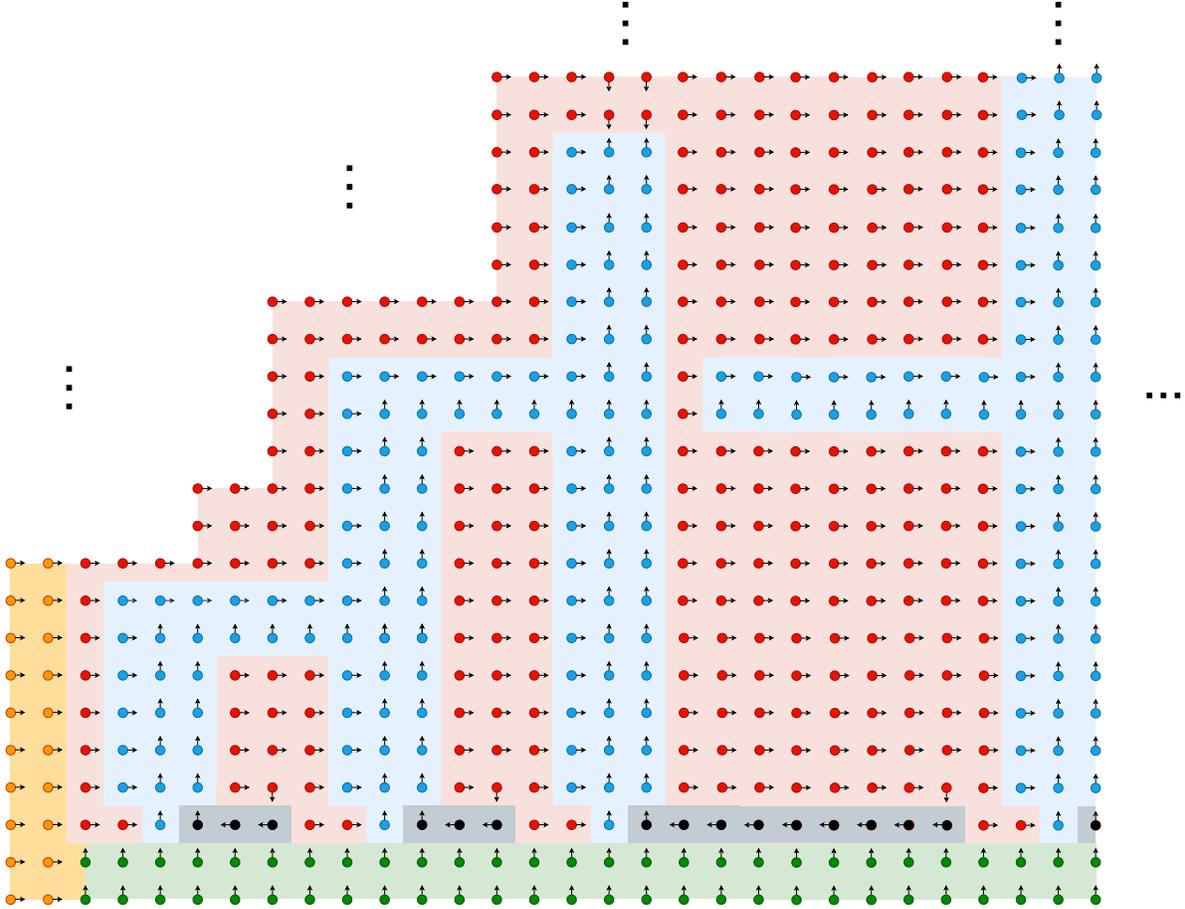}
    \caption{The lower left and central area: the \ITER embedding area.}
    \label{fig: iter_embedding}
\end{figure}

\begin{figure}[h]
    \centering
    \includegraphics[scale=0.25]{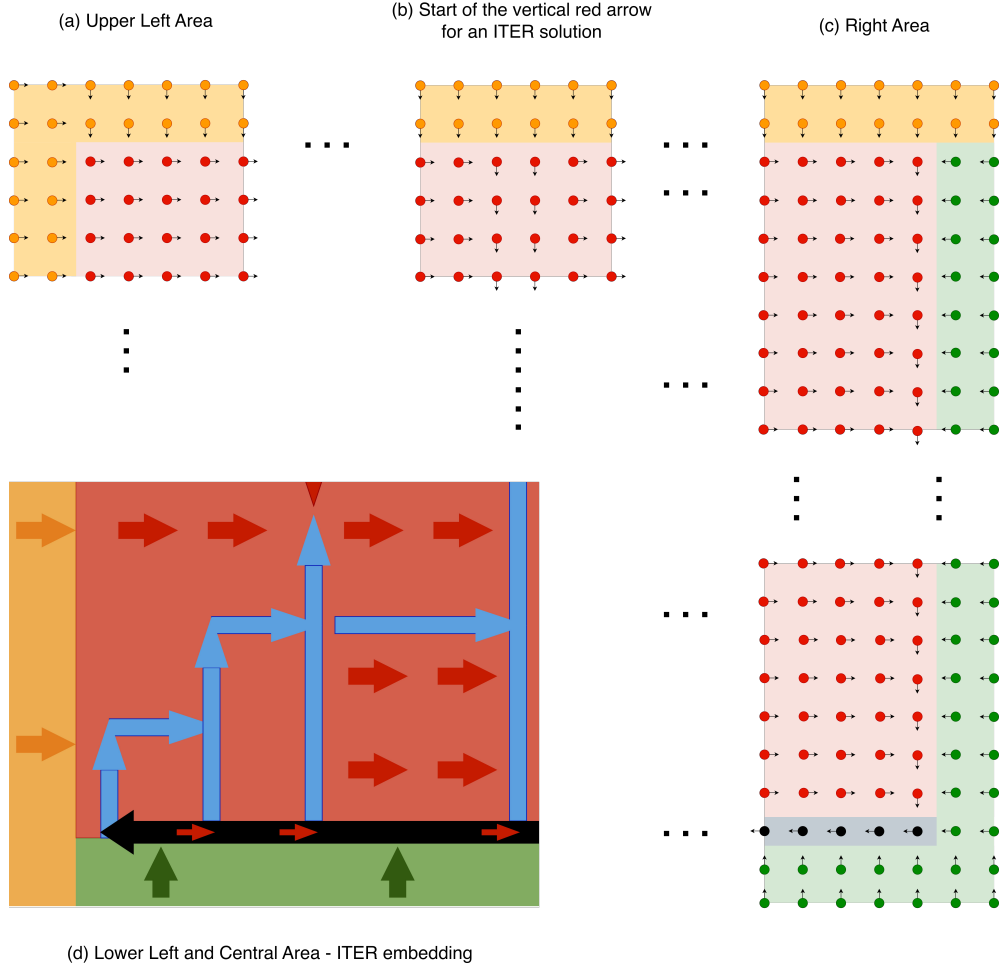}
    \caption{A detailed illustration of our construction in the following  areas of interest: (a) The upper left area, (b) the start of the vertical red arrow for an ITER solution, (c) the right area, and (d) the lower left and central area.}
    \label{fig: remaining_areas}
\end{figure}

    \item \textbf{Critical red regions.} One important difference of our construction compared to \cite{fearnley2022complexity,hollender2023computational} is that here the background color (red) has a larger magnitude than the color (blue) of the columns.
    This choice is significant for our reduction, because we are able to isolate the approximate FOSPs which are also approximate SOSPs and locate only these points around the \ITER solutions (i.e., on the arrow of the blue columns which correspond to \ITER solutions, e.g., see the blue columns 3 and 5 in Figure \ref{fig: construction}).
    In doing so, the background color aims to "recycle" the negative gradient flow, so that the latter is consistently pushed away from the high-magnitude orange and green boundaries, and enters the blue columns via the lower-magnitude black arrow.  

    More specifically, the critical regions of the red regime can be categorized as follows:

\begin{itemize}
\item \textit{The vertical red arrow (column) close to the green boundary.} 
This region is a vertical line connecting the orange boundary with the black arrow, starting from the grid point $(x= 6\cdot2^n+4,y=6\cdot2^n+4)$ until the grid point $(x=6\cdot 2^n+4,y=3)$.
This region essentially "recycles" the gradient flow toward the black arrow.
It is illustrated in detail in area (c) in Figure \ref{fig: remaining_areas}.

\item \textit{The vertical red arrow (column) for an ITER solution.}  
This is a vertical line of width 1 that connects the orange boundary with a blue column corresponding to an \ITER solution $k$. It starts from the $\text{Box}(6k-2,6\cdot2^n+3)$ until $\text{Box}(6k-2,6k+3)$. 
We show that this vertical line creates SOSPs (and local minima) on the uppermost small boxes of a blue column corresponding to an \ITER solution. 
We have as many of these regions as the number of \ITER solutions. 
This region is illustrated in detail in area (b) in Figure \ref{fig: remaining_areas}.

\item \textit{The red arrows inside the black arrow}. 
In order to ensure that no approximate SOSPs are created by the interpolation at the base of a blue column (i.e., at $y=2$), corresponding to a node $k>1$, we place a small line of two grid points of red color at $(x=6k-3, y=2)$ and $(x=6k-4, y=2)$ that pushes the gradient flow to the right, that is pointing to the adjacent grid node of blue color at $(x=6k-2,y=2)$. 
However, this small red line is also adjacent to the black arrow starting at $(x=6k-3,y=2)$, with the latter pushing the gradient flow upwards.
To avoid creating any approximate SOSPs inside the small boxes $\text{Box}(6k-5,2)$ and $\text{Box}(6k-6,2)$ due to the interpolation, we also place a grid point of red color, just above the start of the black arrow, at $(x=6k-5,y=3)$, pushing the gradient flow toward the grid point of black color.
These regions are illustrated in detail in Figure \ref{fig: iter_embedding}.

\end{itemize}

\item  \textbf{The black arrow (also dubbed as black corridor).}
This region corresponds to the horizontal black arrow appearing in Figure \ref{fig: construction}. 
It moves the gradient flow from the vertical red arrow (close to the green boundary) and the green boundary to the blue columns.
Specifically, between two blue columns  all grid points at $y=2$ are of black color, except for the points of red color specified in 4. (red arrows inside the black arrow). 
Near the base of a blue column $k$, the point of black color at $(x=6k-1,y=2)$ pushes the negative gradient flow toward the point of blue color at $(x=6k-1,y=3)$. 
In all other points of black color, the negative gradient flows to the left.
This region is illustrated in detail in Figure \ref{fig: iter_embedding} and Figure \ref{fig: remaining_areas}.

\item \textbf{The blue corridors}. This region corresponds to the horizontal arrows connecting two blue columns of the construction which represent neighboring nodes according to \ITER's function $C$. A blue corridor starts from the blue column of node $k$ toward the blue column of $C(k)$ when $C(k)>k$ and $C(k)$ is a blue column. 
If $C(k)\le k$ or $C(k)$ is not a blue column, no blue corridor is started from node $k$.
The blue corridor is a horizontal line of width 1 directing the negative gradient from the uppermost point of the blue column corresponding to $k$ towards the blue column corresponding to node  $C(k)$. 
The main difference from the standard corridor implementations (see \cite{fearnley2022complexity,hollender2023computational}) is that, here when a blue corridor exits an intermediate blue column, it starts with two nodes of red color pointing to the right, where the standard blue part of the corridor continues the way to the next blue column.
The blue corridor consists of nodes of blue color elsewhere.
We make use of this trick because the background red color is of higher magnitude than the blue color of the corridor. 
Therefore, the use of the standard implementation of a blue corridor may introduce spurious approximate SOSPs around such areas generated by the interpolation.
A blue corridor is illustrated in detail in Figure \ref{fig: iter_embedding}.

\item \textbf{Default red background}.
The default red background is illustrated in Figure \ref{fig: construction} as red arrows pointing to the right, showing that the default red background pushes the gradient flow to the right.  
Notably, if for an \ITER node $k$, there is no blue column (i.e., $C(k)=k$), then instead of the blue column, we place the default red background.
The default red background is also illustrated in detail in Figure \ref{fig: iter_embedding} and Figure \ref{fig: remaining_areas}.
\end{enumerate}

\paragraph{Function values of $f$ on $\G$.}

First, we define the following sets of \ITER nodes:

$$ \text{Columns} = \{ k \in \{1,2,\dots,2^n\} : C(k) \neq k \} $$

\noindent
and

$$ \text{Solutions} = \left\{ k \in \{1,2,\dots,2^n\} :  \left[ C(k) < k \text{ or } (C(k)>k \text{ and } C(C(k)) = C(k) ) \right] \right\} $$

\medskip

\noindent
The set $\text{Columns}$ represents the blue columns appearing in our construction in Figure \ref{fig: construction}. These columns correspond to nodes of \ITER which are not necessarily \ITER solutions but their neighbors are candidate solutions. 
The set $\text{Solutions}$ represents the blue columns of our construction which necessarily correspond to \ITER solutions.

\medskip

Next, we formally define the function values on the grid point $(a,b) \in \G$ as follows:

\allowdisplaybreaks
\[
f(a,b) =
\begin{cases}

    \Phi_{B}(a,b), & \text{for } 6k-3 \leq a \leq 6k-1\text{ and } 3 \leq b \leq 6k+2, \text{ and }k\in\text{Columns},\\
    &\quad\ \ \text{\textcolor{gray}{(Blue column)}}\\
    &\quad\ \  a = 6k-2\text{ and }  b = 2 \text{ and }k\in\text{Columns},\\
    &\quad\ \ \text{\textcolor{gray}{(Blue column base)}}\\
    &\quad\ \ 6k \leq a \leq 6k+2\text{ and } 6k+1 \leq b \leq 6k+2, k\in[2^n], C(k)>k,C(k)\in\text{Columns},\\
    &\quad\ \ \text{\textcolor{gray}{(Start of a blue corridor)}}\\
    &\quad\ \ 6k+1 \leq a \leq 6k+2\text{ and } 6l+1 \leq b \leq 6l+2, k\in[2^n], l\in\text{Columns},  C(l)>k>l, \\
    &\quad\ \ \text{\textcolor{gray}{(Blue corridor exiting a blue column at a crossing)}}\\
     &\quad\ \ 6k-3 \leq a \leq 6k+2\text{ and } 6l+1 \leq b \leq 6l+2, k\in[2^n]\setminus\text{Columns}, l\in \text{Columns}, C(l)>k>l \\
     &\quad\ \ \text{\textcolor{gray}{(Blue corridor crossing a missing column)}}\\\\
    \Phi_{M}(a,b), & \text{for } 6k-1 \leq a \leq 6k+1\text{ and }   b = 2, \text{ and }k\in\text{Columns},\\
    &\quad\ \ \text{\textcolor{gray}{(Black corridor entering a blue column )}}\\
     &\quad\ \ 6k-4 \leq a \leq 6k+1\text{ and }  b =2,  \text{ and }k\in[2^n]\setminus\text{Columns},\\
     &\quad\ \ \text{\textcolor{gray}{(Black corridor at the base of a missing blue column )}}\\
      &\quad\ \ 6\cdot 2^n+2 \leq a \leq 6\cdot 2^n+4 \text{ and }  b =2\\
      &\quad\ \ \text{\textcolor{gray}{(Black corridor at the base of the red column near the right green boundary)}}\\\\
    
    \Phi_{G}(a,b), & \text{for } 2  \leq a \leq 6\cdot 2^n+6\text{ and } 0 \leq b \leq 1,\\
    &\quad\ \ \text{\textcolor{gray}{(Green boundary at the lower side of the $N\times N$ box )}}\\
    &\quad\ \ 6\cdot 2^n+5 \leq a \leq 6\cdot 2^n+6\text{ and } 2 \leq b \leq 6\cdot 2^n+4\\
    &\quad\ \ \text{\textcolor{gray}{(Green boundary at the right side of the $N \times N$ box)}}\\\\

    \Phi_{O}(a,b), & \text{for } 0  \leq a \leq 1\text{ and } 0 \leq b \leq 6\cdot 2^n+6,\\
    &\quad\ \ \text{\textcolor{gray}{(Orange boundary at the left side of the $N \times N$ box )}}\\
    &\quad\ \ 2 \leq a \leq 6\cdot 2^n+6\text{ and }6\cdot 2^n+5 \leq b \leq 6\cdot 2^n+6\\
    &\quad\ \ \text{\textcolor{gray}{(Orange boundary at the upper side of the $N \times N$ box)}}\\\\
    
    \Phi_{R}(a,b),& \text{elsewhere.}
\end{cases}
\]

\bigskip

\paragraph{Gradient values (arrows) of $f$ on $\G$.}

At this point, we are ready to formally define the gradient values of $f$ on the grid $\G$.
Recall that we set the direction of steepest descent, i.e., $-\nabla f(x, y)$, at every point $(a, b) \in \G$ to be one of the four possible cardinal directions, i.e., \textit{Left} $(1/2,0)$, \textit{Right} $(-1/2,0)$, \textit{Up} $(0,-1/2)$, or \textit{Down} $(0,1/2)$.

\[
\nabla f(a,b) =
\begin{cases}

    Up, & \text{for } 2  \leq a \leq 6\cdot 2^n+6\text{ and } 0 \leq b \leq 1, \\
    &\quad\ \ \text{\textcolor{gray}{(Green boundary at the lower side of the $N \times N$ box )}}\\
    &\quad\ \ 6k-2 \leq a \leq 6k-1\text{ and } 2 \leq b \leq 6k+1, k\in\text{Columns},\\
    &\quad\ \ \text{\textcolor{gray}{(Blue column (including its base))}}\\
    &\quad\ \  6k-2 \leq a \leq 6k-1\text{ and }  b = 6k+2,  k\in\text{Solutions},\\
    &\quad\ \ \text{\textcolor{gray}{(Top of a blue column that corresponds to an ITER solution)}}\\
    &\quad\ \ 6k \leq a \leq 6k+2\text{ and }   b = 6k+1, k\in[2^n],C(k)>k,C(k)\in\text{Columns},\\
    &\quad\ \ \text{\textcolor{gray}{(Start of a blue corridor)}}\\
    &\quad\ \ 6k+1 \leq a \leq 6k+2\text{ and }  b = 6l+1, k\in[2^n],l\in\text{Columns}  \text{  C(l)>k>l},\\
    &\quad\ \ \text{\textcolor{gray}{(Blue corridor exiting a blue column at a crossing)}}\\
    &\quad\ \ 6k-3 \leq a \leq 6k+2\text{ and }  b = 6l+1, k\in[2^n]\setminus\text{Columns}, l\in \text{Columns}, C(l)>k>l, \\
    &\quad\ \ \text{\textcolor{gray}{(Blue corridor crossing a missing column)}}\\
    &\quad\ \ a=6k-3 \text{ and }  b = 6l+1, k\in\text{Columns},l\in[2^n], C(l)>=k>l\\
    &\quad\ \ \text{\textcolor{gray}{(Blue corridor entering a blue column at a crossing)}}\\\\
    
    Left, & \text{for } 6k \leq a \leq 6k+1\text{ and }   b = 2, k\in\text{Columns},\\
    &\quad\ \ \text{\textcolor{gray}{(Black corridor entering a blue column )}}\\
     &\quad\ \ 6k-4 \leq a \leq 6k+1\text{ and }  b =2, k\in[2^n]\setminus\text{Columns},\\
     &\quad\ \ \text{\textcolor{gray}{(Black corridor at the base of a missing blue column )}}\\
      &\quad\ \ 6\cdot 2^n+2 \leq a \leq 6\cdot 2^n+4\text{ and }  b =2,\\
      &\quad\ \ \text{\textcolor{gray}{(Black corridor at the base of the red column near the green boundary)}}\\
      &\quad\ \ 6\cdot 2^n+5 \leq a \leq 6\cdot 2^n+6\text{ and } 2 \leq b \leq 6\cdot 2^n+4\\
      &\quad\ \ \text{\textcolor{gray}{(Green boundary at the right side of the $N \times N$ box)}}\\\\

    Down, & \text{for } 2 \leq a \leq 6\cdot 2^n+6\text{ and }6\cdot 2^n+5 \leq b \leq 6\cdot 2^n+6,\\
    &\quad\ \ \text{\textcolor{gray}{(Orange boundary at the upper side of the $N \times N$ box)}}\\
     &\quad\ \ a = 6\cdot 2^n+4 \text{ and }3\leq b \leq 6\cdot 2^n+4,\\
     &\quad\ \ \text{\textcolor{gray}{(Red column near the green boundary )}}\\
     &\quad\ \ 6k-2 \leq a \leq 6k-1\text{ and } 6k+3 \leq b \leq 6\cdot 2^n+4,  k\in \text{Solutions},\\
     &\quad\ \ \text{\textcolor{gray}{Red column above a blue column that corresponds to an ITER solution)}}\\
     &\quad\ \  a = 6k-5 \text{ and }   b = 3,  k\in\text{Columns and }k>1\\
     &\quad\ \ \text{\textcolor{gray}{(Point above the black corridor entering a blue column if next column exists)}}\\\\
    
    Right ,& \text{elsewhere.}
\end{cases}
\]

\subsection{Groups} \label{sec: groups}

\begin{figure}[h]
    \centering
    \includegraphics[scale=0.45]{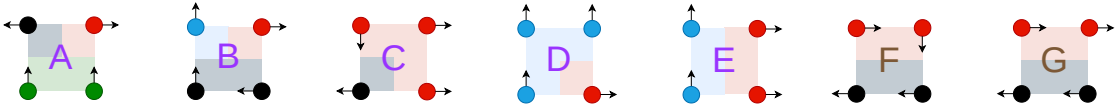}
    \caption{The newly introduced groups: Groups A--E generate strict saddle points which we show that they are not approximate SOSPs. Groups F--G do not even generate approximate FOSPs.}
    \label{fig: new_groups}
\end{figure}

\begin{figure}[h]
    \centering
\includegraphics[scale=0.25]{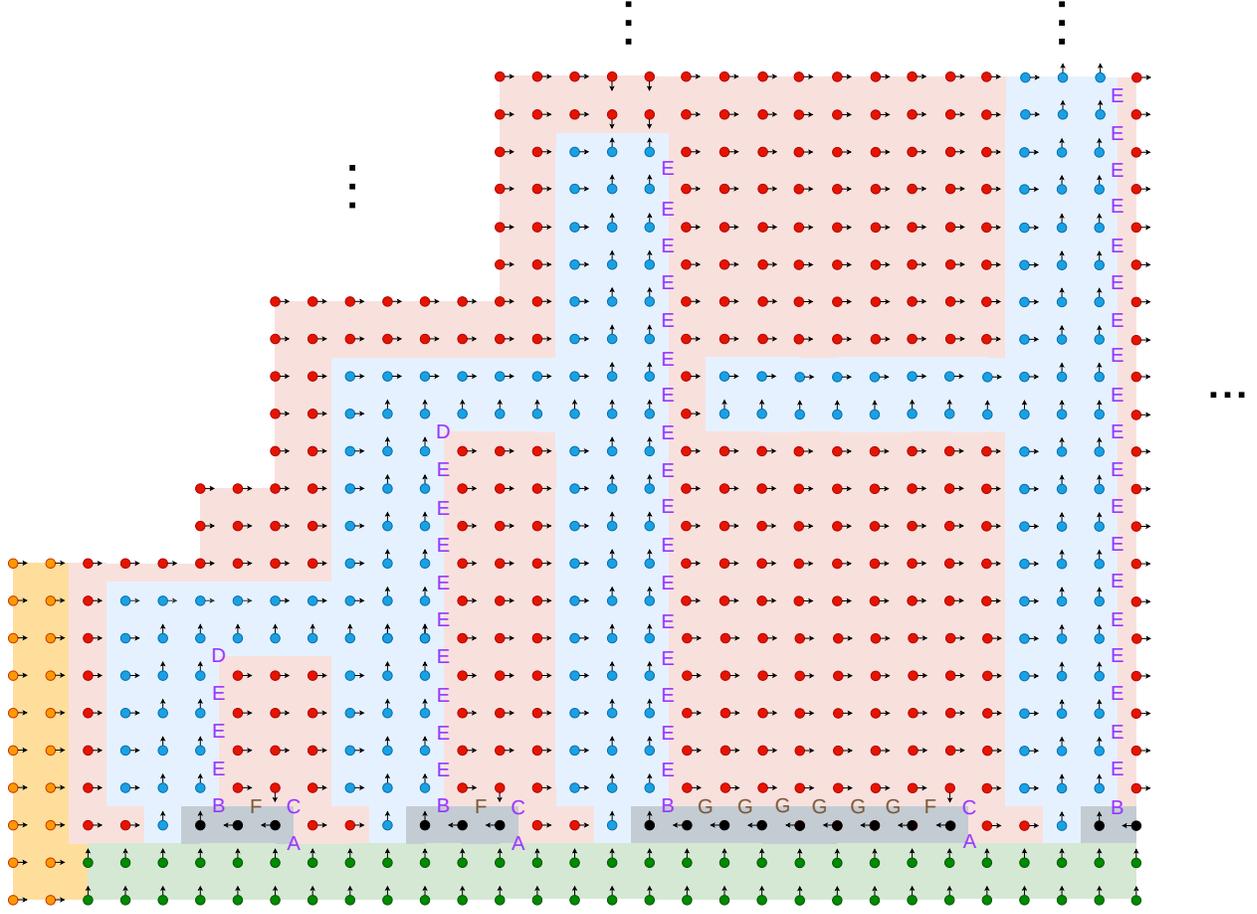}
\caption{The \ITER embedding area. We annotate the newly introduced groups (Groups A--G): The Groups A--E (in purple) introduce strict saddle points, that is FOSPs which are not SOSPs, while the Groups F--G (in brown) do not even introduce FOSPs. The remaining untagged small boxes either belong to the standard Groups 1--4 (which do not introduce FOSPs), or are around \ITER solutions.}
\label{fig: iter_new_groups}
\end{figure}

\paragraph{Overview.}

Our goal is to classify each different small box appearing in our construction into groups.
Previous work was based only on four groups (dubbed by Groups 1--4) which guarantee that no approximate FOSPs are introduced via the biquintic interpolation.

In this section, we provide \textit{7 new groups} (dubbed as Groups A--G) and prove that they \textit{do not create any approximate SOSPs}.
Groups A--G are illustrated in Figure \ref{fig: new_groups}.
These groups can be found only in the lower and central area (\ITER embedding area) in our construction. 
For a detailed illustration of the new groups in this area, see also Figure \ref{fig: iter_new_groups}.
As we discussed previously, the new groups are the cornerstone of our proof due to the fact that we are able to guide the gradient flow away from the boundary via generating local maxima and strict saddle points away from the \ITER solutions, which are not approximate SOSPs. At the same time, we are capable of isolating approximate SOSPs in the locations around which the solutions of \ITER exist.
Moreover, our construction makes use of the standard Groups 1--4 used in the hardness proofs of \cite{fearnley2022complexity,hollender2023computational,kontogiannis2024computational}.
In Section \ref{sec: previous_groups}, we prove that Groups 1--4 still do not create any approximate FOSPs under our modified biquintic interpolation methodology. 
In addition, in Section \ref{sec:boundary}, we prove that no approximate FOSPs (i.e., points that satisfy the Condition \ref{eq.cond11interior}) are created on the boundary.
Finally, in Section \ref{sec: classification}, we will classify each small box used in our construction either as one of Groups A--G and Groups 1--4 (which means that this particular small box does not create any approximate SOSP), or as a group containing an approximate SOSP (denoted by Group X). We will also show that Group X can only be found around solutions of \ITER.

As in \cite{fearnley2022complexity,kontogiannis2024computational}, we will focus on the small box $\text{Box}(0,0)$ which suffices to generalize the non-existence of stationary points for all small boxes of the grid.
For Groups A--G, assume that $\text{Box}(0,0)$ lies strictly in the interior of $\mathcal{X}$; in other words, the setting is effectively unconstrained. 
Notably, this is consistent with our construction, as Groups A--G lie entirely within the interior of $\mathcal{X}$. 
In fact, their distance from the boundaries can be made arbitrarily large (scaling polynomially with $N$). 

Let $\lambda$ denote the minimum eigenvalue of the Hessian.
To prove that no $(\epsilon_0,\epsilon_0)$-SOSP are created within the interior of $\mathcal{X}$, it suffices to ensure that \textit{one} of the following three criteria is met:

\begin{enumerate}
    \item 
    \begin{equation} \tag{$Gy$ criterion} \label{gy-criterion}
        |Gy| > \epsilon_0
    \end{equation}
    \item
    \begin{equation} \tag{$Gx$ criterion} \label{gx-criterion}
        |Gx| > \epsilon_0
    \end{equation}
    \item
    \begin{equation} \tag{$\lambda$ criterion} \label{lambda-criterion}
        \lambda < - \epsilon_0
    \end{equation}
\end{enumerate}

\medskip

\noindent
The following lemma demonstrates that no  $(\epsilon_0,\epsilon_0)$-SOSPs are generated within Groups A--G. 

\begin{lemma}[Groups A--G]\label{lem:new_groups}
     The polynomials defined by the biquintic interpolation within the Groups A--G satisfy either the \ref{gy-criterion}, or the \ref{gx-criterion}, or the \ref{lambda-criterion}.
\end{lemma}

\noindent
Using the fact that function $f$ is bivariate, we have the following closed-form solution for $\lambda$:

\begin{align}
    \lambda = \frac{f_{xx} + f_{yy} - \sqrt{(f_{xx} - f_{yy})^2 + 4f_{xy}^2}}{2}
\end{align}

\noindent
where $f_{xx}$ is the second partial derivative of $f$ with respect to $x$, $f_{yy}$ is the second partial derivative of $f$ with respect to $y$, $f_{xy}$ is the mixed second partial derivative, and $f_{xx} + f_{yy}$ represents the trace of the Hessian matrix.

Here, the analysis becomes somewhat heavy because each of the three criteria may require to study an enormous mathematical expression consisting of polynomial terms up to degree 10. 
The main recipe of our analysis is to decompose $\text{Box}(0,0)$ into subregions and show that one of the three criteria is met for each one.

\begin{remark}
    \textit{Our analysis is the first to characterize second-order stationarity within a group. In contrast to previous work \cite{fearnley2022complexity,kontogiannis2024computational}, the primary challenge here is that analyzing only subregions of} $\text{Box}(0,0)$ \textit{within constant intervals may be insufficient to establish the non-existence of approximate SOSPs in certain groups (Groups A--C), due to fact that the biquintic interpolation can generate strict saddle points arbitrarily close to the boundaries of the box. Consequently, for such groups, we analyze subregions with intervals defined infinitesimally close to the boundaries of the box; specifically we parameterize these intervals with the magnitudes of the colors (which scale polynomially in $N$) of the corners of the box.}
\end{remark}

\begin{remark}
    \textit{By setting the second-order partial derivatives $f_{xx}$ and $f_{yy}$ equal to $-1/2$ at the corners of the box, our biquintic interpolation effectively ensures that the Hessian's minimum eigenvalue becomes significantly negative near the boundaries. This facilitates our analysis by allowing the $\lambda$-criterion to remain valid in critical areas close to the boundaries of the box where strict saddle points may emerge.}
\end{remark}

\begin{proof}[Proof of Lemma \ref{lem:new_groups}]
In the following sections, we will study each group separately.
For each group we provide the analytical expression for the gradient and the minimal eigenvalue as a function of variables $x$ and $y$, as well as the values of $f$ on the four corners. 
For the purpose of the analysis we do not need the exact values of $f$ on the corners. We only need to identify the different colors appearing in the box. 
For each color, we pick a corner of this color and use its value as reference, denoted by $B,M,R,G,O$.
Then, the value of other corners of the same color is determined by applying the right offset with respect to the reference. For example, a group might have two blue corners and the value of $f$ will be written as $B$ for the reference corner and $B-1$ for the other corner. We make this abstraction regarding the function values because the exact values of the reference points do not matter for the analysis. We only use the fact that different colors are strictly separated from each other with a sufficient gap.

Given that bounding the resulted expressions can be a significant challenge, we employ Wolfram Mathematica, to formally verify different inequalities that are used in the analysis. 
More specifically, we make use of the $\textit{Resolve}$ function which verifies formally the truth value of the estimate in each subdomain. This function returns a True value only if it is able to complete a proof. The analysis of the groups can not be fully automated because, as queries become more complicated, the $\textit{Resolve}$ function is no longer able to terminate in a reasonable time limit of a few minutes. 
Thus, we still need to perform different algebraic tricks and divide the group in many different subregions, depending on which terms are dominant. To help navigate through the analysis, for each group, we provide a diagram illustrating the specific criterion used to establish the non-existence of approximate SOSPs in each specific subregion of the $\text{Box}(0,0)$. 
In the following anonymous link, we provide the Wolfram Mathematica files containing all the queries used for proving the non-existence of approximate SOSPs in Groups A--G and 1--4: \href{https://anonymous.4open.science/r/sosp-5192/README.md}{https://anonymous.4open.science/r/sosp-5192/README.md}.


\setlength{\parindent}{0cm}

\allowdisplaybreaks

\subsubsection{Group A}

\begin{figure}[h]
    \centering
    \includegraphics[scale=0.28]{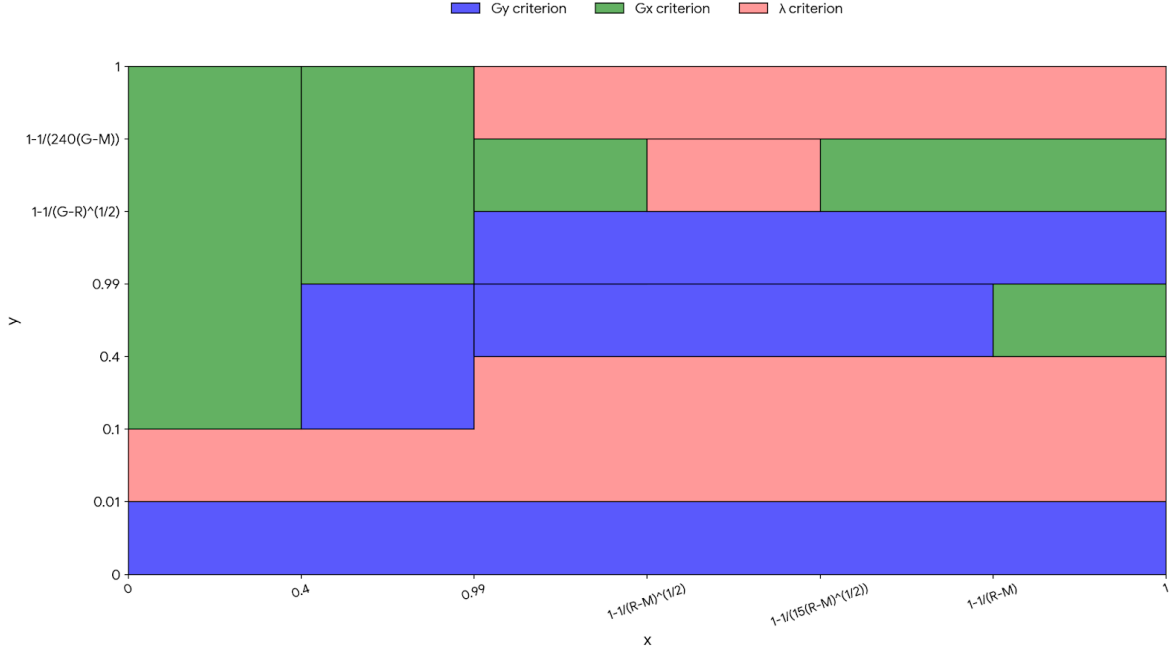}
    \caption{The graphical proof that no  $\epsilon_0$-SOSP exists in Group A.}
    \label{fig:Group-A}
\end{figure}

\begin{figure}[h]
    \centering
    \subfigure{
    \includegraphics[width=0.95\textwidth]{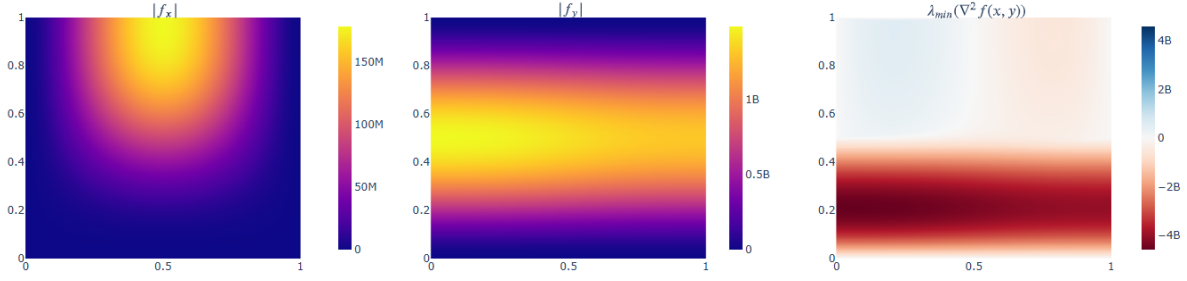}
        \label{fig:right_plot}
    }
    \caption{Numerical visualization of the SOSP criteria in Group A.}
    \label{fig:main_figure}
\end{figure}

The expression for the derivative with respect to $x$ is the following:
\begin{align*}
Gx =& x^4 y^5 (-180 M + 180 R - 180) + x^4 y^4 (450 M - 450 R + 450) + x^4 y^3 (-300 M + 300 R - 300) + 30 x^4 \\
&+  x^3 y^5 (360 M - 360 R + 372) + x^3 y^4 (-900 M + 900 R - 930) + x^3 y^3 (600 M - 600 R + 620) - 61 x^3  \\
&+ x^2 y^5 (-180 M + 180 R - 198) + x^2 y^4 (450 M - 450 R + 495) +  x^2 y^3 (-300 M + 300 R - 330) \\
& + 63 x^2/2 - x/2 + 3 y^5 - 15 y^4/2 +  5 y^3
\end{align*}

Moreover, we write $Gx$ in factorized compact form as $Gx=term6 \cdot (R - M) + term7$, where:\\\\
$$term6 = 30 x^2 y^3 (x - 1)^2 (6 y^2 - 15 y + 10)$$
\begin{align*}
    term7 = &-(360 x^4 y^5 - 900 x^4 y^4 + 600 x^4 y^3 - 60 x^4 - 
       744 x^3 y^5 + 1860 x^3 y^4- 1240 x^3 y^3 + 122 x^3 \\
       &+ 
       396 x^2 y^5 - 990 x^2 y^4 + 660 x^2 y^3 - 63 x^2 + x - 6 y^5 + 
       15 y^4 - 10 y^3)/2
\end{align*} 

The expression for the derivative with respect to $y$ is the following:
\begin{align*}
Gy =& x^5 y^4 (-180 M + 180 R - 180) + x^5 y^3 (360 M - 360 R + 360) +  x^5 y^2 (-180 M + 180 R - 180)  \\
&+ x^4 y^4 (450 M - 450 R + 465) + x^4 y^3 (-900 M + 900 R - 930) + x^4 y^2 (450 M - 450 R + 465)  \\
&+ x^3 y^4 (-300 M + 300 R - 330) + x^3 y^3 (600 M - 600 R + 660) + x^3 y^2 (-300 M + 300 R - 330)  \\
&+ 15 x y^4 - 30 x y^3 + 15 x y^2 + y^4 (-30 G + 30 M + 75/2) + y^3 (60 G - 60 M - 77)  \\
&+ y^2 (-30 G + 30 M + 81/2) - y/2 - 1/2
\end{align*}

Moreover, we write $Gy$ in factorized compact form as $Gy=(R-Y)term8+(M-Y)term9+term10$, where:

$$term8 = -30 y^2 (y - 1)^2,$$ 
$$term9 = 30 x^3 y^2 (y - 1)^2 (6 x^2 - 15 x + 10),$$

and

 \begin{align*}
     term10 = &-(y - 1) (360 x^5 y^3 - 360 x^5 y^2 - 930 x^4 y^3 + 
       930 x^4 y^2 + 660 x^3 y^3 -660 x^3 y^2 - 30 x y^3  \\
       &+ 
       30 x y^2 - 75 y^3 + 79 y^2 - 2 y - 1)/2
 \end{align*}

\begin{figure}[t]
    \centering
    \subfigure{
        \includegraphics[width=0.35\textwidth]{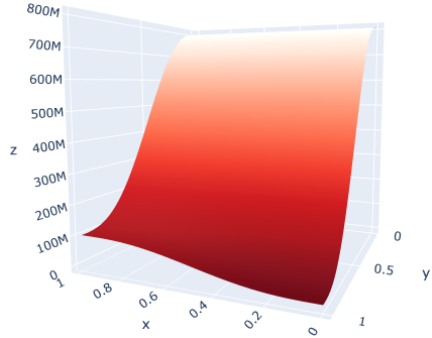}
        \label{fig:left_plot}
    }
    \caption{Numerical visualization of the function value in Group A}
    \label{fig:main_figure}
\end{figure}

The minimal eigenvalue can be written in factored compact form as \begin{align*}
\lambda & =  term1 \cdot (G - M) + term2 \cdot (R - M) + term3 - 
    \frac{1}{2}\cdot Sqrt[term41 \cdot (G - M)^2 \\ & + term42 \cdot (G - M) \cdot (R - M) + term43 \cdot (R - M)^2 + term44 \cdot (G - M) + term45 \cdot (R - M) + term5]
\end{align*}
where

$$ term1 = 30 y (1 - y) (2 y - 1)$$
\begin{align*}
term2 = &30 x y (12 x^4 y^2 - 18 x^4 y + 6 x^4 - 30 x^3 y^2 + 45 x^3 y - 
      15 x^3 + 12 x^2 y^4 - 30 x^2 y^3 \\
      & + 40 x^2 y^2 - 30 x^2 y + 
      10 x^2 - 18 x y^4 + 45 x y^3 - 30 x y^2 + 6 y^4 - 15 y^3 + 
      10 y^2)
\end{align*}


\begin{align*}
     term3 = &x^5 (-360 y^3 + 540 y^2 - 180 y) + 
    x^4 (930 y^3 - 1395 y^2 + 465 y)  \\
    &+ 
    x^3 (-360 y^5 + 900 y^4 - 1260 y^3 + 990 y^2 - 330 y + 60)+ 
    x^2 (558 y^5 - 1395 y^4 + 930 y^3 - 183/2)  \\
    &+ 
    x (-198 y^5 + 495 y^4 - 300 y^3 - 45 y^2 + 15 y + 63/2) + 
    75 y^3 - 231 y^2/2 + 81 y/2 - 1/2
\end{align*}

 $$term41 = 3600 (-1 + y)^2 y^2 (-1 + 2 y)^2$$

\begin{align*}
   term42 = 
   &7200 x (-1 + y) 
    y^2 (-x + y) (-1 + 2 y) (10 x - 15 x^2 + 6 x^3 + 10 y - 45 x y + 
      51 x^2 y - 18 x^3 y \\
      &- 15 y^2 + 51 x y^2 - 48 x^2 y^2 + 
      12 x^3 y^2 + 6 y^3 - 18 x y^3 + 12 x^2 y^3)
\end{align*} 

 \begin{align*}
 term43 = 
   &3600 x^2 
    y^2 (100 x^4 - 300 x^5 + 345 x^6 - 180 x^7 + 36 x^8 - 600 x^4 y + 
      1800 x^5 y - 2070 x^6 y + 1080 x^7 y \\
      & - 216 x^8 y  + 700 x^2 y^2 - 2700 x^3 y^2 + 5280 x^4 y^2 - 6540 x^5 y^2 + 
      5145 x^6 y^2 - 2340 x^7 y^2 + 468 x^8 y^2  \\
      &- 2700 x^2 y^3 + 10350 x^3 y^3 - 16410 x^4 y^3 + 13680 x^5 y^3 - 6660 x^6 y^3 + 
      2160 x^7 y^3 - 432 x^8 y^3 + 100 y^4  \\
      & - 600 x y^4 + 5280 x^2 y^4 - 16410 x^3 y^4 + 23118 x^4 y^4 - 15984 x^5 y^4 + 
      5076 x^6 y^4 - 720 x^7 y^4 + 144 x^8 y^4  \\
      &- 300 y^5 + 1800 x y^5 - 6540 x^2 y^5 + 13680 x^3 y^5 - 15984 x^4 y^5 + 
      9792 x^5 y^5 - 2448 x^6 y^5 + 345 y^6  \\
      &- 2070 x y^6 + 5145 x^2 y^6 - 6660 x^3 y^6 + 5076 x^4 y^6 - 2448 x^5 y^6 + 
      612 x^6 y^6 - 180 y^7 + 1080 x y^7  \\
      &- 2340 x^2 y^7 + 2160 x^3 y^7 - 720 x^4 y^7 + 36 y^8 - 216 x y^8 + 468 x^2 y^8 - 
      432 x^3 y^8 + 144 x^4 y^8)
\end{align*}

 \begin{align*}
 term44 = &-360 (-1 + y) y (-1 + 2 y) (-21 x + 61 x^2 - 40 x^3 + 27 y + 10 x y - 
      220 x^3 y + 310 x^4 y - 120 x^5 y - 77 y^2  \\
      &- 30 x y^2 + 660 x^3 y^2 - 930 x^4 y^2 + 360 x^5 y^2 + 50 y^3 + 240 x y^3 - 
      620 x^2 y^3 - 40 x^3 y^3 + 620 x^4 y^3 - 240 x^5 y^3  \\
      & - 330 x y^4 + 930 x^2 y^4 - 600 x^3 y^4 + 132 x y^5 - 
      372 x^2 y^5 + 240 x^3 y^5)
 \end{align*}

 \begin{align*}
 term45 = &-360 x 
    y (210 x^3 - 925 x^4 + 1441 x^5 - 966 x^6 + 240 x^7 - 270 x^2 y - 
      325 x^3 y + 2763 x^4 y - 2183 x^5 y  \\
      &- 3502 x^6 y + 6450 x^7 y - 
      3660 x^8 y + 720 x^9 y - 210 x y^2 + 2820 x^2 y^2 - 
      4000 x^3 y^2 + 618 x^4 y^2  \\
      &- 10758 x^5 y^2 + 36468 x^6 y^2 - 
      42540 x^7 y^2 + 21960 x^8 y^2 - 4320 x^9 y^2 + 270 y^3 - 
      695 x y^3 - 4370 x^2 y^3 \\
      & + 23000 x^3 y^3 - 61290 x^4 y^3 + 
      111700 x^5 y^3 - 136880 x^6 y^3 + 106410 x^7 y^3 - 
      47580 x^8 y^3 + 9360 x^9 y^3  \\
      &- 1175 y^4 + 4149 x y^4 + 
      384 x^2 y^4 - 64050 x^3 y^4 + 222276 x^4 y^4 - 342120 x^5 y^4 + 
      281760 x^6 y^4  \\
      &- 136440 x^7 y^4 + 43920 x^8 y^4 - 
      8640 x^9 y^4 + 1817 y^5 - 4341 x y^5 - 8696 x^2 y^5 + 
      115280 x^3 y^5 - 348880 x^4 y^5  \\
      &+ 480168 x^5 y^5 - 
      326208 x^6 y^5 + 102600 x^7 y^5 - 14640 x^8 y^5 + 
      2880 x^9 y^5 - 1212 y^6 - 2244 x y^6 + 35016 x^2 y^6  \\
      &- 
      138840 x^3 y^6 + 288240 x^4 y^6 - 331104 x^5 y^6 + 
      199104 x^6 y^6 - 48960 x^7 y^6 + 300 y^7 + 6510 x y^7 - 
      43320 x^2 y^7 \\
      & + 107610 x^3 y^7 - 137940 x^4 y^7 + 
      104376 x^5 y^7 - 49776 x^6 y^7 + 12240 x^7 y^7 - 3960 x y^8 + 
      23040 x^2 y^8 \\
      & - 48600 x^3 y^8 + 43920 x^4 y^8 - 14400 x^5 y^8 + 
      792 x y^9 - 4608 x^2 y^9 + 9720 x^3 y^9 - 8784 x^4 y^9 + 
      2880 x^5 y^9)
 \end{align*}

and

 \begin{align*}
 term5 = & 
   9 (441 x^2 - 2562 x^3 + 5401 x^4 - 4880 x^5 + 1600 x^6 - 
      1134 x y + 2874 x^2 y - 940 x^3 y + 8440 x^4 y - 39860 x^5 y  \\
      &+ 
      60460 x^6 y - 39440 x^7 y + 9600 x^8 y + 729 y^2 + 3774 x y^2 - 
      8034 x^2 y^2 - 9380 x^3 y^2 - 12980 x^4 y^2 \\
      & + 119300 x^5 y^2 - 
      135380 x^6 y^2 - 18080 x^7 y^2 + 120100 x^8 y^2 - 
      74400 x^9 y^2 + 14400 x^{10} y^2 - 4158 y^3  \\
      &- 5260 x y^3 - 
      4580 x^2 y^3 + 120840 x^3 y^3 - 164720 x^4 y^3 + 
      19400 x^5 y^3 - 187080 x^6 y^3 + 739520 x^7 y^3  \\
      &- 
      874200 x^8 y^3 + 446400 x^9 y^3 - 86400 x^{10} y^3 + 8729 y^4 + 
      18580 x y^4 - 27120 x^2 y^4 - 192720 x^3 y^4  \\
      &+ 650360 x^4 y^4 - 
      1375000 x^5 y^4 + 2362400 x^6 y^4 - 2864400 x^7 y^4 + 
      2199700 x^8 y^4 - 967200 x^9 y^4  \\
      &+ 187200 x^{10} y^4 - 8100 y^5 - 
      57780 x y^5 + 171956 x^2 y^5 + 28088 x^3 y^5 - 
      1496504 x^4 y^5 + 4763400 x^5 y^5  \\
      &- 7130200 x^6 y^5 + 
      5803200 x^7 y^5 - 2794800 x^8 y^5 + 892800 x^9 y^5 - 
      172800 x^{10} y^5 + 3100 y^6 + 81948 x y^6  \\
      &- 224468 x^2 y^6 - 
      110680 x^3 y^6 + 2515920 x^4 y^6 - 7410680 x^5 y^6 + 
      9967880 x^6 y^6 - 6656320 x^7 y^6  \\
      &+ 2074000 x^8 y^6 - 
      297600 x^9 y^6 + 57600 x^{10} y^6 - 400 y^7 - 53328 x y^7 + 
      36768 x^2 y^7 + 681760 x^3 y^7  \\
      & - 2949360 x^4 y^7 + 
      6071040 x^5 y^7 - 6855040 x^6 y^7 + 4047360 x^7 y^7 - 
      979200 x^8 y^7 + 100 y^8 + 13200 x y^8  \\
      &+ 121860 x^2 y^8 - 
      907240 x^3 y^8 + 2250420 x^4 y^8 - 2856960 x^5 y^8 + 
      2145760 x^6 y^8 - 1011840 x^7 y^8  \\
      &+ 244800 x^8 y^8 - 
      87120 x^2 y^9 + 491040 x^3 y^9 - 1008720 x^4 y^9 + 
      892800 x^5 y^9 - 288000 x^6 y^9  \\
      &+ 17424 x^2 y^{10} - 
      98208 x^3 y^{10} + 201744 x^4 y^{10} - 178560 x^5 y^{10} + 
      57600 x^6 y^{10})
 \end{align*}

\begin{enumerate}
\item Using Wolfram Mathematica we prove that $term1 \leq -0.001, term2 < 1000 $, $term3< 1000$ for all \textcolor{brown}{$0 \leq x \leq 1 ,\ \ 0.01 \leq y \leq 0.4$}. Thus $\lambda < -0.001(G-R)+ 1000(R-M)+1000<-0.001G+ 1001R+1000<-1$ using the assumption that $G>10^7R$.


\item Using Wolfram Mathematica we prove that $term6 \geq 0, term7  \geq 0.0001 $, and thus $Gx \geq 0.0001$ for all \textcolor{brown}{$0 \leq x \leq 0.4 ,\ \ 0.1 \leq y \leq 1$}.


\item Using Wolfram Mathematica we prove that $term8 \leq -0.001 $,  for all \textcolor{brown}{$0.4 \leq x \leq 0.99 ,\ \ 0.1 \leq y \leq 0.99$}. Thus $Gy< -0.001 (R-M)+term7<-1000$ using the assumption that $R-M> 10^{40}$.


\item Using Wolfram Mathematica we prove that $30(1-1/(R-M) - 1)^2\cdot 10\cdot(R-M) \leq 0.0001, term7 \leq -0.1 $. We upper bound $term6$ using monotonicity and obtain $Gx=term6 \cdot (R-M) +term7<30 (1-1/(R-M) - 1)^2 \cdot 10 \cdot(R-M)+term7< -0.001$  for all \textcolor{brown}{$1-1/(R-M) \leq x \leq 1 ,\ \ 0.4 \leq y \leq 1$}, using the assumption that $R-M> 10^{40}$.

\item Using Wolfram Mathematica we prove that $term9 \geq 0, term8+term9 \leq 0, term10 \leq -0.001 $,  for all \textcolor{brown}{$0 \leq x \leq 1 ,\ \ 0 \leq y \leq 0.01$}. Thus $Gy=term8 \cdot (G-M)+term9 \cdot (R-M)+term10 \leq term8 \cdot (G-M)+term9 \cdot (G-M)+term10 =(G-M)(term8+term9)+term10 \leq 0 - 0.001$ using the assumption that $G-M> R-M$.

\item Using Wolfram Mathematica we prove that $term6\cdot(R-M) > 10, term7 > -9 $ and thus $Gx>1$  for all \textcolor{brown}{$ 0.99 \leq x \leq 1-1/(R-M)^{1/2} ,\ \ 0.99 \leq y \leq 1$}, using the assumption that $R-M> 10^{40}$.

\item Using Wolfram Mathematica we prove that $term6\cdot(R-M) > 2, term7 > -1 $ and thus $Gx>1$  for all \textcolor{brown}{$ 0.4 \leq x \leq 0.99 ,\ \ 0.99 \leq y \leq 1$}, using the assumption that $R-M> 10^{40}$.

 \item 
For  \textcolor{brown}{$1-1/(R-M)^{1/2} \leq x \leq 1-1/(15(R-M)^{1/2})$, $1-1/(G-R)^{1/2}  \leq y \leq 1-1/(240(G-M))$}, $\lambda $ can be upper bounded as follows:
\begin{align*}
   \lambda & =  term1 \cdot (G-M)+term2 \cdot (R-M)+term3 \\
     &-0.5 \cdot Sqrt[term41 \cdot (G-M)^2+term42 \cdot (G-M) \cdot (R-M)+term43 \cdot (R-M)^2\\
     &+term44 \cdot (G-M)+term45 \cdot (R-M)+term5] \\
     & \leq    (term1-0.5 \sqrt{term41}) \cdot (G-M)+term3 \\
     & \leq   -0.001 
\end{align*}
where we used the fact that for all $x,y$ in our region of interest it holds that  $term2 \leq 0$, $term42 \geq 0$ , $term43 \cdot (R-M)^2+term44 \cdot (G-M)+term45 \cdot (R-M)+term5 \geq 0$, $term3 \leq -0.001$ and $ term1-0.5 \sqrt{term41} \leq 0$ under the assumption that $G-R>10^{40}$, $R-M> 10^{40}$ and $G-R>10000(R-M)$. The above facts were proved using Wolfram Mathematica.  Note that to prove $term42 \geq 0$ we used the observation that $y>x$ in our region of interest. 


 \item For  \textcolor{brown}{$0.99 \leq x \leq 1$, $1-1/(240(G-M))  \leq y \leq 1$}, $\lambda $ can be upper bounded as follows:
 
\begin{align*}
   \lambda & \leq    term1 \cdot (G-M)+term2 \cdot (R-M)+term3 \\
     & \leq    30 \cdot y \cdot (1-y) \cdot (2 \cdot y-1)(G-M)-1/2 \\
     & \leq   60 \cdot (1-y)(G-M)-1/2\\
     & \leq  60/240-1/2 \\
     & \leq  -1/4 
\end{align*}
where we used the fact that for all $x,y$ in our region of interest it holds that  $term2 \leq 0$ and $term3 \leq -0.5$ under the assumption that $G-R>10^{40}$ and $R-M> 10^{40}$. The above facts were proved using Wolfram Mathematica. 


 \item For  \textcolor{brown}{$0.99 \leq x \leq 1$, $0.99 \leq y \leq 1-1/\sqrt{G-R}$}, $Gy$ can be upper bounded as follows:
 
\begin{align*}
Gy & =term8 \cdot (G-M)+term9 \cdot (R-M)+term10  \\
   & \leq term8 \cdot (G-M)-term8 \cdot (R-M)+term10  \\
   & = term8 \cdot (G-R)+term10  \\
   & =-30 \cdot y^2 \cdot (y-1)^2 \cdot (G-R)+term10  \\
   & \leq -29 \cdot (y-1)^2 \cdot (G-R)+0.01 \\
   & \leq -28
\end{align*}
where we used the fact that for all $x,y$ in our region of interest it holds that  $0 \leq term9 \leq-term8$ and $term10 \leq 0.01$ under the assumption that $G-R>10^{40}$ and $R-M> 10^{40}$. The above facts were proved using Wolfram Mathematica. 


 \item For  \textcolor{brown}{$0.99 \leq x \leq 1$, $0.4 \leq y \leq 
 0.99$}, $Gy$ can be upper bounded as follows:
 
\begin{align*}
Gy & =term8 \cdot (G-M)+term9 \cdot (R-M)+term10  \\
   & \leq -0.001 \cdot (G-M)+10 \cdot (R-M)+ 10  \\
   & \leq -1
\end{align*}
where we used the fact that for all $x,y$ in our region of interest it holds that  $0 \leq term9 \leq-term8$ and $term10 \leq 0.01$ under the assumption that $G-R>10^{40}$, $R-M> 10^{40}$ and $(G-M)>10^5(R-M)$. The above facts were proved using Wolfram Mathematica. 

 \item For  \textcolor{brown}{$0.99 \leq x \leq 1$, $0.99 \leq y \leq 1-1/\sqrt{G-R}$}, $Gx$ can be upper bounded as follows:
 
\begin{align*}
 Gx &=term6 \cdot (R-M)+term7 \\
 &\leq 30 \cdot x^2 \cdot y^3 \cdot (x-1)^2 \cdot (6 \cdot y^2-15 \cdot y+10) \cdot (R-M)-0.4\\
 &\leq 30 \cdot x^2 \cdot y^3 \cdot (x-1)^2 \cdot 1.5 \cdot (R-M)-0.4 \\
 &\leq 45 \cdot (x-1)^2 \cdot (R-M)-0.4\\
 &\leq 1/5-0.4=-0.2
\end{align*}
where we used the fact that for all $x,y$ in our region of interest it holds that  $6 \cdot y^2-15 \cdot y+10 \leq 1.5 $ and $term7 \leq -0.4$ under the assumption that $G-R>10^{40}$ and $R-M> 10^{40}$. The above facts were proved using Wolfram Mathematica. 


\end{enumerate}

\setlength{\parindent}{0cm}

\allowdisplaybreaks

\subsubsection{Group B}

\begin{figure}[h]
    \centering
    \includegraphics[scale=0.3]{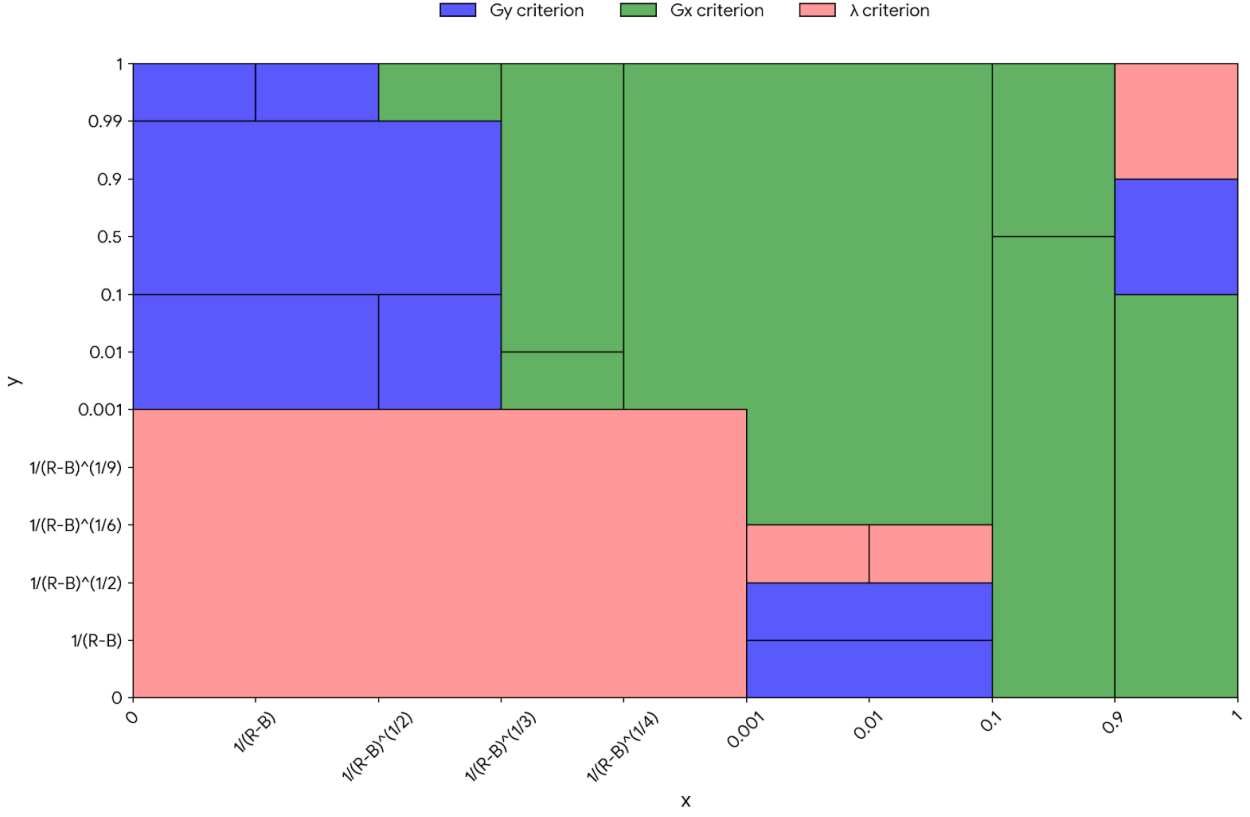}
    \caption{The graphical proof that no  $\epsilon_0$-SOSP exists in Group B.}
    \label{fig:Group-B}
\end{figure}

The expression for the derivative with respect to $x$ is the following:
\begin{align*}
Gx = &x(360x^3y^5(R - B - 1) + 900x^3y^4(-R + B + 1) + 
      600x^3y^3(R - B - 1) + 30x^3y + 45x^3 \\
     & +24x^2y^5(-30R + 30B + 31) + 
      60x^2y^4(30R - 30B - 31) + 
      40x^2y^3(-30R + 30B + 31) - 
      60x^2y - 94x^2 + \\
      &36xy^5(10R - 10B - 11) + 90xy^4(-10R + 10B + 11) + 
      60xy^3(10R - 10B - 11) + 30xy + 51x - 1)/2
\end{align*}

We define $Q(y) = 6y^5 - 15y^4 + 10y^3$. Then $Gx$ can be simplified as follows:
\begin{align*}
    Gx &= x (60x(x - 1)^2 Q(y)(R - B) + 30 xy(x - 1)^2 - 
      2x Q(y) (30x^2 - 62x + 33) + 45 x^3 - 94x^2 + 51 x - 1)/2\\
      &\geq x (60x(x-1)^2 Q(y)(R-B)-2x(30x^2-62x+33)+45x^3-94x^2+51x-1)/2
\end{align*}
where we used the fact that $0\leq Q(y)\leq 1$ for all $y\in[0,1]$ and $30x^2 - 62x + 33>0$  for all $x\in[0,1]$.
Moreover, we write $Gx$ in factorized compact form as $(R-B)\cdot term6+term7$.\\\\
The expression for the derivative with respect to $y$ is the following:
\begin{align*}
Gy = &x^5y^4(180R - 180B - 180) + x^5y^3(-360R + 360B + 360) + 
   x^5y^2(180R - 180B - 180) + 3x^5 \\
   & +x^4y^4(-450R + 450B + 465) + x^4y^3(900R - 900B - 930) + 
   x^4y^2(-450R + 450B + 465) - 15x^4/2 \\ 
  &+ x^3y^4(300R - 300B - 330) + x^3y^3(-600R + 600B + 660) + 
   x^3y^2(300R - 300B - 330) + 5x^3 \\
   &+ y^4(-30M + 30B + 45) + y^3(60M - 60B - 91) + y^2(-30M + 30B + 93/2) - y/2 - 1/2
\end{align*}

We define the following quantities to factorize $Gy$:
$C(y) = y^2(y - 1)^2$ and 
$P(x) = 6x^2 - 15x + 10$.
Then $Gy$ can be written as:
\begin{align*}
    Gy = &(R - B)(30x^3C(y)P(x)) + (M - B)(-30C(y)) + 
   C(y)(-180x^5 + 465x^4 - 330x^3 + 45) \\
   &+ (3x^5 - 15/2x^4 + 
     5x^3) + (-y^3 + 3/2y^2 - y/2 - 1/2) 
\end{align*}

Moreover, we write $Gy$ in factorized compact form as $(R-B)\cdot term8+(M-B)\cdot term9 +term10(x,y)$.\\\\


The expression for the minimal eigenvalue is the following:
\begin{align*}
\lambda = &x^5y^3(360R - 360B - 360) + x^5y^2(-540R + 540B + 540) - 
   180x^5y(-R + B + 1)\\
   &+ x^4y^3(-900R + 900B + 930) + 
   x^4y^2(1350R - 1350B - 1395) + 15x^4y(-30R + 30B + 31)\\
   &+ x^3y^5(360R - 360B - 360) + x^3y^4(-900R + 900B + 900) + 
   x^3y^3(600R - 600B - 660)\\
   &+ x^3y^3(600R - 600B - 600) + 
   x^3y^2(-900R + 900B + 990) - 30x^3y(-10R + 10B + 11)\\
   &+ 30x^3y + 45x^3 + x^2y^5(-540R + 540B + 558) + 
   x^2y^4(1350R - 1350B - 1395)\\
   &+ x^2y^3(-900R + 900B + 930) - 45x^2y - 141x^2/2 - 
   18xy^5(-10R + 10B + 11) \\
   &+ 45xy^4(-10R + 10B + 11) - 
   30xy^3(-10R + 10B + 11) + 15xy + 51x/2 + 
   y^3(-60M + 60B + 90)\\
   &+ y^2(90M - 90B - 273/2) + 
   3y(-20M + 20B + 31)/2\\
   &- Sqrt[(900x^4(60x^2y^4(-R + B + 1) - 120x^2y^3(-R + B + 1) + 
           60x^2y^2(-R + B + 1)\\
           &\qquad\quad- x^2 - 
           4xy^4(-30R + 30B + 31) + 
           8xy^3(-30R + 30B + 31) - 
           4xy^2(-30R + 30B + 31) + 2x \\
           &\qquad\quad + 6y^4(-10R + 10B + 11) - 12y^3(-10R + 10B + 11) + 
           6y^2(-10R + 10B + 11) - 1)^2 \\
           &\qquad\quad - (1440x^5
           y^3(-R + B + 1) - 2160x^5y^2(-R + B + 1) + 
          720x^5y(-R + B + 1)\\
          &\qquad\quad- 120x^4y^3(-30R + 30B + 31) + 
          180x^4y^2(-30R + 30B + 31) - 
          60x^4y(-30R + 30B + 31)\\
          &\qquad\quad+ 240x^3y^3(-10R + 10B + 11) - 
          360x^3y^2(-10R + 10B + 11) + 
          120x^3y(-10R + 10B + 11)\\
          &\qquad\quad- 120y^3(-2M + 2B + 3) + 
          6y^2(-60M + 60B + 91) - 6y(-20M + 20B + 31)\\
           &\qquad\quad+ 1)(1440x^3y^5(-R + B + 1) - 3600x^3y^4(-R + B + 1) + 
          2400x^3y^3(-R + B + 1)\\
           &\qquad\quad- 120x^3y - 180x^3 - 
          72x^2y^5(-30R + 30B + 31) + 
          180x^2y^4(-30R + 30B + 31)\\
          &\qquad\quad- 120x^2y^3(-30R + 30B + 31) + 180x^2y + 282x^2 + 
          72xy^5(-10R + 10B + 11)\\
           &\qquad\quad - 180xy^4(-10R + 10B + 11) + 
          120xy^3(-10R + 10B + 11) - 60xy - 102x + 
          1)\\
           &\qquad\quad+ (720x^5y^3(-R + B + 1) - 
          1080x^5y^2(-R + B + 1) + 360x^5y(-R + B + 1)\\
          &\qquad\quad- 60x^4y^3(-30R + 30B + 31) + 
          90x^4y^2(-30R + 30B + 31) - 
          30x^4y(-30R + 30B + 31)\\
          &\qquad\quad+ 720x^3y^5(-R + B + 1) - 
          1800x^3y^4(-R + B + 1) + 
          120x^3y^3(-10R + 10B + 11)\\
          &\qquad\quad+ 1200x^3y^3(-R + B + 1) - 
          180x^3y^2(-10R + 10B + 11) + 
          60x^3y(-10R + 10B + 11)\\
          &\qquad\quad- 60x^3y - 90x^3 - 
          36x^2y^5(-30R + 30B + 31) + 
          90x^2y^4(-30R + 30B + 31)\\
          &\qquad\quad- 60x^2y^3(-30R + 30B + 31) + 90x^2y + 141x^2 + 
          36xy^5(-10R + 10B + 11)\\
          &\qquad\quad - 90xy^4(-10R + 10B + 11) + 
          60xy^3(-10R + 10B + 11) - 30xy - 51x\\
          &\qquad\quad- 60y^3(-2M + 2B + 3) + 3y^2(-60M + 60B + 91) - 3y(-20M + 20B + 31) + 1)^2)]/2 - 1/2
\end{align*}
To factorize the minimum eigenvalue of the Hessian, $\lambda$, we define the following quantities:
$$W(y) = 1 - 3y + 2y^2,\quad T(y) = 10 - 15y + 6y^2$$.


Then, the minimal eigenvalue can be factored as 

\begin{align*}
   \lambda = & term1 \cdot (R-B)+term2 \cdot (M-B)+term3 -Sqrt[ term41 \cdot (R-B)^2 \\ & + term42 \cdot (R-B) \cdot (M-B)+term43 \cdot (M-B)^2+term44 \cdot (R-B)+term45 \cdot (M-B)+term5]/2 
\end{align*}

where
\medskip

$$term1 = 30x y(-15x^3W(y) + 6x^4W(y) + y^2T(y) - 3xy^2T(y) + 
     2x^2(5 - 15y + 20y^2 - 15y^3 + 6y^4)),$$
     
$$term2 = -30 y W(y)$$ 
\noindent

\begin{align*}
term3 = &(1/2)(-1 + 93y - 273y^2 + 180y^3 + 930x^4yW(y) - 
     360x^5yW(y) + x(51 + 30y - 660y^3 + 990y^4 - 396y^5)\\
     &- 30x^3(-3 + 20y - 66y^2 + 84y^3 - 60y^4 + 24y^5) + 
     3x^2(-47 - 30y + 620y^3 - 930y^4 +  372y^5))
\end{align*}

\begin{align*}
term41 &= 3600 x^2 y^2 \big( 144 x^8 y^4 - 432 x^8 y^3 + 468 x^8 y^2 - 216 x^8 y + 36 x^8 \\
&\quad - 720 x^7 y^4 + 2160 x^7 y^3 - 2340 x^7 y^2 + 1080 x^7 y - 180 x^7 + 612 x^6 y^6 \\
&\quad - 2448 x^6 y^5 + 5076 x^6 y^4 - 6660 x^6 y^3 + 5145 x^6 y^2 - 2070 x^6 y + 345 x^6 \\
&\quad - 2448 x^5 y^6 + 9792 x^5 y^5 - 15984 x^5 y^4 + 13680 x^5 y^3 - 6540 x^5 y^2 + 1800 x^5 y \\
&\quad - 300 x^5 + 144 x^4 y^8 - 720 x^4 y^7 + 5076 x^4 y^6 - 15984 x^4 y^5 + 23118 x^4 y^4 \\
&\quad - 16410 x^4 y^3 + 5280 x^4 y^2 - 600 x^4 y + 100 x^4 - 432 x^3 y^8 + 2160 x^3 y^7 \\
&\quad - 6660 x^3 y^6 + 13680 x^3 y^5 - 16410 x^3 y^4 + 10350 x^3 y^3 - 2700 x^3 y^2 + 468 x^2 y^8 \\
&\quad - 2340 x^2 y^7 + 5145 x^2 y^6 - 6540 x^2 y^5 + 5280 x^2 y^4 - 2700 x^2 y^3 + 700 x^2 y^2 \\
&\quad - 216 x y^8 + 1080 x y^7 - 2070 x y^6 + 1800 x y^5 - 600 x y^4 + 36 y^8 \\
&\quad - 180 y^7 + 345 y^6 - 300 y^5 + 100 y^4 \big) \\[1em]
term42 &= -7200 x y^2 (x - y)(y - 1)(2y - 1) \big( 12 x^3 y^2 - 18 x^3 y + 6 x^3 \\
&\quad + 12 x^2 y^3 - 48 x^2 y^2 + 51 x^2 y - 15 x^2 - 18 x y^3 + 51 x y^2 - 45 x y + 10 x \\
&\quad + 6 y^3 - 15 y^2 + 10 y \big) \\[1em]
term43 &= 3600 y^2 (y - 1)^2 (2y - 1)^2 \\[1em]
term44 &= -360 x y \big( 2880 x^9 y^5 - 8640 x^9 y^4 + 9360 x^9 y^3 - 4320 x^9 y^2 + 720 x^9 y \\
&\quad - 14640 x^8 y^5 + 43920 x^8 y^4 - 47580 x^8 y^3 + 21960 x^8 y^2 - 3660 x^8 y + 12240 x^7 y^7 \\
&\quad - 48960 x^7 y^6 + 102600 x^7 y^5 - 136440 x^7 y^4 + 106350 x^7 y^3 - 42420 x^7 y^2 + 6450 x^7 y \\
&\quad + 180 x^7 - 49776 x^6 y^7 + 199104 x^6 y^6 - 326208 x^6 y^5 + 281760 x^6 y^4 - 136640 x^6 y^3 \\
&\quad + 35976 x^6 y^2 - 3484 x^6 y - 732 x^6 + 2880 x^5 y^9 - 14400 x^5 y^8 + 104376 x^5 y^7 \\
&\quad - 331104 x^5 y^6 + 480168 x^5 y^5 - 342120 x^5 y^4 + 111400 x^5 y^3 - 10116 x^5 y^2 - 2211 x^5 y \\
&\quad + 1107 x^5 - 8784 x^4 y^9 + 43920 x^4 y^8 - 137940 x^4 y^7 + 288240 x^4 y^6 - 348880 x^4 y^5 \\
&\quad + 222276 x^4 y^4 - 61230 x^4 y^3 + 444 x^4 y^2 + 2739 x^4 y - 725 x^4 + 9720 x^3 y^9 \\
&\quad - 48600 x^3 y^8 + 107670 x^3 y^7 - 139080 x^3 y^6 + 115580 x^3 y^5 - 64110 x^3 y^4 + 23000 x^3 y^3 \\
&\quad - 4170 x^3 y^2 - 245 x^3 y + 170 x^3 - 4608 x^2 y^9 + 23040 x^2 y^8 - 43440 x^2 y^7 \\
&\quad + 35508 x^2 y^6 - 9338 x^2 y^5 + 558 x^2 y^4 - 4200 x^2 y^3 + 2820 x^2 y^2 - 310 x^2 y \\
&\quad + 792 x y^9 - 3960 x y^8 + 6510 x y^7 - 2262 x y^6 - 4313 x y^5 + 4173 x y^4 \\
&\quad - 775 x y^3 - 170 x y^2 + 360 y^7 - 1446 y^6 + 2151 y^5 - 1375 y^4 + 310 y^3 \big) \\[1em]
term45 &= 360 y (y - 1)(2y - 1) \big( 240 x^5 y^3 - 360 x^5 y^2 + 120 x^5 y - 620 x^4 y^3 \\
&\quad + 930 x^4 y^2 - 310 x^4 y - 240 x^3 y^5 + 600 x^3 y^4 + 40 x^3 y^3 - 660 x^3 y^2 \\
&\quad + 240 x^3 y + 30 x^3 + 372 x^2 y^5 - 930 x^2 y^4 + 620 x^2 y^3 - 30 x^2 y \\
&\quad - 47 x^2 - 132 x y^5 + 330 x y^4 - 220 x y^3 + 10 x y + 17 x - 60 y^3 \\
&\quad + 91 y^2 - 31 y \big) \\[1em]
term5 &= 518400 x^{10} y^6 - 1555200 x^{10} y^5 + 1684800 x^{10} y^4 - 777600 x^{10} y^3 + 129600 x^{10} y^2 \\
&\quad - 2678400 x^9 y^6 + 8035200 x^9 y^5 - 8704800 x^9 y^4 + 4017600 x^9 y^3 - 669600 x^9 y^2 \\
&\quad + 2203200 x^8 y^8 - 8812800 x^8 y^7 + 18666000 x^8 y^6 - 25153200 x^8 y^5 + 19775700 x^8 y^4 \\
&\quad - 7824600 x^8 y^3 + 1080900 x^8 y^2 + 64800 x^8 y + 900 x^8 - 9106560 x^7 y^8 \\
&\quad + 36426240 x^7 y^7 - 59906880 x^7 y^6 + 52228800 x^7 y^5 - 25693200 x^7 y^4 + 6478560 x^7 y^3 \\
&\quad - 158040 x^7 y^2 - 268920 x^7 y - 3600 x^7 + 518400 x^6 y^{10} - 2592000 x^6 y^9 \\
&\quad + 19311840 x^6 y^8 - 61695360 x^6 y^7 + 89710920 x^6 y^6 - 64171800 x^6 y^5 + 21153600 x^6 y^4 \\
&\quad - 1452240 x^6 y^3 - 1219140 x^6 y^2 + 428580 x^6 y + 13500 x^6 - 1607040 x^5 y^{10} \\
&\quad + 8035200 x^5 y^9 - 25712640 x^5 y^8 + 54639360 x^5 y^7 - 66699000 x^5 y^6 + 42879240 x^5 y^5 \\
&\quad - 12360600 x^5 y^4 + 110880 x^5 y^3 + 1051380 x^5 y^2 - 314100 x^5 y - 28980 x^5 \\
&\quad + 1815696 x^4 y^{10} - 9078480 x^4 y^9 + 20275380 x^4 y^8 - 26630640 x^4 y^7 + 22758480 x^4 y^6 \\
&\quad - 13511592 x^4 y^5 + 5876280 x^4 y^4 - 1539000 x^4 y^3 - 96480 x^4 y^2 + 104220 x^4 y \\
&\quad + 29961 x^4 - 883872 x^3 y^{10} + 4419360 x^3 y^9 - 8211240 x^3 y^8 + 6324480 x^3 y^7 \\
&\quad - 1243440 x^3 y^6 + 323424 x^3 y^5 - 1695960 x^3 y^4 + 1104840 x^3 y^3 - 90180 x^3 y^2 \\
&\quad - 34380 x^3 y - 14382 x^3 + 156816 x^2 y^{10} - 784080 x^2 y^9 + 1101060 x^2 y^8 \\
&\quad + 306936 x^2 y^7 - 1988676 x^2 y^6 + 1553508 x^2 y^5 - 252180 x^2 y^4 - 65700 x^2 y^3 \\
&\quad - 59346 x^2 y^2 + 29286 x^2 y + 2601 x^2 + 142560 x y^8 - 572616 x y^7 \\
&\quad + 851796 x y^6 - 544500 x y^5 + 111960 x y^4 - 1980 x y^3 + 22266 x y^2 \\
&\quad - 9486 x y + 32400 y^6 - 98280 y^5 + 108009 y^4 - 50778 y^3 + 8649 y^2
\end{align*}

\begin{enumerate}

 \item For  \textcolor{brown}{$0.9 \leq x \leq 1 ,\ \ 0.9 \leq y \leq 1$}, $\lambda$ can be upper bounded as follows:
 
\begin{align*}
\lambda & =term1 \cdot (R-B)+term2 \cdot (M-B)+term3  \\
   & \leq term1 \cdot (M-B)+term2 \cdot (M-B)+term3  \\
   & = (term1+term2) \cdot (M-B)+term3  \\
   & \leq -0.001
\end{align*}
where we used the fact that for all $x,y$ in our region of interest it holds that  $term1 \leq 0$, $term1+term2 \leq 0$ and $term3 \leq -0.001$ under the assumption that  $R >  M  + 
10^9,\ \  M >  B  + 10^9\text{ and } B >10^9$. 

\item For $0.9 \leq x \leq 1$ and $0 \leq y \leq 0.1$ using the fact that $Q(y) \leq 0.00856$ and $0\leq 30x^2-62x+33$ we can bound $Gx$ as follows:
\begin{align*}
    Gx \geq x (60x(x-1)^2 Q(y)(R-B)-2x \cdot 0.00856 \cdot (30x^2-62x+33)+45x^3-94x^2+51x-1)/2
\end{align*}

Using Wolfram Mathematica we prove that $Gx\geq 0.0000001$ for all \textcolor{brown}{$0.9 \leq x \leq 1 ,\ \ 0 \leq y \leq 0.1$} under the assumption that $R >  M  + 
10^9,\ \  M >  B  + 10^9\text{ and } B > 10^9$. 


\item For $0.1 \leq x \leq 0.9$ and $0 \leq y \leq 0.5$ using the fact that $Q(y) \leq 0.5$ and $0\leq 30x^2-62x+33$ we can bound $Gx$ as follows:
\begin{align*}
    Gx \geq x(60x(x-1)^2 
Q(y)(R-B)-2x \cdot 0.5 \cdot (30x^2-62x+33)+45x^3-94x^2+51x-1)/2
\end{align*}

Using Wolfram Mathematica we prove that $Gx\geq 0.0000001$ for all \textcolor{brown}{$0.1 \leq x \leq 0.9$ and $0 \leq y \leq 0.5$} under the assumption that $R >  M  + 
10^9,\ \  M >  B  + 10^9\text{ and } B > 10^9$. 





\item Using Wolfram Mathematica we prove that $Gy\geq0.0000001$ for all \textcolor{brown}{$0.9 \leq x \leq 1 ,\ \ 0.1 \leq y \leq 0.9$} under the assumption that $R > 2M  + 
10^9,\ \  M >  B  + 10^9\text{ and } B > 10^9$. The first assumption is satisfied since $R \geq 10^4(10^4-2)N - N > 2(10^6+2)N + 10^9 \geq 2M + 10^9$. 

\item For $0 \leq x \leq 0.1$, $0  \leq y \leq 1/(R-B)$, $Gy$ can be upper bounded as follows:

\begin{align*}
    Gy
    \leq &(R - B)(30 \cdot (1/10)^3 \cdot C(y) \cdot 10) + (81/10000) \cdot (465 \cdot (1/10)^4 + 
      45) \\
      &+ (3 \cdot (1/10)^5 + 5 \cdot (1/10)^3) + (3/2 \cdot (1/(R - B))^2 - 1/2)
\end{align*}
where we used the fact that $ 0 \leq C(y) \leq 81/10000 $ and $ P(x) \leq 10$ for all $x,y \in [0,0.1]^2$. Using Wolfram Mathematica we prove that $Gy\leq - 1/1000000$ for all \textcolor{brown}{ $0 \leq x \leq 0.1 ,\ \ 0  \leq y \leq 1/(R-B)$} under the assumption that $R >  M  + 
10^9,\ \  M >  B  + 10^9\text{ and } B >10^9$. 

\item For $0 \leq x \leq 1/(R-Y)$, $99/100  \leq y \leq 1$, $Gy$ can be upper bounded as follows:

\begin{align*}
    Gy
    \leq  1/10+C(y)(-180x^5 + 465x^4 - 330x^3 + 45) + (3x^5 - \frac{15}{2}x^4 + 5x^3) + (-y^3 + \frac{3}{2} y^2 - y/2 - 1/2)
\end{align*}
where we used the fact that $C(y)\geq0$ for all $0  \leq y \leq 1$ and $30x^3C(y)P(x)(R-B)\leq \frac{30C(y)P(x)}{(R-B)^2}\leq 1/10$. Using Wolfram Mathematica we prove that $Gy\leq - 1/10$ for all \textcolor{brown}{ $0 \leq x \leq 1/(R-B)$, $99/100  \leq y \leq 1$ } under the assumption that $R > M  + 
10^9,\ \  M >  B  + 10^9\text{ and } B >10^9$. 
   

\item For $1/(R-B)^{1/4} \leq x \leq 1/10$, $1/(R-B)^{1/6}  \leq y \leq 1$, $Gx$ can be upper bounded as follows:

\begin{align*}
    Gx
    & \geq x(30x(x-1)^2 y^3(R-B))\\
    & -x^2 Q(y)(30x^2-62x+33)+\frac{x}{2}(45x^3-94x^2+51x-1)\\
    & \geq 1/(R-B)^{1/4}(30(1/(R-B)^{1/4})(x-1)^2(1/(R-B)^{1/6})^3(R-B))\\
    &-x^2 Q(y)(30x^2-62x+33)+\frac{x}{2}(45x^3-94x^2+51x-1)
\end{align*}

where we used the fact that $Q(y)\geq y^3$ for all $0  \leq y \leq 1$. Using Wolfram Mathematica we prove that $Gx\geq 1/1000000$ for all \textcolor{brown}{ $1/(R-B)^{1/4} \leq x \leq 1/10$, $1/(R-B)^{1/6}  \leq y \leq 1$ } under the assumption that $R >  M  + 
10^9,\ \  M >  B  + 10^9\text{ and } B >10^9$. 

\item 
Based on our construction, for all instances of Group B it holds that $9995 \cdot B \leq R-B \leq 10000 \cdot B$ and $99B \leq M-B \leq 101B$. In particular we have :
$$R-B \leq 10^4(10^4-2)N +N -10^4N +2N< 10^4(10^4N-2N) \leq 10000 \cdot B$$
$$9995 \cdot B\leq 9995 \cdot 10^4 N    < 10^4(10^4-2)N -N -10^4N \leq R-B$$
$$100B \leq 10^6 N = (10^6+1)N -N \leq M$$
$$M \leq (10^6+2)N < 101*(10^4 N -2N)\leq 101B$$

Let $k:=\frac{M-B}{B}$. For  \textcolor{brown}{$1/(R-B)^{1/4} \leq x \leq 0.01$, $1/(R-B)^{1/2}  \leq y \leq 1/(R-B)^{1/6}$}, $\lambda$ can be upper bounded as follows:

\begin{align*}
   \lambda &\leq   term1 \cdot 10000 \cdot B+term2 \cdot (k \cdot B)+term3 \\
     &=   (k \cdot B)\left(\frac{10000 \cdot B }{k \cdot B}term1+term2\right) +term3 \\
     &\leq  (k \cdot B) \max_{a \in [0,103]}\{a \cdot term1+term2\} +term3\\
     &\leq -0.001
\end{align*}
where we used the fact that for all $x,y$ in our region of interest and for all $k\in[99,101]$ it holds that $term1 \ge 0$, $term3<-0.001$ and $\max_{a \in [0,103]}\{a \cdot term1+term2\}\leq 0$. The above facts were proved using Wolfram Mathematica.  

\item 
As we showed earlier (in the analysis of region 8), based on our construction, for all instances of Group B it holds that $9995 \cdot B \leq R-B \leq 10000 \cdot B$ and $99B \leq M-B \leq 101B$. Let $k:=\frac{M-B}{B}$ and $m:=\frac{R-B}{B}$ . For  \textcolor{brown}{$0.01 \leq x \leq 0.1$, $1/(R-B)^{1/2}  \leq y \leq 1/(R-B)^{1/6}$}, $\lambda$ can be upper bounded as follows:

\begin{align*}
   \lambda &\leq   term1 \cdot 10000 \cdot B+term2 \cdot (k \cdot B) -\frac{1}{2}\sqrt{term5}+term3 \\
     &=   (k \cdot B)\left(\frac{10000 \cdot B}{k \cdot B}term1+term2\right) -\frac{1}{2}\sqrt{term5}+term3 \\
     &\leq  (k \cdot B) \max_{a \in [0,103]}\{a \cdot term1+term2\} -\frac{1}{2}\sqrt{term5}+term3\\
     &\leq -0.01
\end{align*}
where we used the fact that for all $x,y$ in our region of interest it holds that $term1 \geq 0$, $-\frac{1}{2}\sqrt{term5}+term3<-0.01$ and $\max_{a \in [0,103]}\{a \cdot term1+term2\}\leq 0$. To bound the square root term we used the fact that for all $x,y$ in our region of interest and for all $k\in[99,101]$ and $m\in[9995,10000]$ it holds that $term44 \leq 0$, $110\cdot term44 +term45 \geq 0 $ and thus
\begin{align*}
term44 \cdot( m \cdot B)+term45 \cdot (k \cdot B) &= (k \cdot B)(term44 \cdot( m /k)+term45)\\
&\geq (k \cdot B)(110\cdot term44 +term45)\\
&\geq 0,
\end{align*}
and the fact that $term41 \cdot ( m \cdot B)^2+term42 \cdot ( m \cdot B) \cdot (k \cdot B)+term43 \cdot (k \cdot B)^2\geq 0$. The above facts were proved using Wolfram Mathematica.  

\item As we showed earlier (in the analysis of region 8), based on our construction, for all instances of Group B it holds that $9995 \cdot B \leq R-B \leq 10000 \cdot B$ and $99B \leq M-B \leq 101B$. Let $k:=\frac{M-B}{B}$. For  \textcolor{brown}{$0\leq x \leq 0.001$, $0 \leq y \leq 0.001$}, $\lambda$ can be upper bounded as follows:

\begin{align*}
   \lambda &\leq   term1 \cdot 10000 \cdot B+term2 \cdot (k \cdot B)+term3 \\
     &=   (k \cdot B)\left(\frac{10000 \cdot B}{k \cdot B}term1+term2\right) +term3 \\
     &\leq  (k \cdot B) \max_{a \in [0,103]}\{a \cdot term1+term2\} +term3 \\
     &\leq -0.1
\end{align*}
where we used the fact that for all $x,y \in [0,0.001]^2$ it holds that $term1\ge 0$, $term3<-0.1$ and $\max_{a \in [0,103]}\{a \cdot term1+term2\}\leq 0$. The above facts were proved using Wolfram Mathematica.  


\item Using Wolfram Mathematica we prove that $term6 \cdot (R-B) \geq 101 $, $term7\geq -1$ and thus $Gx\geq 100$  for all \textcolor{brown}{$1/(R - B)^{1/3} \leq x \leq 1/(R - B)^{1/4} ,\ \ 0.01 \leq y \leq 1$} under the assumption that $R >  M  + 
 10^{19},\ \  M >  B  + 10^{19}$. 


\item Using Wolfram Mathematica we prove that $term8 \cdot (R-B) \leq 0.0001 $, $term9\leq 0$, $term10\leq-0.1$ and thus $Gy\leq -0.099$ for all \textcolor{brown}{$0 \leq x \leq 1/(R - B)^{1/2} ,\ \ 1/(R-B) \leq y \leq 0.1$} under the assumption that $R >  M  + 
 10^{19},\ \  M >  B  + 10^{19}$.  


\item Using Wolfram Mathematica we prove that $term8 \cdot (R-B) \leq 100 $, $term9 \cdot (M-B)\leq -1000$, $term10\leq 20$ and thus $Gy\leq -880$ for all \textcolor{brown}{$0 \leq x \leq 1/(R - B)^{1/3} ,\ \ 0.1\leq y \leq 0.99$} under the assumption that $R >  M  + 
 10^{19},\ \  M >  B  + 10^{19}$.  


\item Using Wolfram Mathematica we prove that $term8 \cdot (R-B) \leq 1000 $, $term9 \cdot (M-B)\leq -1200$, $term10\leq 20$ and thus $Gy\leq -180$ for all \textcolor{brown}{$1/(R - B)^{1/2} \leq x \leq 1/(R - B)^{1/3} ,\ \ 0.001\leq y \leq 0.1$} under the assumption that $R >  M  + 10^{19},\ \  M >  B  + 10^{19}$.  


\item Using Wolfram Mathematica we prove that $term6 \cdot (R-B) \geq 1010 $, $term7\geq -10$ and thus $Gx\geq 1000$ for all \textcolor{brown}{$0.1 \leq x \leq 0.9 ,\ \ 0.5 \leq y \leq 1$} under the assumption that $R >  M  + 10^{19},\ \  M >  B  + 10^{19}$.  


\item Using Wolfram Mathematica we prove that $term6 \cdot (R-B) \geq 1 $, $term7\geq -0.5$ and thus $Gx\geq 0.5$ for all \textcolor{brown}{$1/(R - B)^{1/2} \leq x \leq 1/(R - B)^{1/3} ,\ \ 0.5 \leq y \leq 1$} under the assumption that $R >  M  + 10^{19},\ \  M >  B  + 10^{19}$.  


\item Using Wolfram Mathematica we prove that $term8 \cdot (R-B) \leq 0.001 $, $term9 \cdot (M-B)\leq 0$, $term10\leq -0.011$ and thus $Gy\leq -0.01$ for all \textcolor{brown}{$1/(R - B) \leq x \leq 1/(R - B)^{1/2} ,\ \ 0.99\leq y \leq 1$} under the assumption that $R >  M  + 10^{19},\ \  M >  B  + 10^{19}$.  

\item For \textcolor{brown}{$0 \leq x \leq 0.1 ,\ \ 1/(R-B)  \leq y \leq 1/(R-B)^{1/2}$}, $Gy$ can be upper bounded as follows:

\begin{align*}
    Gy & \leq  300x^3y^2(R-B)-0.35\\
      & \leq  300\cdot 0.1^3 \cdot \left(\frac{1}{(R-B)^{1/2}} \right)^2 (R-B)- 0.35\\
      & = 0.3-0.35 \\
      & = -0.05 
\end{align*}

where we used the fact that $ term10 \leq -0.35$  for all $x \in [0,0.1]$ and $y \in [0,0.01]$ which was proved using Wolfram Mathematica. 


\item For \textcolor{brown}{$1/(R-B)^{1/3} \leq x \leq 1/(R-B)^{1/4},\ \ 1/(R-B)^{1/9}  \leq y \leq 0.01$}, $Gx$ can be upper bounded as follows:

\begin{align*}
    Gx & > 30x^2\cdot 0.99 \cdot y^3 \cdot 9 \cdot (R-B)-0.01\\
      & \geq 30 \cdot (1/(R-B)^{1/3})^2 \cdot 0.99 \cdot 1/((R-B)^{1/9})^3 \cdot 9 \cdot (R-B)-0.01 \\
      & > 30 \cdot (R-B)^{1-2/3-3/9}-0.01 \\
      & > 29.
\end{align*}

\end{enumerate}

\setlength{\parindent}{0cm}

\allowdisplaybreaks

\subsubsection{Group C}

\begin{figure}[h]
    \centering
    \includegraphics[scale=0.25]{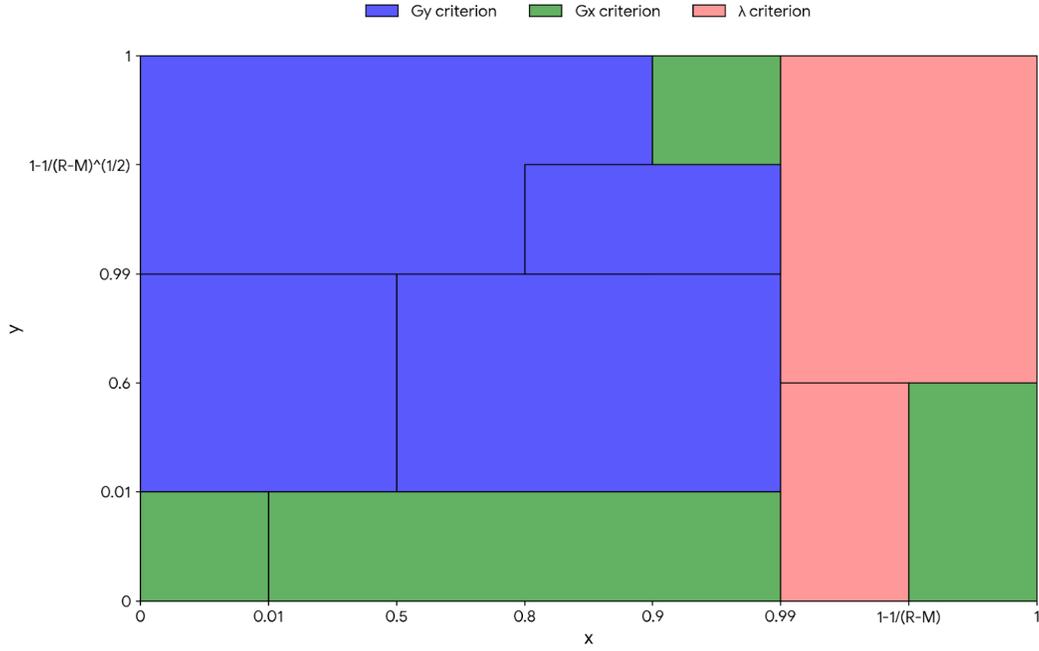}
    \caption{The graphical proof that no  $\epsilon_0$-SOSP exists in Group C.}
    \label{fig:Group-C}
\end{figure}

The expressions for the derivatives with respect to $x$ and $y$ are the following:

\begin{align*}
   Gx = &
  x^4y^5(180M - 180R + 270) + x^4y^4(-450M + 450R - 1335/2) + 
 x^4y^3(300M - 300R + 435) \\ & + x^4(-30M + 30R - 60)  + 
 x^3y^5(-360M + 360R - 546) + x^3y^4(900M - 900R + 1350) \\ & + 
 x^3y^3(-600M + 600R - 880) + x^3(60M - 60R + 121) + 
 x^2y^5(180M - 180R + 279) \\ & + x^2y^4(-450M + 450R - 690) + 
 x^2y^3(300M - 300R + 450) \\ & + x^2(-30M + 30R - 123/2) - x/2 - 
 3y^5 + 15y^4/2 - 5y^3 + 1/2
\end{align*}

\noindent
and

\begin{align*}
   Gy =&
  y(x^5y^3(360M - 360R + 540) + x^5y^2(-720M + 720R - 1068) + 
     18x^5y(20M - 20R + 29) \\ & + x^4y^3(-900M + 900R - 1365) + 
     x^4y^2(1800M - 1800R + 2700) + 60x^4y(-15M + 15R - 22) \\ & +
      x^3y^3(600M - 600R + 930) + 
     x^3y^2(-1200M + 1200R - 1840) + 300x^3y(2M - 2R + 3) \\ & - 
     30xy^3 + 60xy^2 - 30xy + y^3(-60M + 60R - 15) + 
     y^2(120M - 120R + 26) \\ & + 3y(-20M + 20R - 3) - 1)/2; 
\end{align*}

\medskip

We can simplify the above terms as follows:

$$Gx = term6 \cdot (R-B) + term7,$$

where

$$term6 = -30x^2(x - 1)^2(y - 1)^3(6y^2 + 3y + 1)$$

and

\begin{align*}
    term7 = & (540x^4y^5 - 1335x^4y^4 + 870x^4y^3 - 120x^4 - 1092x^3y^5 + 
   2700x^3y^4 \\ & - 1760x^3y^3 + 242x^3 + 558x^2y^5 - 
   1380x^2y^4 + 900x^2y^3 - 123x^2 \\ & - x - 6y^5 + 15y^4 - 
   10y^3 + 1)/2
\end{align*}

and

$$Gy = term8 \cdot(R - B) + term9,$$

where

\begin{align*}
    term8 = -30y^2(x - 1)^3(y - 1)^2(6x^2 + 3x + 1)
\end{align*}

and

\begin{align*}
    term9 = & y(540x^5y^3 - 1068x^5y^2 + 522x^5y - 1365x^4y^3 + 
    2700x^4y^2 \\ & - 1320x^4y + 930x^3y^3 - 1840x^3y^2 + 
    900x^3y - 30xy^3 \\ & + 60xy^2 - 30xy - 15y^3 + 26y^2 - 
    9y - 1)/2
.
\end{align*}

\medskip

Moreover, the expression for the minimum eigenvalue for this group is simplified as follows:

\begin{align*}
    \lambda = term1\cdot(R - B) + term2 - \sqrt{term3\cdot(R - B)^2 + term4\cdot(R - B) + term5}/2
\end{align*}

where

\begin{align*}
    term1 = & -30(x - 1)(y - 1)(12x^4y^2 - 6x^4y - 18x^3y^2 + 9x^3y  + 12x^2y^4 \\ & - 18x^2y^3 + 4x^2y^2 + x^2y + 2x^2 - 6xy^4 + 
   9xy^3 + xy^2 \\ & - 2xy - x + 2y^2 - y),
\end{align*}

\begin{align*}
    term2 = & (1080x^5y^3 - 1602x^5y^2 + 522x^5y - 2730x^4y^3 + 
    4050x^4y^2 - 1320x^4y + 1080x^3y^5 - 2670x^3y^4 \\ & + 
    3600x^3y^3 - 2760x^3y^2 + 900x^3y - 240x^3 - 
    1638x^2y^5 + 4050x^2y^4 - 2640x^2y^3 + 363x^2 \\ & + 
    558xy^5 - 1380xy^4 + 840xy^3 + 90xy^2 - 30xy - 123x - 
    30y^3 + 39y^2 - 9y - 1)/2,
\end{align*}

\begin{align*}
    term3 = & 900(x - 1)^2(y - 1)^2(144x^8y^4 - 144x^8y^3 + 
    36x^8y^2 - 432x^7y^4 + 432x^7y^3 - 108x^7y^2 + 
    612x^6y^6 \\ & - 1224x^6y^5 + 1008x^6y^4 - 396x^6y^3 + 
    69x^6y^2 + 24x^6y - 1224x^5y^6 + 2448x^5y^5 - 
    1296x^5y^4 + 72x^5y^3 \\ & + 42x^5y^2 - 48x^5y + 144x^4y^8 - 
    432x^4y^7 + 1008x^4y^6 - 1296x^4y^5 + 618x^4y^4 - 
    66x^4y^3 - 15x^4y^2 \\ & + 30x^4y + 4x^4 - 144x^3y^8 + 
    432x^3y^7 - 396x^3y^6 + 72x^3y^5 - 66x^3y^4 + 
    126x^3y^3 \\ & - 30x^3y^2 - 6x^3y - 4x^3 + 36x^2y^8 - 
    108x^2y^7 + 69x^2y^6 + 42x^2y^5 - 15x^2y^4 - 30x^2y^3 + 
    4x^2y^2 \\ & + 4x^2y + x^2 + 24xy^6 - 48xy^5 + 30xy^4 - 
    6xy^3 + 4xy^2 - 2xy + 4y^4 - 4y^3 + y^2)
,
\end{align*}

\begin{align*}
    term4 = & -90(x - 1)(y - 1)(4320x^9y^5 - 8568x^9y^4 + 
    5292x^9y^3 - 1044x^9y^2 - 17400x^8y^5 + 34512x^8y^4 \\ & - 
    21318x^8y^3 + 4206x^8y^2 + 18360x^7y^7 - 54672x^7y^6 + 
    79320x^7y^5 - 67818x^7y^4 \\ & + 30072x^7y^3 - 4254x^7y^2 - 
    828x^7y - 55488x^6y^7 + 165240x^6y^6 - 177840x^6y^5  + 
    82182x^6y^4 \\ & - 15028x^6y^3 - 2106x^6y^2 + 2500x^6y + 
    4320x^5y^9 - 17160x^5y^8  + 80520x^5y^7 - 180720x^5y^6 \\ & + 
    169500x^5y^5 - 60478x^5y^4 + 2472x^5y^3 + 4846x^5y^2 - 
    2615x^5y - 160x^5  - 8712x^4y^9 \\ & + 34608x^4y^8 - 
    68262x^4y^7 + 84138x^4y^6 - 61442x^4y^5 + 24828x^4y^4 - 
    5612x^4y^3 \\ & - 1050x^4y^2 + 1032x^4y + 322x^4 + 
    5508x^3y^9 - 21882x^3y^8 + 30168x^3y^7 - 14492x^3y^6 + 
    1308x^3y^5 \\ & - 4612x^3y^4 + 4948x^3y^3 - 560x^3y^2 - 
    183x^3y - 203x^3 - 1116x^2y^9 + 4434x^2y^8 \\ & - 
    5106x^2y^7 - 42x^2y^6 + 2804x^2y^5 - 318x^2y^4 - 
    738x^2y^3  - 106x^2y^2 + 117x^2y + 41x^2 \\ & - 432xy^7 + 
    1274xy^6 - 1265xy^5 + 446xy^4 - 69xy^3  + 105xy^2 - 
    44xy - 20y^5 + 36y^4 \\ & - 19y^3 + 3y^2)
\end{align*}

and 

\begin{align*}
    term5 = & 9(129600x^{10}y^6 - 384480x^{10}y^5 + 410436x^{10}y^4 - 
     185832x^{10}y^3 + 30276x^{10}y^2 - 655200x^9y^6 \\ & + 
     1943880x^9y^5 - 2075280x^9y^4 + 939720x^9y^3 - 
     153120x^9y^2 + 550800x^8y^8 - 2178720x^8y^7 \\ & + 
     4515500x^8y^6 - 5949600x^8y^5 + 4592680x^8y^4 - 
     1770960x^8y^3 + 212560x^8y^2 + 27840x^8y \\ & - 
     2227680x^7y^8 + 8812272x^7y^7 - 14238392x^7y^6 + 
     12119160x^7y^5 - 5821360x^7y^4 + 1386880x^7y^3 \\ & + 
     81228x^7y^2 - 112508x^7y + 129600x^6y^{10} - 
     640800x^6y^9 + 4622060x^6y^8 - 14532152x^6y^7 + 
     20789512x^6y^6 \\ & - 14587360x^6y^5 + 4751400x^6y^4 - 
     274740x^6y^3 - 431168x^6y^2 + 168748x^6y + 6400x^6 - 
     393120x^5y^{10} \\ & + 1943880x^5y^9 - 6039600x^5y^8 + 
     12507960x^5y^7 - 14934720x^5y^6 + 9398720x^5y^5 - 
     2710104x^5y^4 \\ & + 10828x^5y^3 + 341096x^5y^2 - 
     108680x^5y - 19360x^5 + 432036x^4y^{10} - 2136480x^4y^9 + 
     4664680x^4y^8 \\ & - 5938960x^4y^7 + 4822600x^4y^6 - 
     2634272x^4y^5 + 1067240x^4y^4 - 244020x^4y^3 - 
     74000x^4y^2 \\ & + 23000x^4y + 21201x^4 - 203112x^3y^{10} + 
     1004520x^3y^9 - 1875200x^3y^8 + 1534400x^3y^7 - 
     415020x^3y^6 \\ & + 41284x^3y^5 - 234580x^3y^4 + 
     161320x^3y^3 - 6980x^3y^2 + 1940x^3y - 9922x^3 + 
     34596x^2y^{10} \\ & - 171120x^2y^9 + 301120x^2y^8 - 
     190364x^2y^7 - 32456x^2y^6 + 60128x^2y^5 + 21740x^2y^4 - 
     24420x^2y^3 \\ & - 586x^2y^2 - 94x^2y + 1681x^2 + 3720xy^8 - 
     14036xy^7 + 19476xy^6 - 11680xy^5 + 2900xy^4 \\ & - 
     1260xy^3 + 1126xy^2 - 246xy + 100y^8 - 400y^7 + 
     700y^6 - 660y^5 + 329y^4 - 78y^3 + 9y^2)/4.
\end{align*}

\medskip

Now, we define a term that will be useful in our analysis for this group. Let 

$$T_1 = 30(y - 1)(12x^4y^2 - 6x^4y - 18x^3y^2 + 9x^3y + 12x^2y^4 - 18x^2y^3 + 4x^2y^2 + x^2y + 2x^2 - 6xy^4 + 9xy^3 + xy^2 - 2xy - x + 2y^2 - y).$$

Using the above term, we can upper bound the minimum eigenvalue as follows:

\begin{align}\label{T: T1}
    \lambda \le T_1\cdot(1-x)(R-M)+term2.
\end{align}

Based on the above, we can prove the following:

\begin{enumerate}

    \item \textcolor{brown}{For all $ 0 \le x \le 0.01$ and $0 \le y \le 0.01$}, under the assumption that $R >  B  + 10^{20}$, using Wolfram Mathematica we prove that $term6 \cdot (R-M) \ge 0$ and $term7 \ge 0.0001$, which imply that $Gx \ge 0.0001$.
    \item \textcolor{brown}{For all $ 1 - 1/(R - M) \le x \le 1$ and $0 \le y \le 0.6$}, under the assumption that $R >  B  + 10^{20}$, using Wolfram Mathematica we prove that $term6 \cdot (R - M) < 0.0000000001$ and $term7 \le -0.001$, which implies that $Gx \le -0.0001$.
    \item \textcolor{brown}{For all $ 0.99 \le x \le 1 - 1/(R - M)$ and $0 \le y \le 0.6$}, under the assumption that $R >  B  + 10^{20}$, using Wolfram Mathematica we prove that $T_1\le -9$ and $term2 \le 4$, and also using \ref{T: T1} we get that $\lambda \le -9(1 - x)(R - M) + 4 = L$. Using Wolfram Mathematica we prove that $L \le -1$ which further implies that $\lambda \le -1$.
    \item \textcolor{brown}{For all $ 0.8 \le x \le 0.99$ and $0.99 \le y \le 1 - 1/(R - M)^{1/2}$}, under the assumption that $R >  B  + 10^{20}$, using Wolfram Mathematica we prove that $term8 \cdot (R - M) \ge 0$ and $term9 > 0.000001$, which implies that $Gy > 0.000001$.
    \item \textcolor{brown}{For all $ 0.99 \le x \le 1$ and $0.6 \le y \le 1$}, under the assumption that $R >  B  + 10^{20}$, using Wolfram Mathematica we prove that $term1 \le 0$ and $term2 \le-0.001$, which implies that $\lambda \le -0.001$.
    \item \textcolor{brown}{For all $ 0.9 \le x \le 1$ and $1 - 1/(R - M)^{1/2} \le y \le 1$}, under the assumption that $R >  B  + 10^{20}$, using Wolfram Mathematica we prove that $term6 \cdot (R - M) < 0.3$ and $term7 < -0.4$, which implies that $Gx < -0.1$.
    \item \textcolor{brown}{For all $ 0 \le x \le 0.9$ and $0.99 \le y \le 1$}, under the assumption that $R >  B  + 10^{20}$, using Wolfram Mathematica we prove that $term8 \ge 0$ and $term9 \ge 0.001$, which implies that $Gy \ge 0.001$.
    \item \textcolor{brown}{For all $ 0.01 \le x \le 0.99$ and $0 \le y \le 0.01$}, under the assumption that $R >  B  + 10^{20}$, using Wolfram Mathematica we prove that $term6 \ge 0.001$ and $term7 \ge -5$, which implies that $Gx > 1$.
    \item \textcolor{brown}{For all $ 0.5 \le x \le 0.99$ and $0.01 \le y \le 0.99$}, under the assumption that $R >  B  + 10^{20}$, using Wolfram Mathematica we prove that $term8 > 10^{-8}$ and $term7 \ge -10$, which implies that $Gy > 1$.
    \item \textcolor{brown}{For all $ 0 \le x \le 0.8$ and $0.01 \le y \le 0.99$}, under the assumption that $R >  B  + 10^{20}$, using Wolfram Mathematica we prove that $term8 \ge 0.0001$ and $term7 \ge -10$, which implies that $Gy > 1$.
\end{enumerate}

\setlength{\parindent}{0cm}

\allowdisplaybreaks

\subsubsection{Group D}

\begin{figure}[h]
    \centering
    \includegraphics[scale=0.25]{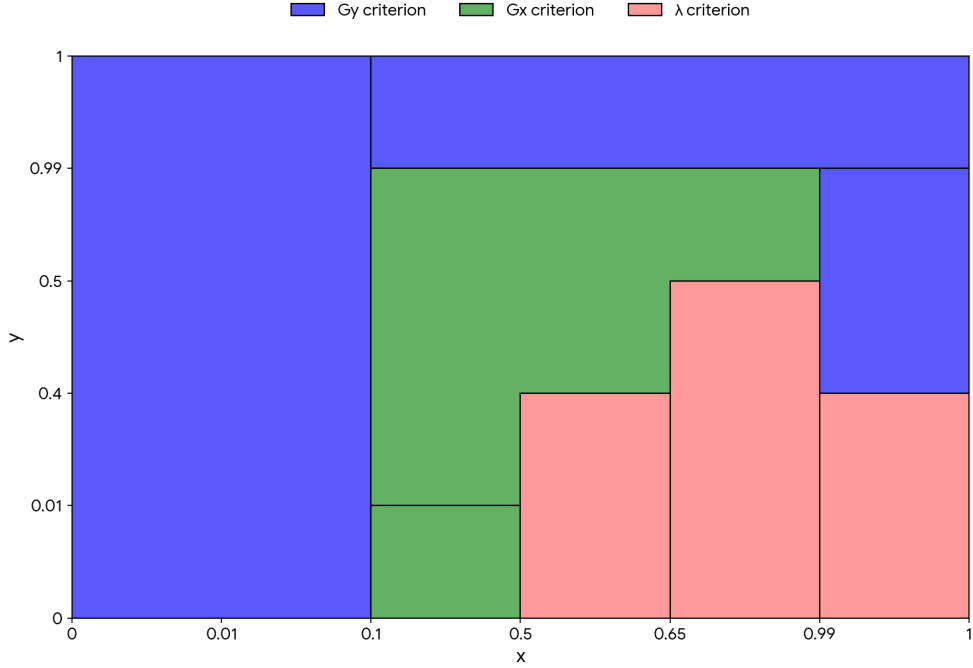}
    \caption{The graphical proof that no  $\epsilon_0$-SOSP exists in Group D.}
    \label{fig:Group-D}
\end{figure}

The expressions for the derivatives with respect to $x$ and $y$ are the following:

\begin{align*}
   Gx = &
  x(x^3y^5(360B - 360R - 540) + x^3y^4(-900B + 900R + 1365) + 
    x^3y^3(600B - 600R - 930) + 30x^3y  \\ & + 
    x^3(-60B + 60R + 15) + x^2y^5(-720B + 720R + 1068) + 
    x^2y^4(1800B - 1800R - 2700) \\ & + 
    x^2y^3(-1200B + 1200R + 1840) - 60x^2y + 
    x^2(120B - 120R - 30) + 18xy^5(20B - 20R - 29) \\ & + 
    60xy^4(-15B + 15R + 22) + 300xy^3(2B - 2R - 3) + 
    30xy + 15x(-4B + 4R + 1) - 1)/2
\end{align*}

\noindent
and

\begin{align*}
   Gy =&
  x^5y^4(180B - 180R - 270) + x^5y^3(-360B + 360R + 546) + 
 x^5y^2(180B - 180R - 279) \\ & + 3x^5 + 
 x^4y^4(-450B + 450R + 1335/2) + x^4y^3(900B - 900R - 1350) \\ & + 
 x^4y^2(-450B + 450R + 690) - 15x^4/2 + 
 x^3y^4(300B - 300R - 435) \\ & + x^3y^3(-600B + 600R + 880) + 
 x^3y^2(300B - 300R - 450) + 5x^3 - 15y^4 + 29y^3 - 27y^2/2 - 
 y/2 - 1/2. 
\end{align*}

\medskip

We can simplify the above terms as follows:

$$Gx = term6 \cdot (R-B) + term7,$$

where

$$term6 = -30x^2(x - 1)^2(y - 1)^3(6y^2 + 3y + 1)$$

and

\begin{align*}
    term7 = & -x(540x^3y^5 - 1365x^3y^4 + 930x^3y^3 - 30x^3y - 
     15x^3 - 1068x^2y^5 + 2700x^2y^4 - 1840x^2y^3 \\ & + 60x^2y + 
     30x^2 + 522xy^5 - 1320xy^4 + 900xy^3 - 30xy - 15x + 1)/2
\end{align*}

and

$$Gy = term8\cdot(R - B) + term9,$$

where

\begin{align*}
    term8 = -30x^3y^2(y - 1)^2(6x^2 - 15x + 10)
\end{align*}

and

\begin{align*}
    term9 = & -(540x^5y^4 - 1092x^5y^3 + 558x^5y^2 - 6x^5 - 
     1335x^4y^4 + 2700x^4y^3 - 1380x^4y^2 \\ & + 15x^4 + 
     870x^3y^4 - 1760x^3y^3 + 900x^3y^2 - 10x^3 + 30y^4 - 
     58y^3 + 27y^2 + y + 1)/2.
\end{align*}

\medskip

Moreover, the expression for the minimum eigenvalue for this group is simplified as follows:

\begin{align*}
    \lambda = term1\cdot(R - B) + term2 - \sqrt{term3\cdot(R - B)^2 + term4\cdot(R - B) + term5}/2
\end{align*}

where

\begin{align*}
    term1 = & -30x(y - 1)(12x^4y^2 - 6x^4y - 30x^3y^2 + 15x^3y + 
   12x^2y^4 - 18x^2y^3 + 22x^2y^2 - 8x^2y \\ & + 2x^2 - 
   18xy^4 + 27xy^3 - 3xy^2 - 3xy - 3x + 6y^4 - 9y^3 + y^2 +
    y + 1),
\end{align*}

\begin{align*}
    term2 = & -(1080x^5y^3 - 1638x^5y^2 + 558x^5y - 2670x^4y^3 + 4050x^4y^2 - 1380x^4y + 1080x^3y^5 - 2730x^3y^4 \\ & + 
    3600x^3y^3 - 2640x^3y^2 + 840x^3y - 30x^3 - 1602x^2y^5 + 
    4050x^2y^4 - 2760x^2y^3 + 90x^2y \\ & + 45x^2 + 522xy^5 - 
    1320xy^4 + 900xy^3 - 30xy - 15x + 60y^3 - 87y^2 + 27y + 
    1)/2,
\end{align*}

\begin{align*}
    term3 = & 900x^2(y - 1)^2(144x^8y^4 - 144x^8y^3 + 36x^8y^2 - 
   720x^7y^4 + 720x^7y^3 - 180x^7y^2 + 612x^6y^6 - 
   1224x^6y^5 \\ & + 2016x^6y^4 - 1404x^6y^3 + 321x^6y^2 + 
   24x^6y - 2448x^5y^6 + 4896x^5y^5 - 3744x^5y^4 + 
   1296x^5y^3 \\ & - 204x^5y^2 - 96x^5y + 144x^4y^8 - 
   432x^4y^7 + 4068x^4y^6 - 7416x^4y^5 + 4218x^4y^4 - 
   606x^4y^3 \\ & - 30x^4y^2 + 150x^4y + 4x^4 - 432x^3y^8 + 
   1296x^3y^7 - 3636x^3y^6 + 5112x^3y^5 - 2550x^3y^4 \\ & + 
   282x^3y^3 + 54x^3y^2 - 114x^3y - 12x^3 + 468x^2y^8 - 
   1404x^2y^7 + 1869x^2y^6 - 1398x^2y^5 + 615x^2y^4 \\ & - 
   228x^2y^3 + 19x^2y^2 + 46x^2y + 13x^2 - 216xy^8 + 
   648xy^7 - 558xy^6 + 36xy^5 + 30xy^4 + 96xy^3 \\ & - 
   18xy^2 - 12xy - 6x + 36y^8 - 108y^7 + 93y^6 - 6y^5 - 
   5y^4 - 16y^3 + 3y^2 + 2y + 1),
\end{align*}

\begin{align*}
    term4 = & 90x(y - 1)(4320x^9y^5 - 8712x^9y^4 + 5508x^9y^3 - 
   1116x^9y^2 - 21480x^8y^5 + 43320x^8y^4 - 27390x^8y^3 \\ & + 
   5550x^8y^2 + 18360x^7y^7 - 55488x^7y^6 + 97680x^7y^5 - 
   102870x^7y^4 + 52050x^7y^3 \\ & - 9540x^7y^2 - 432x^7y - 
   73032x^6y^7 + 220728x^6y^6 - 261240x^6y^5 + 152400x^6y^4 \\ & - 
   44660x^6y^3 + 5040x^6y^2 + 1718x^6y + 4320x^5y^9 - 
   17400x^5y^8 + 133992x^5y^7 \\ & - 343080x^5y^6 + 350220x^5y^5 - 
   145580x^5y^4 + 15800x^5y^3 + 2930x^5y^2 - 2581x^5y \\ & - 
   20x^5 - 12888x^4y^9 + 51912x^4y^8 - 147138x^4y^7 + 
   260022x^4y^6 - 228058x^4y^5 + 82442x^4y^4 \\ & - 3538x^4y^3 - 
   3704x^4y^2 + 1765x^4y + 60x^4 + 13860x^3y^9 - 
   55830x^3y^8 + 97890x^3y^7 - 97210x^3y^6 \\ & + 58170x^3y^5 - 
   20910x^3y^4 + 3630x^3y^3 + 820x^3y^2 - 545x^3y - 65x^3 - 
   6336x^2y^9 \\ & + 25524x^2y^8 - 36246x^2y^7 + 18702x^2y^6 - 
   116x^2y^5 - 110x^2y^4 - 1560x^2y^3 + 40x^2y^2 \\ & + 72x^2y + 
   30x^2 + 1044xy^9 - 4206xy^8 + 6294xy^7 - 4028xy^6 + 
   979xy^5 \\ & - 350xy^4 + 335xy^3 - 75xy^2 + 12xy - 5x - 
   120y^7 + 354y^6 - 335y^5 + 90y^4 + 20y^2 - 9y)
\end{align*}

and 

\begin{align*}
    term5 = & 9(129600x^{10}y^6 - 393120x^{10}y^5 + 432036x^{10}y^4 - 
    203112x^{10}y^3 + 34596x^{10}y^2 - 640800x^9y^6 + 
    1943880x^9y^5 \\ & - 2136480x^9y^4 + 1004520x^9y^3 - 
    171120x^9y^2 + 550800x^8y^8 - 2227680x^8y^7 + 
    4622060x^8y^6 - 6039600x^8y^5 \\ & + 4664680x^8y^4 - 
    1875200x^8y^3 + 301120x^8y^2 + 3720x^8y + 100x^8 - 
    2178720x^7y^8 + 8812272x^7y^7 \\ & - 14532152x^7y^6 + 
    12507960x^7y^5 - 5938960x^7y^4 + 1532960x^7y^3 - 
    188180x^7y^2 \\ & - 14780x^7y - 400x^7 + 129600x^6y^{10} - 
    655200x^6y^9 + 4515500x^6y^8 - 14238392x^6y^7 \\ & + 
    20789512x^6y^6 - 14934720x^6y^5 + 4822600x^6y^4 - 
    410020x^6y^3 - 40040x^6y^2 + 22060x^6y \\ & + 700x^6 - 
    384480x^5y^10 + 1943880x^5y^9 - 5949600x^5y^8 + 
    12119160x^5y^7 - 14515360x^5y^6 + 9182960x^5y^5 \\ & - 
    2400912x^5y^4 - 72516x^5y^3 + 87648x^5y^2 - 14800x^5y - 
    700x^5 + 410436x^4y^{10} - 2075280x^4y^9 \\ & + 4592680x^4y^8 - 
    5821360x^4y^7 + 4573400x^4y^6 - 2176680x^4y^5 + 
    490280x^4y^4 + 34580x^4y^3 \\ & - 27780x^4y^2 + 4300x^4y + 
    425x^4 - 185832x^3y^{10} + 939720x^3y^9 - 1842960x^3y^8 + 
    1676880x^3y^7 \\ & - 591740x^3y^6 - 59340x^3y^5 + 
    72820x^3y^4 - 15800x^3y^3 + 4580x^3y^2 - 420x^3y - 
    150x^3 \\ & + 30276x^2y^{10} - 153120x^2y^9 + 319360x^2y^8 - 
    348972x^2y^7 + 211232x^2y^6 - 70600x^2y^5 \\ & + 13760x^2y^4 - 
    1860x^2y^3 + 430x^2y^2 - 170x^2y + 25x^2 - 6960xy^8 + 
    27692xy^7 - 40652xy^6 \\ & + 25320xy^5 - 5000xy^4 - 
    380xy^3 - 110xy^2 + 90xy + 400y^6 - 1160y^5 + 1201y^4 - 
    522y^3 + 81y^2)/4
\end{align*}

\begin{enumerate}
    \item \textcolor{brown}{For all $ 0.5 \le x \le 1$ and $0 \le y \le 0.4$}, under the assumption that $R >  B  + 10^{20}$, using Wolfram Mathematica we prove that $term1 \le 0$ and $term2 \le -0.01$, which imply that $\lambda < - 0.01$.
    \item \textcolor{brown}{For all $ 0.65 \le x \le 0.99$ and $0 \le y \le 0.5$}, under the assumption that $R >  B  + 10^{20}$, using Wolfram Mathematica we prove that $term1 \le -0.001$ and $term2 < 10$, which implies that $\lambda << - 1$.
    \item \textcolor{brown}{For all $ 0.1 \le x \le 1$ and $0.01 \le y \le 0.99$}, under the assumption that $R >  B  + 10^{20}$, using Wolfram Mathematica we prove that $term8 \le -10^{-9}$ and $term9 < 10$, which implies that $Gy << - 1$.
    \item \textcolor{brown}{For all $ 0.01 \le x \le 0.99$ and $0.01 \le y \le 0.99$}, under the assumption that $R >  B  + 10^{20}$, using Wolfram Mathematica we prove that $term6 > 10^{-8}$ and $term7>-10$, which implies that $Gx >>1$.
    \item \textcolor{brown}{For all $ 0 \le x \le 0.1$ and $0 \le y \le 1$}, under the assumption that $R >  B  + 10^{20}$, using Wolfram Mathematica we prove that $term8 \le 0$ and $term9 \le -0.01$, which implies that $Gy \le -0.01$.
    \item \textcolor{brown}{For all $ 0.1 \le x \le 0.5$ and $0 \le y \le 0.01$}, under the assumption that $R >  B  + 10^{20}$, using Wolfram Mathematica we prove that $term6 \ge 0$ and $term7 \ge 0.0001$, which implies that $Gx \ge 0.0001$.
    \item \textcolor{brown}{For all $ 0 \le x \le 1$ and $0.99 \le y \le 1$}, under the assumption that $R >  B  + 10^{20}$, using Wolfram Mathematica we prove that $term8 \le 0$ and $term9 \le -0.0001$, which implies that $Gy \le -0.0001$.
\end{enumerate}

\setlength{\parindent}{0cm}

\allowdisplaybreaks

\subsubsection{Group E}

\begin{figure}[h]
    \centering
    \includegraphics[scale=0.25]{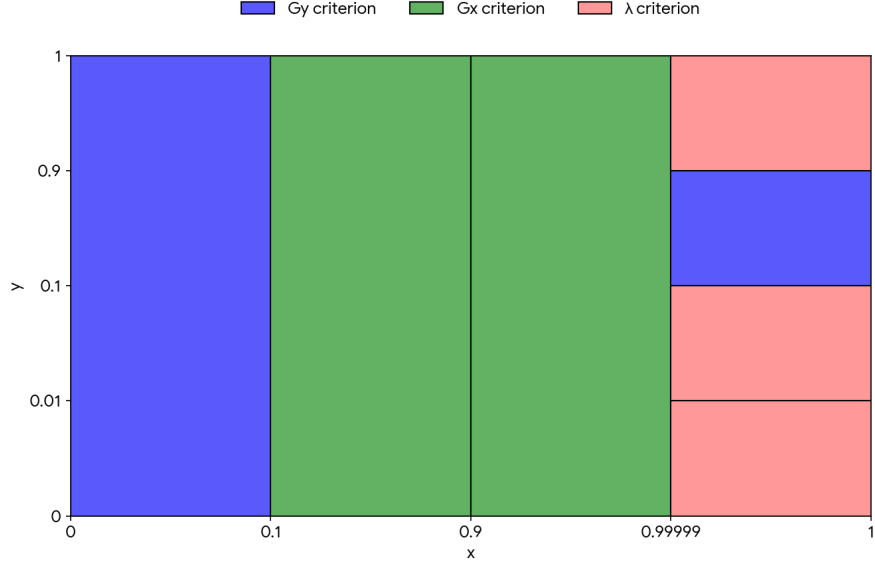}
    \caption{The graphical proof that no  $\epsilon_0$-SOSP exists in Group E.}
    \label{fig:Group-E}
\end{figure}

The expressions for the derivatives with respect to $x$ and $y$ are the following:

\begin{align}
   Gx = &
  x(540x^3y^5 - 1350x^3y^4 + 900x^3y^3 + 30x^3y + 
      x^3(60R - 60B - 45) - 1080x^2y^5 + 2700x^2y^4 \nonumber \\ & - 
      1800x^2y^3 - 60x^2y + x^2(-120R + 120B + 90) + 
      540xy^5 - 1350xy^4 + 900xy^3 + 30xy \nonumber \\ & + 
      15x(4R - 4B - 3) - 1)/2
\end{align}

\noindent
and

\begin{align}
   Gy =&
  270x^5y^4 - 540x^5y^3 + 270x^5y^2 + 3x^5 - 675x^4y^4 + 
 1350x^4y^3 - 675x^4y^2 - 15x^4/2 \nonumber \\ & + 450x^3y^4 - 900x^3y^3 + 
 450x^3y^2 + 5x^3 - 15y^4 + 29y^3 - 27y^2/2 - y/2 - 1/2. 
\end{align}

\medskip

Moreover, the expression for the minimum eigenvalue for this group is simplified as follows:

\begin{align}
    \lambda = term1\cdot(R - B) + term2 - \sqrt{term3\cdot(R - B)^2 + term4\cdot(R - B) + term5}/2
\end{align}

where

\begin{align*}
    term1 = 30x(x - 1)(2x - 1),
\end{align*}

\begin{align*}
    term2 = & 540x^5y^3 - 810x^5y^2 + 270x^5y - 1350x^4y^3 + 
   2025x^4y^2 - 675x^4y + 540x^3y^5 - 1350x^3y^4 + 
   1800x^3y^3 \\ & - 1350x^3y^2 + 480x^3y - 45x^3 - 810x^2y^5 + 
   2025x^2y^4 - 1350x^2y^3 - 45x^2y \\ & + 135/2x^2 + 270xy^5 - 
   675xy^4 + 450xy^3 + 15xy - 45/2x - 30y^3 + 87/2y^2 - 
   27/2y - 1/2,
\end{align*}

\begin{align*}
    term3 = 3600x^2(x - 1)^2(2x - 1)^2,
\end{align*}

\begin{align*}
    term4 =& -360
   x(x - 1)(2x - 1)(360x^5y^3 - 540x^5y^2 + 180x^5y - 
     900x^4y^3 + 1350x^4y^2 - 450x^4y \\ & - 360x^3y^5 + 
     900x^3y^4 - 900x^3y^2 + 280x^3y + 30x^3 + 540x^2y^5 - 
     1350x^2y^4 + 900x^2y^3 \\ & + 30x^2y  - 45x^2  - 180xy^5 + 
     450xy^4 - 300xy^3 - 10xy + 15x - 20y^3 + 29y^2 - 9y),
\end{align*}

and 

\begin{align*}
    term5 = & 1166400x^{10}y^6 - 3499200x^{10}y^5 + 3790800x^{10}y^4 - 
   1749600x^{10}y^3 + 291600x^{10}y^2 - 5832000x^9y^6 \\ & + 
   17496000x^9y^5 - 18954000x^9y^4 + 8748000x^9y^3 - 
   1458000x^9y^2 + 4957200x^8y^8 - 19828800x^8y^7 \\ &  + 
   41115600x^8y^6 - 53946000x^8y^5 + 41706900x^8y^4 - 
   16702200x^8y^3 + 2600100x^8y^2 + 97200x^8y \\ &  + 900x^8 - 
   19828800x^7y^8 + 79315200x^7y^7 - 129470400x^7y^6 + 
   110808000x^7y^5  - 53103600x^7y^4 \\ &  + 14320800x^7y^3 - 
   1652400x^7y^2 - 388800x^7y - 3600x^7 + 1166400x^6y^{10}  - 
   5832000x^6y^9 \\ & + 41115600x^6y^8 - 129470400x^6y^7 + 
   187385400x^6y^6 - 133439400x^6y^5 + 43675200x^6y^4 \\ &  - 
   4827600x^6y^3 - 323100x^6y^2 + 564300x^6y + 13500x^6 - 
   3499200x^5y^{10} + 17496000x^5y^9 \\ &  - 53946000x^5y^8 + 
   110808000x^5y^7 - 133439400x^5y^6 + 85772520x^5y^5 - 
   24543000x^5y^4 \\ & + 991440x^5y^3 + 648540x^5y^2 - 
   332100x^5y - 27900x^5 + 3790800x^4y^{10} - 18954000x^4y^9 \\ & + 
   41674500x^4y^8 - 52974000x^4y^7 + 43513200x^4y^6 - 
   24510600x^4y^5 + 9018000x^4y^4 - 1506600x^4y^3 \\ &  - 
   50400x^4y^2 + 45900x^4y + 27225x^4 - 1749600x^3y^{10} + 
   8748000x^3y^9 - 16637400x^3y^8 \\ & + 14068080x^3y^7 - 
   4526280x^3y^6 + 955800x^3y^5 - 1623600x^3y^4 + 
   767160x^3y^3 \\ &  - 35100x^3y^2 + 11340x^3y - 12150x^3 + 
   291600x^2y^{10} - 1458000x^2y^9 + 2600100x^2y^8 \\ &  - 
   1662120x^2y^7 - 273780x^2y^6 + 558900x^2y^5 + 
   18900x^2y^4  - 49140x^2y^3 - 27450x^2y^2 \\ & + 4590x^2y + 
   2025x^2 + 64800xy^8 - 255960xy^7 + 372060xy^6 - 
   229500xy^5 + 52200xy^4 - 10620xy^3 \\ & + 9450xy^2 - 2430xy + 
   3600y^6 - 10440y^5 + 10809y^4 - 4698y^3 + 729y^2
\end{align*}

\begin{enumerate}
    \item \textcolor{brown}{For all $ 0.1 \le x \le 0.9$ and $0 \le y \le 1$}, under the assumption that $R - B >  10^{20}$, using Wolfram Mathematica we prove that $Gx > 0.001$.
    \item \textcolor{brown}{For all $ 0 \le x \le 0.1$ and $0 \le y \le 1$}, under the assumption that $R - B >  10^{20}$, using Wolfram Mathematica we prove that $Gy <- 0.001$.
    \item \textcolor{brown}{For all $ 0.9 \le x \le 0.99999$ and $0 \le y \le 1$}, under the assumption that $R - B >  10^{20}$, using Wolfram Mathematica we prove that $Gx > 0.001$.
    \item \textcolor{brown}{For all $ 0.99999 \le x \le 1$ and $0.1 \le y \le 0.9$}, under the assumption that $R - B >  10^{20}$, using Wolfram Mathematica we prove that $Gy > 0.001$.
    \item \textcolor{brown}{For all $ 0.99999 \le x \le 1$ and $0.9 \le y \le 1$}, under the assumption that $R - B >  10^{20}$, using Wolfram Mathematica we prove that $term1 \le 0$ and $term2 \le - 0.001$, which implies that $\lambda \leq -0.001$.
    \item \textcolor{brown}{For all $ 0.99999 \le x \le 1$ and $0 \le y \le 0.01$}, under the assumption that $R - B >  10^{20}$, using Wolfram Mathematica we prove that $term1 \le 0$ and $term2 \le - 0.001$, which implies that $\lambda \leq -0.001$.
    \item \textcolor{brown}{For all $ 0.99999 \le x \le 1$ and $0.01 \le y \le 0.1$}, under the assumption that $R - B >  10^{20}$, using Wolfram Mathematica we prove that $term1 \le 0$ and $term3, term4, term5 \geq 0$, and also that $term2 - \sqrt{term5}/2 < -0.001$, which implies that $\lambda < -0.001$.
\end{enumerate}

\setlength{\parindent}{0cm}

\allowdisplaybreaks

\subsubsection{Group F}

\begin{figure}[h]
    \centering
    \includegraphics[scale=0.2]{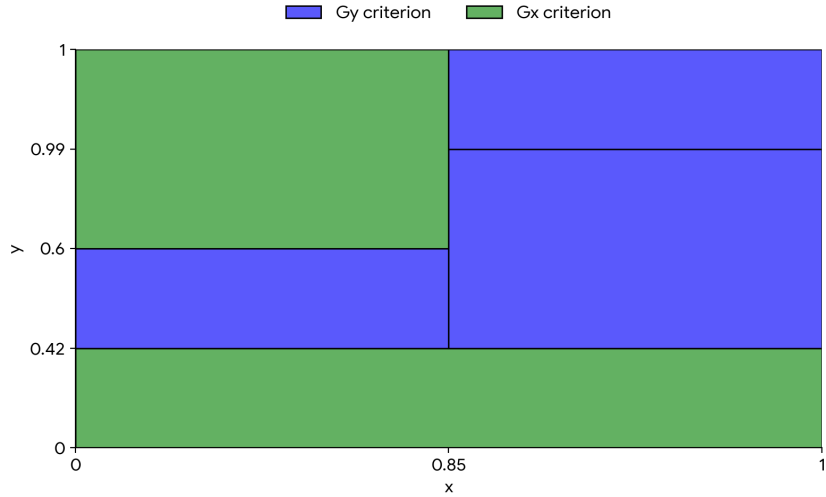}
    \caption{The graphical proof that no  $\epsilon_0$-SOSP exists in Group F.}
    \label{fig:Group-F}
\end{figure}

The expressions for the derivatives with respect to $x$ and $y$ are the following:

\begin{align}
   Gx =&
  -270x^4y^5 + 1335x^4y^4/2 - 435x^4y^3 + 15x^4 + 534x^3y^5 - 
 1320x^3y^4 + 860x^3y^3 - 31x^3 - 261x^2y^5 \nonumber \\ & + 645x^2y^4 - 
 420x^2y^3 + 33x^2/2 - x/2 - 6y^5 + 15y^4 - 10y^3 + 1/2 
\end{align}

\noindent
and

\begin{align}
   Gy =&
  y(-540x^5y^3 + 1068x^5y^2 - 522x^5y + 1335x^4y^3 - 
    2640x^4y^2 + 1290x^4y - 870x^3y^3 + 1720x^3y^2 - 
    840x^3y \nonumber \\ & - 60xy^3 + 120xy^2 - 60xy + 60y^3(-M + R + 1) + 
    2y^2(60M - 60R - 61) + 3y(-20M + 20R + 21) - 1)/2. 
\end{align}

Now, we can simplify $Gx$ as follows:

$$Gy = term8\cdot(R-M)+term9,$$

where 

$$term8 = 30y^2(y - 1)^2$$

and

\begin{align}
    term9 =& -y(540x^5y^3 - 1068x^5y^2 + 522x^5y - 1335x^4y^3 + 
    2640x^4y^2 - 1290x^4y + 870x^3y^3 - 1720x^3y^2 + 
    840x^3y \nonumber \\ & + 60xy^3 - 120xy^2 + 60xy - 60y^3 + 122y^2 - 
    63y + 1)/2. \nonumber
\end{align}

\begin{enumerate}
    \item \textcolor{brown}{For all $ 0 \le x \le 1$ and $0 \le y \le 0.42$}, using Wolfram Mathematica we prove that $Gx > 0.001$.
    \item \textcolor{brown}{For all $ 0 \le x \le 0.85$ and $0.6 \le y \le 1$}, using Wolfram Mathematica we prove that $Gx < -0.001$.
    \item \textcolor{brown}{For all $0 \le x \le 1$ and $0.42 \le y \le 0.99$}, under the assumption that $R - M > 10^{20}$, using Wolfram Mathematica we prove that $term8 > 0.001$ and $term9 > -10$, which implies that $Gy >> 0.001$.
    \item \textcolor{brown}{For all $0.85 \le x \le 1$ and $0.99 \le y \le 1$}, using Wolfram Mathematica we prove that $term8 \ge 0$ and $term9 \ge 0.001$, which implies that $Gy > 0.001$.
\end{enumerate}



\setlength{\parindent}{0cm}

\subsubsection{Group G}

\begin{figure}[h]
    \centering
    \includegraphics[scale=0.5]{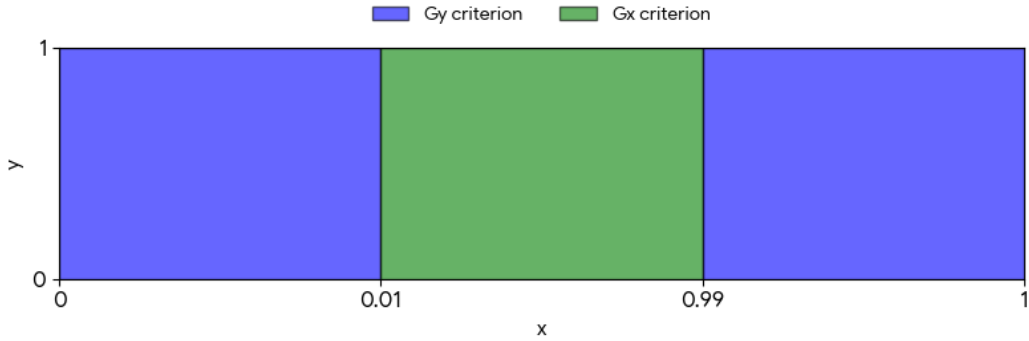}
    \caption{The graphical proof  that no  $\epsilon_0$-SOSP exists in Group G.}
    \label{fig:Group-G}
\end{figure}

The expressions for the derivatives with respect to $x$ and $y$ are the following:

\begin{align}
   Gx =&
  \frac{x}{2}\cdot\big(360x^3y^5 - 900x^3y^4 + 600x^3y^3 + 60x^3y + 
      60x^3(R - M - 1) - 720x^2y^5 + 1800x^2y^4 - 
      1200x^2y^3 \nonumber \\  & - 120x^2y + 2x^2(-60R + 60M + 59) + 
      360xy^5 - 900xy^4 + 600xy^3 + 60xy + 
      3x(20R - 20M - 19) - 1\big) 
\end{align}

\noindent
and

\begin{align}
   Gy =&
  180x^5y^4 - 360x^5y^3 + 180x^5y^2 + 6x^5 - 450x^4y^4 + 
   900x^4y^3 - 450x^4y^2 \nonumber \\  & - 15x^4 + 300x^3y^4 - 600x^3y^3 + 
   300x^3y^2 + 10x^3 - 15y^4 + 29y^3 - 27y^2/2 \nonumber \\  & - y/2 - 1/2. 
\end{align}

Now, we can simplify $Gx$ as follows:

$$Gx = term6\cdot(R-M)+term7,$$

where 

$$term6 = 30x^2(x - 1)^2$$

and

$$term7=\frac{x(x - 1)}{2}(360x^2y^5 - 900x^2y^4 + 600x^2y^3 + 60x^2y - 
    60x^2 - 360xy^5 + 900xy^4 - 600xy^3 - 60xy + 58x + 1).$$

\begin{enumerate}
    \item \textcolor{brown}{For all $ 0.01 < x \le 0.99$ and $0 \le y \le 1$}, under the assumption that $R - M > 10^{20}$, using Wolfram Mathematica we prove that $Gx > 0.0001$.
    \item \textcolor{brown}{For all $0 \le x \le 0.01$ and $0 \le y \le 1$}, using Wolfram Mathematica we prove that $Gy < -0.0001$.
    \item \textcolor{brown}{For all $ 0.99 \le x \le 1$ and $0 \le y \le 1$}, using Wolfram Mathematica we prove that $Gy > 0.0001$.
\end{enumerate}



\end{proof}

\setlength{\parindent}{0cm}

\subsubsection{Previously studied groups do not create approximate FOSPs}\label{sec: previous_groups}

\begin{figure}[h]
    \centering
    \includegraphics[scale=0.6]{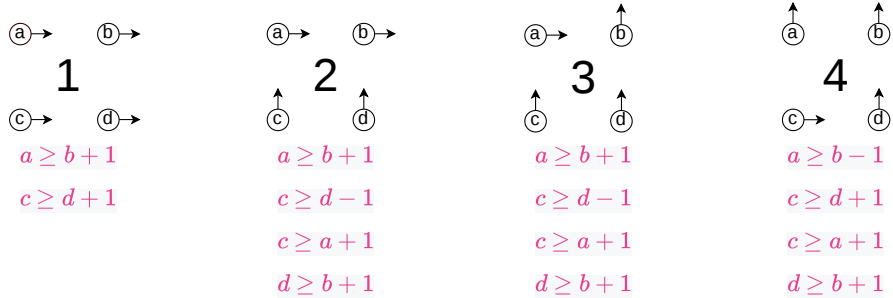}
    \caption{The standard Groups 1--4 used in the hardness proofs of \cite{fearnley2022complexity,hollender2023computational,kontogiannis2024computational}. We show that the modified biquintic interpolation used in our reduction does not introduce any $\epsilon_0$-FOSPs in these groups.}
    \label{fig:original_groups}
\end{figure}

We revisit the groups (denoted as Groups 1-4) used in the hardness proofs of \cite{fearnley2022complexity,hollender2023computational,kontogiannis2024computational}, where it was shown that no $\epsilon_0$-FOSP exist in these groups under bicubic/biquintic interpolation and when setting the values of the hessian at the corners equal to zero. We will prove that these groups still do not create any $\epsilon_0$-FOSP under biquintic interpolation when setting the values of the hessian at the corners according to our construction.

\begin{lemma}[Groups 1--4]\label{lem:original_groups}
    All points generated by the biquintic interpolation within the Groups 1--4 satisfy either the \ref{gy-criterion} or the \ref{gx-criterion}.
\end{lemma}

\textit{Proof.} We prove each one of the Groups 1--4 as follows.

\paragraph{Group 1.}

We are given the domain $x, y \in [0, 1]$ and the parameter constraints:
\begin{align}
    a &\ge b + 1 \\
    c &\ge d + 1 \nonumber
\end{align}

The expression for the partial derivative with respect to $x$ for this group is as follows:

\begin{align*}
    Gx = & x^2(1 - x)^2((b - a)(180y^5 - 450y^4 + 300y^3) + (d - c)(-180y^5 + 450y^4 - 300y^3 + 30)) \\ & + 15x^4 - 31x^3 + (33/2)x^2 - (1/2)x - 1/2
\end{align*}

Using Wolfram Mathematica, we prove that $Gx < -0.001$ over the entire domain. 


\paragraph{Group 2.}

We are given the domain $x, y \in [0, 1]$ and the parameter constraints:
\begin{align}
    a &\ge b + 1 \nonumber  \\
    c &\ge d - 1  \\
    d &\ge b + 1 \nonumber  \\
    c &\ge a + 1 \nonumber 
\end{align}

Let $P_1(t) = 180t^5 - 450t^4 + 300t^3$. Note that for $t \in [0, 1]$, $P_1(t) \ge 0$ and its absolute maximum is $30$.

\medskip

{\textcolor{brown}{Bounding $G_x$ for $y \ge \frac{2}{3}$}.} The partial derivative $G_x$ is as follows:
\begin{align*}
    G_x & = x^4y^5(-180a + 180b + 180c - 180d + 90) +  x^4y^4(450a - 450b - 450c + 450d - 225) \\ & + 
 x^4y^3(-300a + 300b + 300c - 300d + 150) + x^4(-30c + 30d)  +
  x^3y^5(360a - 360b - 360c + 360d - 180) \\ & + 
 x^3y^4(-900a + 900b + 900c - 900d + 450) + 
 x^3y^3(600a - 600b - 600c + 600d - 300) \\ & + 
 x^3(60c - 60d - 1) + 
 x^2y^5(-180a + 180b + 180c - 180d + 90) + 
 x^2y^4(450a - 450b - 450c + 450d - 225) \\ & + 
 x^2y^3(-300a + 300b + 300c - 300d + 150) + 
 x^2(-30c + 30d + 3/2) - x/2 - 3y^5 + 15y^4/2 - 5y^3
 \\ & = x^2(1-x)^2 \left[ (b-a)P_1(y) + (d-c)(30 - P_1(y)) + \frac{1}{2}P_1(y) \right] - x^3 + 1.5x^2 - 0.5x - 3y^5 + 7.5y^4 - 5y^3
\end{align*}

Let $B_x(y)$ denote the expression inside the large brackets. Because $P_1(y) \ge 0$ and $30 - P_1(y) \ge 0$, we can substitute the upper bounds $b-a \le -1$ and $d-c \le 1$:
\begin{equation*}
    B_x(y) \le -1(P_1(y)) + 1(30 - P_1(y)) + 0.5P_1(y) = 30 - 1.5P_1(y)
\end{equation*}

Since $P_1(y)$ is strictly increasing on $[0, 1]$, for $y \ge \frac{2}{3}$ its minimum is $P_1(\frac{2}{3}) = \frac{640}{27}$. Substituting this into the bounded bracket yields:
\begin{equation*}
    B_x(y) \le 30 - 1.5\left(\frac{640}{27}\right) = -\frac{50}{9}
\end{equation*}

Substituting this worst-case bracket back into $G_x$ gives us a bounding 1D polynomial in $x$:
\begin{equation*}
    g(x) = -\frac{50}{9}x^2(1-x)^2 - x^3 + 1.5x^2 - 0.5x
\end{equation*}
Maximizing $g(x)$ over $[0,1]$ shows that $g(x) \le 32/81 - 0.001$ globally. 
We are left with the strictly decreasing function $H(y) = -3y^5 + 7.5y^4 - 5y^3$. For $y \ge \frac{2}{3}$:
\begin{equation*}
    H(y) \le H\left(\frac{2}{3}\right) = -3\left(\frac{32}{243}\right) + 7.5\left(\frac{16}{81}\right) - 5\left(\frac{8}{27}\right) = -\frac{32}{81}
\end{equation*}

Thus, for $1 \ge y \ge \frac{2}{3}$, we obtain:
\begin{equation*}
    G_x \le -0.001
\end{equation*}

\medskip

{\textcolor{brown}{Bounding $Gy$ for all $y < \frac{2}{3}$}.} By factoring out a global $(1-y)$, $G_y$ can be written exactly as:
\begin{equation*}
    G_y = (1-y) \left[ B_y(x)(y^2 - y^3) + 2y^3 - y - 0.5 \right]
\end{equation*}
Where the inner bracket $B_y(x)$ is defined as:
\begin{equation*}
    B_y(x) = (b-d)P_1(x) + (c-a)[P_1(x) - 30] + 0.5P_1(x) - 15x + 9.5
\end{equation*}

We apply our remaining bounds $b-d \le -1$ and $c-a \ge 1$. Since $P_1(x) \ge 0$, we have $(b-d)P_1(x) \le -P_1(x)$. Furthermore, since $P_1(x)$ maxes out at $30$, the term $[P_1(x) - 30] \le 0$. Multiplying this negative value by $c-a \ge 1$ gives an upper bound of $1 \cdot [P_1(x) - 30]$.

Substituting these bounds into $B_y(x)$:
\begin{equation*}
    B_y(x) \le -P_1(x) + 1(P_1(x) - 30) + 0.5P_1(x) - 15x + 9.5 = 0.5P_1(x) - 15x - 20.5
\end{equation*}
Maximizing this expression on $[0,1]$ yields an absolute maximum less than $ -
18.2987$. Therefore, $B_y(x)$ is strictly negative. 
Because $y < \frac{2}{3}$, the term $(y^2 - y^3) > 0$. Multiplying a strictly negative $B_y(x)$ by a positive $(y^2 - y^3)$ guarantees that $B_y(x)(y^2 - y^3) \le 0$.
We only need to evaluate the remaining inner tail $T(y) = 2y^3 - y - 0.5$. The maximum of $T(y)$ on the interval $[0, \frac{2}{3}]$ occurs at $y=0$, giving $T(y) \le -0.5$.

Putting it all together, the contents of the large brackets are less or equal than $-0.5$. Because $y < \frac{2}{3}$, the outside multiplier $(1-y) > \frac{1}{3}$.
\begin{equation*}
    G_y \le (1-y)(-0.5) < \left(\frac{1}{3}\right)(-0.5) = -\frac{1}{6}
\end{equation*}

\paragraph{Group 3.}

We are given the domain $x, y \in [0, 1]$ and the parameter constraints:
\begin{align}
    a &\ge b + 1 \nonumber \\
    c &\ge d - 1 \label{group3} \\
    d &\ge b + 1 \nonumber \\
    c &\ge a + 1 \nonumber
\end{align}

The expressions for the derivatives with respect to $x$ and $y$ are the following:

\begin{align*}
   Gx = & x^4y^5(-180a + 180b + 180c - 180d + 90) + 
 x^4y^4(450a - 450b - 450c + 450d - 435/2) \\ & + 
 x^4y^3(-300a + 300b + 300c - 300d + 135) + x^4(-30c + 30d) +
  x^3y^5(360a - 360b - 360c + 360d - 186) \\ & + 
 x^3y^4(-900a + 900b + 900c - 900d + 450)  + 
 x^3y^3(600a - 600b - 600c + 600d - 280) \\  & + 
 x^3(60c - 60d - 1) + 
 x^2y^5(-180a + 180b + 180c - 180d + 99) \\ & + 
 x^2y^4(450a - 450b - 450c + 450d - 240) + 
 x^2y^3(-300a + 300b + 300c - 300d + 150) \\ & + 
 x^2(-30c + 30d + 3/2) - x/2 - 3y^5 + 15y^4/2 - 5y^3
\end{align*}

\noindent
and

\begin{align*}
   Gy =&
  x^5y^4(-180a + 180b + 180c - 180d + 90) + 
 x^5y^3(360a - 360b - 360c + 360d - 174) \\ & + 
 x^5y^2(-180a + 180b + 180c - 180d + 81) + 
 x^4y^4(450a - 450b - 450c + 450d - 465/2) \\ & + 
 x^4y^3(-900a + 900b + 900c - 900d + 450) + 
 x^4y^2(450a - 450b - 450c + 450d - 210) \\ & + 
 x^3y^4(-300a + 300b + 300c - 300d + 165) + 
 x^3y^3(600a - 600b - 600c + 600d - 320) \\ & + 
 x^3y^2(-300a + 300b + 300c - 300d + 150) - 15xy^4 + 
 30xy^3 - 15xy^2 + y^4(30a - 30c + 15/2) \\ & + 
 y^3(-60a + 60c - 17) + y^2(30a - 30c + 21/2) - y/2 - 1/2
\end{align*}

Now, we can simplify the above terms as follows:

$$Gx = Px - 30x^2(1 - x)^2((a - b - 1)H(y) + (c - d - 1)(1 - H(y))) + x(1 - x)(x - 1/2)$$

and 

$$Gy = Py - 30y^2(1 - y)^2((c - a - 1)(1 - H[x]) + (d - b - 1)H[x]) + y(1 - y)(y - 1/2)$$

where

\begin{align*}
    Px = -(1 - x)^2((-90y^5 + 435y^4/2 - 135y^3 + 30)x^2 + y^3(6y^2 - 30y/2 + 10)(x + 1/2)),
\end{align*}

\begin{align*}
    Py = & (1 - x)y^2((-90x^4 + 285x^3/2 - 45x^2/2 - 45x/2 - 15/2)y^2 + (174x^4 - 276x^3 \\ & + 44x^2 + 44x + 14)y - 81x^4 + 129x^3 - 21x^2 - 21x - 6) - 15y^2(y - 1)^2 - 1/2
\end{align*}

and 

$$H(t) = t^3(6t^2 - 15t + 10),$$

and also we define the terms of $Gx$ and $Gy$ that contain the corner values as follows:

$$gxRemainder = -30x^2(1 - x)^2((a - b - 1)H(y) + (c - d - 1)(1 - H(y)))$$

and

$$gyRemainder = -30y^2(1 - y)^2((c - a - 1)(1 - H(x)) + (d - b - 1)H(x)).$$

We also define

$$gx = Px + x(1 - x)(x - 1/2)$$

and 

$$gy = Py + y(1 - y)(y - 1/2)$$

Based on the above, we have simplified the partial derivatives as follows:

$$ Gx = gx + gxRemainder $$

and 

$$ Gy = gy + gyRemainder.$$

\medskip

Now, for all $x\in[0,1]$ and $y\in[0,1]$ for all $a,b,c,d$ satisfying \ref{group3}, using Wolfram Mathematica  we prove that $gyRemainder \le 0$. Moreover, for all $x\in[0,1]$ and $y\in[0,1]$ for all $a,b,c,d$ satisfying \ref{group3}, using Wolfram Mathematica, we prove that either (a) $gy \le -0.001$, or (b) $gx \le -0.1$ and $gxRemainder \le 0.01$. In case (a), using the fact that $gyRemainder \le 0$, we show that $Gy \le -0.001$. In case (b), we show that $Gy < -0.09$. Therefore, in either case, there is no $\epsilon_0$-FOSP in Group 3.

\paragraph{Group 4.}

We are given the domain $x, y \in [0, 1]$ and the parameter constraints:
\begin{align}
    a &\ge b - 1 \nonumber \\
    c &\ge d + 1 \label{group4} \\
    d &\ge b + 1 \nonumber \\
    c &\ge a + 1 \nonumber
\end{align}

The expressions for the derivatives with respect to $x$ and $y$ are the following:

\begin{align*}
   Gx = & 180x^4y^5(-a + b + c - d) + 15x^4y^4(60a - 60b - 60c + 60d - 1)/2 +15x^4y^3(-20a + 20b + 20c - 20d + 1) \\ &  - 15x^4y + 15x^4(-4c + 4d + 1)/2 + 6x^3y^5(60a - 60b - 60c + 60d + 1) + 900x^3y^4(-a + b + c - d) \\ &  + 20x^3y^3(30a - 30b - 30c + 30d - 1) + 30x^3y + x^3(60c - 60d - 17) + 9x^2y^5(-20a + 20b + 20c - 20d - 1) \\ &  + 15x^2y^4(30a - 30b - 30c + 30d + 1) + 300x^2y^3(-a + b + c - d) - 15x^2y + 3x^2(-20c + 20d + 7)/2 \\ &  - x/2 + 3y^5 - 15y^4/2 + 5y^3 - 1/2
\end{align*}

\noindent
and

\begin{align*}
   Gy =&
  180x^5y^4(-a + b + c - d) +  6x^5y^3(60a - 60b - 60c + 60d - 1) + 9x^5y^2(-20a + 20b + 20c - 20d + 1) \\ & - 3x^5 + 15x^4y^4(60a - 60b - 60c + 60d + 1)/2 + 900x^4y^3(-a + b + c - d) \\ & + 15x^4y^2(30a - 30b - 30c + 30d - 1) + 15x^4/2 + 15x^3y^4(-20a + 20b + 20c - 20d - 1) \\ & + 20x^3y^3(30a - 30b - 30c + 30d + 1) + 300x^3y^2(-a + b + c - d) - 5x^3 \\ & + 15xy^4 - 30xy^3 + 15xy^2 + 15y^4(4a - 4c +1)/2 + 15y^3(-4a + 4c - 1) \\ & + 15y^2(4a - 4c + 1)/2 - y/2
\end{align*}

Now, we can simplify the above terms as follows:

$$Gx = Px - 30x^2(1 - x)^2((a - b + 1)H(y) + (c - d - 1)(1 - H(y))) + 60x^2(1 - x)^2H(y) + x(1 - x)(x - 1/2)$$

and 

$$Gy = Py - 30y^2(1 - y)^2((c - a - 1)(1 - H(x)) + (d - b - 1)H(x)) + y(1 - y)(y - 1/2)
$$

where

\begin{align*}
    Px = -(1 - x)^2((15y^4/2 - 15y^3 + 15y + 45/2)x^2 + (1 - y)^3(6y^2 + 3y + 1)(x + 1/2)),
\end{align*}

\begin{align*}
    Py = & (1 - x)^3(1 - y)^2(-15/2(x + 1)y^2 + (6x^2 + 3x + 1)(y + 1/2)) - 15y^2(y - 1)^2 - 1/2
\end{align*}

and 

$$H(t) = t^3(6t^2 - 15t + 10),$$

and also we define the terms of $Gx$ and $Gy$ that contain the corner values as follows:

$$gxRemainder = -30x^2(1 - x)^2((a - b + 1)*H(y) + (c - d - 1)(1 - H(y))$$

and

$$gyRemainder = -30y^2(1 - y)^2((c - a - 1)(1 - H(x)) + (d - b - 1)H(x))$$.

We also define

$$gx = Px + 60x^2(1 - x)^2H(y) + x(1 - x)(x - 1/2)$$

and 

$$gy = Py + y(1 - y)(y - 1/2)$$

Based on the above, we have simplified the partial derivatives as follows:

$$ Gx = gx + gxRemainder $$

and 

$$ Gy = gy + gyRemainder.$$

\medskip

Now, for all $x\in[0,1]$ and $y\in[0,1]$ for all $a,b,c,d$ satisfying \ref{group4}, using Wolfram Mathematica  we prove that $gxRemainder \le 0$ and $gyRemainder \le 0$. Moreover, for all $x\in[0,1]$ and $y\in[0,1]$, using Wolfram Mathematica, we prove that either (a) $gy \le -0.001$, or (b) $gx \le -0.001$. Therefore, in either case, there is no $\epsilon_0$-FOSP in Group 4.

\qedsymbol

\setlength{\parindent}{0cm}

\subsubsection{No approximate FOSPs are created on the boundary}\label{sec:boundary}

One subtlety in the proof that no spurious SOSP is created away from areas that correspond to \textsc{Iter} solutions is that close to the boundary the gradient norm might be large but the proximal gradient small. Moreover, for points lying exactly on the boundary the SOSP criterion does not consider the minimal eigenvalue of the Hessian but rather the minimal eigenvalue across feasible directions. The latter fact does not affect our analysis, since the only Groups appearing on the boundary of our construction are Groups 1 and 2, where we prove the non existence of SOSP without relying on the Hessian eigenvalues, effectively proving the non existence of FOSP. Next we focus on lower bounding the proximal gradient norm given our analysis of the standard gradient.

\medskip

The proximal gradient of the continuous function $f$ that we construct over the $[0,N]^2$ box is defined as follows:
$$g_{\pi }(u) := L_1 \cdot \left(\pi_{\mathcal{X}}\left(u - \frac{1}{L_1} \nabla f(u)\right) - u\right)$$
where $\mathcal{X}=[0,N]^2$ and  $L_1$ is the smoothness constant of $f$. This constant scales linearly with $N$ and it is much larger than $1$. Next we will show some useful properties of the proximal gradient when the constraint set is a box as is the case in our construction. In particular the proximal gradient components are decoupled with respect to the original $x-$ and $y-$derivatives.

\begin{proposition}
    The proximal gradient vector $g_\pi(u)$ separates into independent components $g_{\pi, x}$ and $g_{\pi, y}$.
\end{proposition}
\begin{proof}

Let the unprojected gradient step be denoted as $v = u - \frac{1}{L_1} \nabla f(u)$. The components of $v$ are:
$$v_x = x - \frac{1}{L_1}  \frac{\partial f}{\partial x}(u)$$
$$v_y = y - \frac{1}{L_1}  \frac{\partial f}{\partial y}(u)$$
By definition, the projection operator $\pi_{\mathcal{X}}(v)$ finds the point $\mathbf{z} = (z_x, z_y) \in \mathcal{X}$ that minimizes the squared Euclidean distance to $v$:
$$\pi_{\mathcal{X}}(v) = \arg\min_{\mathbf{z} \in \mathcal{X}} \left( (v_x - z_x)^2 + (v_y - z_y)^2 \right)$$
Because the domain $\mathcal{X}$ is a Cartesian product $\mathcal{X} = [0, N] \times [0, N]$, the choice of $z_x$ does not constrain the choice of $z_y$. Minimizing the sum of these independent, non-negative squared terms over $\mathcal{X}$ is mathematically identical to minimizing each term independently over its respective interval:
$$\pi_{\mathcal{X}}(v) = \left( \arg\min_{z_x \in [0, N]} (v_x - z_x)^2, \arg\min_{z_y \in [0, N]} (v_y - z_y)^2 \right)$$
This simplifies to the independent 1D projections of the coordinates:
$$\pi_{\mathcal{X}}(v) = \Big( \pi_{[0, N]}(v_x), \pi_{[0, N]}(v_y) \Big)$$
Substituting this decoupled projection back into the definition of the proximal gradient yields:
$$g_\pi(u) = L_1 \left( \Big( \pi_{[0, N]}(v_x), \pi_{[0, N]}(v_y) \Big) - (x,y)\right)$$
Because vector addition and scalar multiplication operate element-wise, we can separate the proximal gradient into distinct $x$ and $y$ components:
$$g_{\pi, x} = L_1 \left( \pi_{[0, N]}(x - \frac{1}{L_1}  \frac{\partial f}{\partial x}(u)) -x \right)$$

\noindent
and

$$g_{\pi, y} = L_1 \left( \pi_{[0, N]}(y - \frac{1}{L_1}  \frac{\partial f}{\partial y}(u)) -y \right)$$

\medskip
Thus, the proximal gradient is fully decoupled under box constraints. 

\end{proof}

\medskip 

Next we observe that for all small boxes that lie at distance greater or equal to $1$ from the boundary if
the gradient step $v_x$ overshoots the domain then $|g_{\pi, x}| \ge L_1>1$ and similarly for $g_{\pi, x}$. Since $\|g_\pi(u)\|_2 \ge |g_{\pi, x}|$ and $ \|g_\pi(u)\|_2 \ge |g_{\pi, y}|$ an overshoot would immediately imply that the proximal gradient norm is greater than $1$ and thus the point of interest is not a SOSP. If the step remains in the domain then the proximal gradient equals the negative gradient and all arguments made for the gradient norm carry over to the proximal gradient norm.

\medskip 

Thus, we only need to focus on small boxes that lie on the boundary. These boxes belong to either Group 1 or Group 2 and thus each point in them has either a sufficiently active gradient component pointing perpendicular to the boundary towards the interior or a sufficiently active gradient component pointing parallel to the boundary. In the first case an overshoot would imply that the gradient step is (much) greater than $1$ and thus as we discussed earlier that the proximal gradient norm is greater than $1$. In the second case, which only occurs on points exactly on the boundary an overshoot would imply also a gradient step much greater than $1$ which would reach the corner lying on the opposite side of the $[0,N]^2$ box. Consequently, any boundary overshoot guarantees a large proximal gradient. Conversely, if no overshoot occurs, the lower bounds established for the unconstrained gradient apply directly to the proximal gradient. In either scenario, the proximal gradient remains strictly bounded away from zero, thereby ruling out the existence of a SOSP.

\subsubsection{Classifying small boxes into groups} \label{sec: classification}

\begin{figure}[h]
    \centering
\includegraphics[scale=0.35]{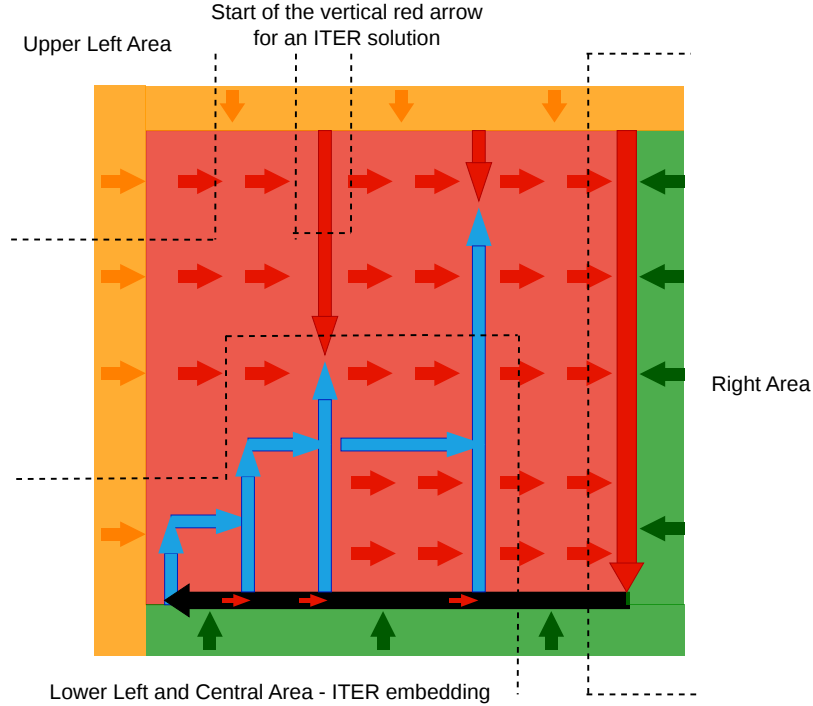}
    \caption{High-level overview of our reduction. There are four highlighted areas defined inside the dashed lines: the lower left and central area--\ITER embedding, the upper left area,  the start of the vertical red arrow for an \ITER solution, and the right area. These areas represent all different regions of the box which we must examine in our reduction.}
    \label{fig: diagram_areas}
\end{figure}

\begin{figure}[h]
    \centering
\includegraphics[scale=0.34]{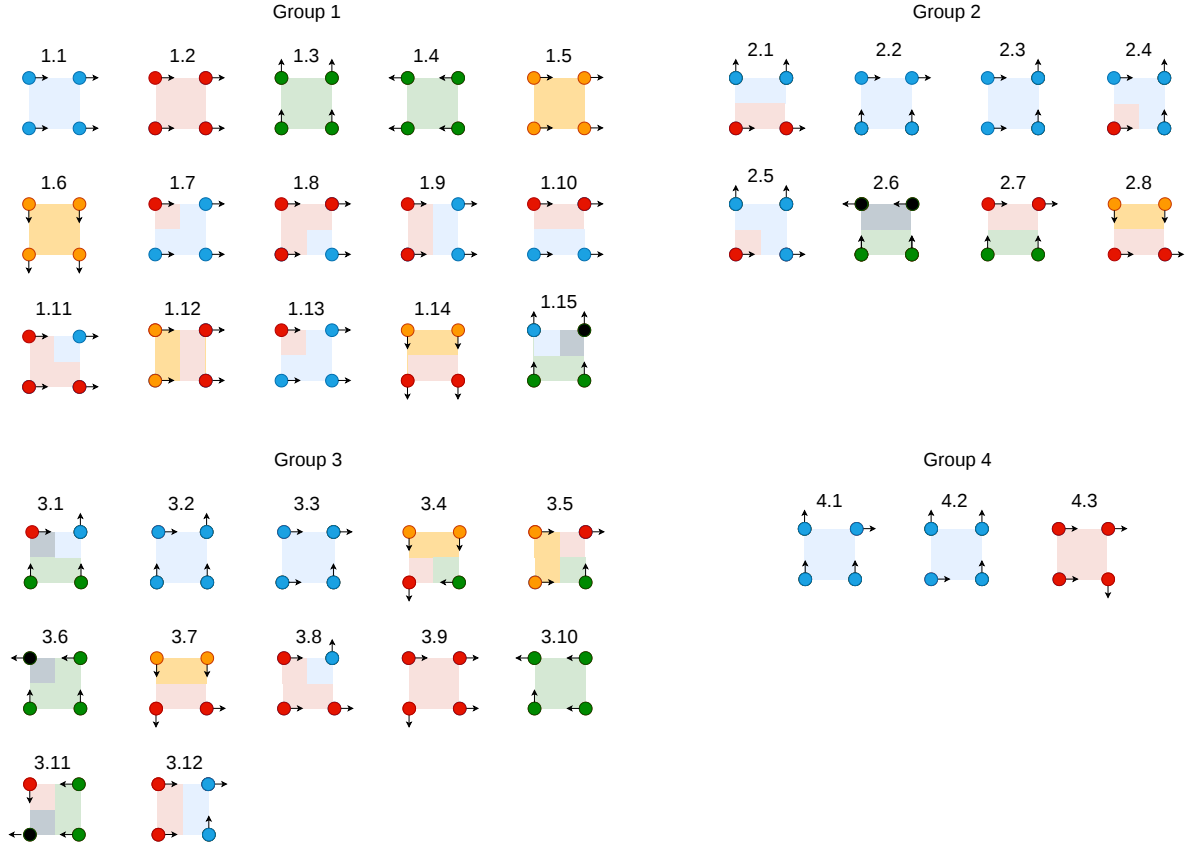}
    \caption{All different small grids appearing in our construction and represented by Groups 1--4.}
    \label{fig: groups1-4-all}
\end{figure}

\begin{figure}[h]
    \centering
\includegraphics[scale=0.5]{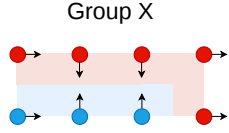}
    \caption{Group X represents the only small boxes where $(\epsilon, \epsilon)$-SOSPs are guaranteed to exist. Group X exists only in the uppermost area of every blue column that corresponds to an \ITER solution.}
    \label{fig: group-x}
\end{figure}

Now we are ready to examine all critical areas of our construction and check whether the generated approximate SOSPs of the interpolated function $f$ only exist around \ITER solutions. 
It is easy to see that all different groups of our construction can be found in the four designated areas highlighted with dashed lines in Figure \ref{fig: diagram_areas}. 
The detailed illustration of the groups appearing in these designated areas can be found in Figure \ref{fig: iter_new_groups} and Figure \ref{fig: remaining_areas}.

\medskip
All small boxes of our construction represented by the Groups 1--4 can be found in Figure \ref{fig: groups1-4-all}. 
In particular, for each of these small boxes we will show that they are symmetric to the representative of one of the Groups 1--4. 
We will achieve this by applying a sequence of transformations on the small box so that the latter fits to the conditions of the group.  
At this point, the only things we need to check is that the transformations we consider do not introduce spurious stationary points when applied to the small boxes.

\medskip

Similar to \cite{fearnley2022complexity}, the transformations we consider for Groups 1--4 are the following:
\begin{enumerate}
    \item Reflection with respect to the $y=b+1/2$-axis. Applying this transformation moves the top two corners of the box at the bottom (and vice-versa) and flips the sign of the $y$-coordinate of each arrow.
    \item Reflection with respect to the $x=a+1/2$-axis. Applying this transformation moves the left two corners of the box to the right (and vice-versa) and flips the sign of the $x$-coordinate of each arrow.
    \item Reflection with respect to the axis $y=x-a+b$. Applying this transformation swaps the corners $(a,b+1)$ and $(a+1,b)$ of the box and the $x$ - and $y$-coordinate of the arrows at all four corners.
    \item Reflection with respect to the axis $y=-x+a+1+b$. Applying this transformation swaps the corners $(a,b)$ and $(a+1,b+1)$ of the box and the $x$ - and $y$-coordinate of the arrows at all four corners and flips the sign of the $x-$ and $y-$ coordinates of each arrow.
    \item Negation: Applying this transformation flips the sign of the values and the sign of the $x-$ and $y-$ coordinates of the arrows at the four corners.
    \item Rotation by $90\degree$, $180\degree$, or $270\degree$ around the center of the box.
\end{enumerate}

It is easy to validate that the above transformations "commute" with our biquintic interpolation. 
Without loss of generality, we consider the small box $\text{Box}(a,b)$. Applying transformation $1$ to the small box and then taking the biquintic interpolation of the reflected box yields the same result as taking the interpolation of the original box and then applying the reflection to the interpolated function (which corresponds to considering
$(x, y) \rightarrow f (x, 1 - y)$). Similarly, transformation $2$ has the same effect as taking $f (1-x, y)$, transformation $3$ the same as taking $f (y,x)$, transformation $4$ the same as taking $f (1-y,1-x)$ and  transformation $5$ the same as taking $-f (x,y)$. 
As for the transformation 5, it is easy to check that any rotation of the small box by $90\degree$, $180\degree$, or $270\degree$ does not introduce any spurious stationary points.
Obviously, since the biquintic interpolation does not introduce any $\epsilon$-stationary points in Groups 1--4, then applying any reflection, or taking the negation, or rotating the small box does not introduce any spurious $\epsilon$-approximate stationary points as well.

\medskip

Specifically, for each small grid appearing in Figure \ref{fig: groups1-4-all} we apply the following transformations:

\begin{itemize}
    \item \textbf{Group 1}: It is easy to check that all small grids of this group trivially satisfy the group's conditions. Specifically, we apply the transformation of rotation in 1.4, 1.6, 1.14 and 1.15.
    
    \item \textbf{Group 2}. We examine each case separately and apply the following transformations: \\ 
    2.1: Negation and Rotation \\
    2.2: Satisfies group conditions without any transformation\\
    2.3: Appears in the proof of \cite{fearnley2022complexity}---we use the same arguments \\
    2.4: Reflection with respect to $y=x-a+b$\\
    2.5: Appears in the proof of \cite{fearnley2022complexity}---we use the same arguments \\
    2.6: Reflection with respect to $y=x-a+b$\\
    2.7: Satisfies group conditions without any transformation\\
    2.8: Rotation on 2.4
    
    \item \textbf{Group 3}. We examine each case separately and apply the following transformations: \\
    3.1: Satisfies group conditions without any transformation\\
    3.2: Satisfies group conditions without any transformation\\
    3.3: Appears in the proof of \cite{fearnley2022complexity}---we use the same arguments \\
    3.4: Rotation\\
    3.5: Reflection with respect to $y=x-a+b$ \\
    3.6: Reflection with respect to $y=x-a+b$\\
    3.7:  Reflection with respect to $y=-x+a+1+b$ and Negation\\
    3.8: Reflection with respect to $y=-x+a+1+b$, Negation and then Rotation\\
    3.9: Negation\\
    3.10: Rotation\\
    3.11: Reflection with respect to $y=-x+a+1+b$ and then Rotation
    3.12: Reflection with respect to $y=x-a+b$

    \item \textbf{Group 4}. Each of the cases 4.1, 4.2 and 4.3 appear in the proof of \cite{fearnley2022complexity}, so we use the same arguments to show that they can be transformed to the representative of Group 4.
\end{itemize}

Regarding the small boxes represented by the new Groups A--G, all such cases are illustrated in detail in Figure \ref{fig: iter_new_groups}.
For each small box represented by either Groups 1--4 or Groups A--G, we have shown that no $(\epsilon, \epsilon)$-SOSP is introduced.

\medskip

Finally, the only cases that we have not examined yet are the small boxes $\text{Box}(6k^*-3,6k^*+2)$, $\text{Box}(6k^*-2,6k^*+2)$ and $\text{Box}(6k^*-1,6k^*+2)$ which only exist on the blue columns that correspond to \ITER solutions $k^*$. 
We classify such boxes into the Group X (see Figure \ref{fig: group-x}).
Since the domain is compact, the global minimum always exists, and thus $(\epsilon, \epsilon)$-SOSPs always exist.
Putting everything together, every $(\epsilon, \epsilon)$-SOSP of function $f$ always corresponds to an \ITER solution $k^*$.

\subsection{Lipschitzness, Smoothness and Hessian-Lipschitzness}

\begin{lemma}\label{lem:lipschitz}
The constructed function $f$ is $L$-Lipschitz, $L_1$-smooth and $L_2$-Hessian Lipschitz in $\mathbb{R}^2$, with $L\leq2^{7}N$, $L_1\leq2^{73}N$ and $L_2\leq2^{75}N$.    
\end{lemma}

\begin{proof}

First, we bound the absolute values of the polynomial coefficients derived by the biquintic interpolation. As we discussed in Section \ref{sec: biquintic}, the matrix associated with these terms is calculated by the formula
\[C^{a,b}=A^{-1}\cdot V^{a,b} \cdot {(A^{-1})}^{\top}.\]

Now, we can bound the Frobenius norm of this matrix as follows:
\[\|C^{a,b}\|\leq \|A^{-1}\| \|V^{a,b}\| \|{(A^{-1})}^{\top}\|\]
One can easily verify that $\|A^{-1}\| = \|{(A^{-1})}^{\top}\| <2^5$. For the Frobenius norm of $V^{a,b}$ we have $\|V^{a,b}\|=\sqrt{12\left(\frac{1}{2}\right)^2+\Phi^2(a,b)+\Phi^2(a+1,b)+\Phi^2(a,b+1)+\Phi^2(a+1,b+1)}\leq\sqrt{12\left(\frac{1}{2}\right)^2+4((10^{16}+1)N)^2}<3\cdot 10^{16}N+2$, where we used the fact that $|\Phi(x,y)|$ is upper bounded by $(10^{16}+1)N$. Therefore, we show that $\|C^{a,b}\|<2^{10}(2^{55} N+2) $.

\medskip

Next we will bound the Lipschitz continuity constant, the smoothness constant and the Hessian Lipschitz constant of the polynomial inside each small box.  

\medskip

For the Lipschitz constant $L$ we have that
\begin{align*}
  L &= \max_{x\in[a,a+1], y\in[b,b+1]}\{\|\nabla f(x,y)\|\}\\
  &\leq \max_{x\in[a,a+1], y\in[b,b+1]}\{|f_{x}(x,y)|+|f_{y}(x,y)|\}\\
  &\leq (5+5)\|C^{a,b}\|\\
  &= 10\|C^{a,b}\|
\end{align*}
We note that in the last inequality we used the fact that for $x\in[a,a+1]$ and $y\in[b,b+1]$:
\begin{itemize}
    \item $|f_{x} (x,y)|\leq\sum\limits_{i=0}^{5}\sum\limits_{j=0}^{5}|c_{i,j}^{a,b}|i (x-a)^{i-1
    }(y-b)^j\leq \sum\limits_{i=0}^{5}\sum\limits_{j=0}^{5}5|c_{i,j}^{a,b}|=5\|C^{a,b}\|$
    \item $|f_{y} (x,y)|\leq\sum\limits_{i=0}^{5}\sum\limits_{j=0}^{5}|c_{i,j}^{a,b}|j(x-a)^i(y-b)^{j-2}\leq \sum\limits_{i=0}^{5}\sum\limits_{j=0}^{5}5|c_{i,j}^{a,b}|=5\|C^{a,b}\|$
\end{itemize}
Thus we have shown that $f$ in each small box is $L$-Lipschitz continuous with $L\leq 10\|C^{a,b}\|< 2^{7}N$.\\

For the smoothness constant $L_1$ we have 
\begin{align*}
  L_1 &= \max_{x\in[a,a+1], y\in[b,b+1]}\{\lambda_{max}(\nabla^2f(x,y))\}\\
  &\leq \max_{x\in[a,a+1], y\in[b,b+1]}\{\|\nabla^2f(x,y)\|\}\\
  &\leq \max_{x\in[a,a+1], y\in[b,b+1]}\{|f_{xx}(x,y)|+|f_{yy}(x,y)|+2|f_{xy}(x,y)|\}\\
  &\leq (20+20+2\cdot25)\|C^{a,b}\|\\
  &= 90\|C^{a,b}\|
\end{align*}
For the last inequality we used  that for $x\in[a,a+1]$ and $y\in[b,b+1]$ the following hold:
\begin{itemize}
    \item $|f_{xx} (x,y)|\leq\sum\limits_{i=0}^{5}\sum\limits_{j=0}^{5}|c_{i,j}^{a,b}|i(i-1)(x-a)^{i-2}(y-b)^j\leq \sum\limits_{i=0}^{5}\sum\limits_{j=0}^{5}20|c_{i,j}^{a,b}|=20\|C^{a,b}\|$
    \item $|f_{yy} (x,y)|\leq\sum\limits_{i=0}^{5}\sum\limits_{j=0}^{5}|c_{i,j}^{a,b}|j(j-1)(x-a)^i(y-b)^{j-2}\leq \sum\limits_{i=0}^{5}\sum\limits_{j=0}^{5}20|c_{i,j}^{a,b}|=20\|C^{a,b}\|$
    \item $|f_{xy} (x,y)|\leq\sum\limits_{i=0}^{5}\sum\limits_{j=0}^{5}|c_{i,j}^{a,b}|ij(x-a)^{i-1}(y-b)^{j-1}\leq \sum\limits_{i=0}^{5}\sum\limits_{j=0}^{5}25|c_{i,j}^{a,b}|=25\|C^{a,b}\|$
\end{itemize}
Thus we have that the defined function $f$ in each small box is $L_1$-smooth with $L_1\leq90\|C^{a,b}\|< 2^{73}N$.

\medskip

Next we will bound the Hessian-Lipschitz constant of the polynomials in each small box. For this purpose we bound the Lipschitz constant of each of the entries of the Hessian. For $f_{xx}$ we have:
\begin{align*}
    L_{xx,ab} &= \max_{x\in[a,a+1], y\in[b,b+1]}\|\nabla f_{xx}(x,y)\|\\
     &\leq \max_{x\in[a,a+1], y\in[b,b+1]}\{|f_{xxx}(x,y)|+|f_{xxy}(x,y)|\}\\
     &\leq \max_{x\in[a,a+1], y\in[b,b+1]}\{\sum\limits_{i=0}^{5}\sum\limits_{j=0}^{5}|c_{i,j}^{a,b}|i(i-1)(i-2)(x-a)^{i-3}(y-b)^j+\sum\limits_{i=0}^{5}\sum\limits_{j=0}^{5}|c_{i,j}^{a,b}|i(i-1)j(x-a)^{i-2}(y-b)^{j-1}\}\\
     &\leq (5\cdot4\cdot3+5\cdot4\cdot5)\|C^{a,b}\|\\
  &= 160\|C^{a,b}\|
\end{align*}
Similarly, for $f_{yy}$ we obtain 

$$L_{yy,ab} = \max_{x\in[a,a+1], y\in[b,b+1]}\|\nabla f_{yy}(x,y)\|\leq 160\|C^{a,b}\|$$ 

and 
$$L_{xy,ab} = \max_{x\in[a,a+1], y\in[b,b+1]}\|\nabla f_{xy}(x,y)\|\leq 200\|C^{a,b}\|.$$

\medskip

For the Hessian Lipschitzness we have:
\begin{align*}
  \|\nabla^2 f(x_1,y_1)- \nabla^2 f(x_2,y_2)\|
  &\leq \sqrt{(L^2_{xx,ab}+L^2_{yy,ab}+2L^2_{xy,ab})\left((x_1-x_2)^2+(y_1-y_2)^2\right)}\\
  &\leq \|C^{a,b}\|\sqrt{2(160^2+200^2)}\sqrt{(x_1-x_2)^2+(y_1-y_2)^2}\\
  &\leq 400\|C^{a,b}\|\sqrt{(x_1-x_2)^2+(y_1-y_2)^2}\\
  &\leq 2^{19}(2^{55} N+2)\sqrt{(x_1-x_2)^2+(y_1-y_2)^2}
\end{align*}

Putting everything together, we conclude that $L_2=2^{75}N$ is a Hessian-Lipschitz constant for all small boxes. 
It remains to show that the Lipschitz continuity and smoothness properties also hold for $f$. This is easily proved using the simple argument that can be found in the proof of Lemma 4.2 in \cite{fearnley2022complexity}. 
We note that the argument is originally stated for the smoothness constant $L_1$, but it can be trivially also shown for $L$ and $L_2$ following similar steps.

\end{proof}

\subsection{Polynomial-time Turing machines that evaluate $f$, $\nabla f$ and $\nabla^2 f$}
\begin{lemma}\label{lem: circuits}
    There  exist polynomial-time Turing machines $\mathcal{C}_f$, $\mathcal{C}_{\nabla f}$, $\mathcal{C}_{\nabla^2 f}$ such that given two numbers $x, y \in[0,N]$ with bit complexity $ \operatorname{len}(x)$ and $\operatorname{len}(y)$ respectively we can compute the value, the gradient of $f$ and the hessian of $f$ at any point in time that is polynomial in  $\operatorname{len}(x)$, $\operatorname{len}(y)$  and in the size of the Boolean circuit $C$ of \ITER.
\end{lemma}

\begin{proof}

We will describe how to implement Polynomial-time Turing machines that evaluate $f$ and its derivatives given a point $(x,y)\in[0,N]^2$. $N$ is a natural number such that $N=\mathcal{O}\left(2^n\right)$. This gives len$(N)=\mathcal{O}(n)$. Since $C$ is a boolean circuit with $n$ inputs and $n$ outputs, we get that $\mathcal{O}(n)$ is certainly polynomial in the size of $C$. Our goal is to calculate the value of $f$ and its derivatives in time polynomial in $n$, $\text{len}(x)$ and $\text{len}(y)$.
To achieve this we first need to identify the small box where $(x, y)$ belongs. In particular, we need to find integers $a,b$ such that $a\leq x\leq a+1$ and $b\leq y\leq b+1$. This can be done efficiently via rounding in time polynomial in $n$, $\text{len}(x)$ and $\text{len}(y)$. Next, we need to compute the value of $f$ and its first and second order derivatives on each of the four corners of $\text{Box}(a,b)$. The second order derivatives have the same formula for all points in the grid. To calculate the value of $f$ and its gradient on the corners we use the piecewise formulas from subsection \ref{sec: function-on-grid} that describe $f$ and $\nabla f$ on the grid and convert them to polynomial time mappings.

Given a grid point $(a,b) \in \G$ the value of $f(a,b)$ is computed as follows:

\allowdisplaybreaks

$$
f(a,b) =
\begin{cases}
    \Phi_{B}(a,b), & \text{if } k = \lfloor \frac{a+4}{6} \rfloor \in \text{Columns, } 6k-3 \leq a \leq 6k-1 \text{ and } 3 \leq b \leq 6k+2,\\
    &\quad \text{or } b = 2, \; k = \lfloor \frac{a+4}{6} \rfloor \in \text{Columns} \text{ and } a = 6k-2,\\
    &\quad \text{or } k = \lfloor \frac{a}{6} \rfloor \in [2^n], \; C(k)>k, \; C(k)\in\text{Columns, } 6k \leq a \leq 6k+2 \text{ and } 6k+1 \leq b \leq 6k+2,\\
    &\quad \text{or } (b \equiv 1 \text{ or } 2 \pmod 6), \; l = \lfloor \frac{b}{6} \rfloor \in \text{Columns, } k = \lfloor \frac{a+3}{6} \rfloor \in \text{Columns, } C(l)>k>l \\
    &\qquad \text{and } 6k \leq a \leq 6k+2, \\
    &\quad \text{or } (b \equiv 1 \text{ or } 2 \pmod 6), \; l = \lfloor \frac{b}{6} \rfloor \in \text{Columns, } k = \lfloor \frac{a+3}{6} \rfloor \in [2^n]\setminus\text{Columns, } C(l)>k>l \\
    &\qquad \text{and } 6k-3 \leq a \leq 6k+2 \\\\
    
    \Phi_{M}(a,b), & \text{if } b = 2, \; k = \lfloor \frac{a+4}{6} \rfloor \in \text{Columns} \text{ and } 6k-1 \leq a \leq 6k+1,\\
    &\quad \text{or } b =2, \; k = \lfloor \frac{a+4}{6} \rfloor \in [2^n]\setminus\text{Columns} \text{ and } 6k-4 \leq a \leq 6k+1,\\
    &\quad \text{or } b =2 \text{ and } 6\cdot 2^n+2 \leq a \leq 6\cdot 2^n+4\\\\
    
    \Phi_{G}(a,b), & \text{if } 2  \leq a \leq 6\cdot 2^n+6 \text{ and } 0 \leq b \leq 1,\\
    &\quad \text{or } 6\cdot 2^n+5 \leq a \leq 6\cdot 2^n+6 \text{ and } 2 \leq b \leq 6\cdot 2^n+4\\\\

    \Phi_{O}(a,b), & \text{if } 0  \leq a \leq 1 \text{ and } 0 \leq b \leq 6\cdot 2^n+6,\\
    &\quad \text{or } 2 \leq a \leq 6\cdot 2^n+6 \text{ and } 6\cdot 2^n+5 \leq b \leq 6\cdot 2^n+6\\\\
    
    \Phi_{R}(a,b),& \text{otherwise}
\end{cases}
$$

\bigskip

\bigskip

We can write $\nabla f$ in a similar way. Recall that we set the gradient at every point $(a, b) \in \G$ to be one of the four possible cardinal directions, i.e., \textit{Left} $(1/2,0)$, \textit{Right} $(-1/2,0)$, \textit{Up} $(0,-1/2)$, or \textit{Down} $(0,1/2)$.

$$
\nabla f(a,b) =
\begin{cases}
    Up, & \text{if } 2  \leq a \leq 6\cdot 2^n+6 \text{ and } 0 \leq b \leq 1, \\
    &\quad \text{or } k = \lfloor \frac{a+4}{6} \rfloor \in \text{Columns, } 6k-2 \leq a \leq 6k-1 \text{ and } 2 \leq b \leq 6k+1,\\
    &\quad \text{or } k = \lfloor \frac{a+4}{6} \rfloor \in \text{Solutions, } 6k-2 \leq a \leq 6k-1 \text{ and } b = 6k+2,\\
    &\quad \text{or } k = \lfloor \frac{a}{6} \rfloor \in [2^n], \; C(k)>k, \; C(k)\in\text{Columns, } 6k \leq a \leq 6k+2 \text{ and } b = 6k+1,\\
    &\quad \text{or } b \equiv 1 \pmod 6, \; l = \lfloor \frac{b}{6} \rfloor \in \text{Columns, } k = \lfloor \frac{a+3}{6} \rfloor \in \text{Columns, } C(l)>k>l \\
    &\qquad \text{and } 6k+1 \leq a \leq 6k+2,\\
    &\quad \text{or } b \equiv 1 \pmod 6, \; l = \lfloor \frac{b}{6} \rfloor \in \text{Columns, } k = \lfloor \frac{a+3}{6} \rfloor \in [2^n]\setminus\text{Columns, } C(l)>k>l \\
    &\qquad \text{and } 6k-3 \leq a \leq 6k+2, \\
    &\quad \text{or } b \equiv 1 \pmod 6, \; l = \lfloor \frac{b}{6} \rfloor \in [2^n], \; k = \lfloor \frac{a+4}{6} \rfloor \in \text{Columns, } C(l) \geq k > l \\
    &\qquad \text{and } a = 6k-3 \\\\
    
    Left, & \text{if } b = 2, \; k = \lfloor \frac{a+4}{6} \rfloor \in \text{Columns} \text{ and } 6k \leq a \leq 6k+1,\\
    &\quad \text{or } b = 2, \; k = \lfloor \frac{a+4}{6} \rfloor \in [2^n]\setminus\text{Columns} \text{ and } 6k-4 \leq a \leq 6k+1,\\
    &\quad \text{or } b = 2 \text{ and } 6\cdot 2^n+2 \leq a \leq 6\cdot 2^n+4,\\
    &\quad \text{or } 6\cdot 2^n+5 \leq a \leq 6\cdot 2^n+6 \text{ and } 2 \leq b \leq 6\cdot 2^n+4\\\\

    Down, & \text{if } 2 \leq a \leq 6\cdot 2^n+6 \text{ and } 6\cdot 2^n+5 \leq b \leq 6\cdot 2^n+6,\\
    &\quad \text{or } a = 6\cdot 2^n+4 \text{ and } 3\leq b \leq 6\cdot 2^n+4,\\
    &\quad \text{or } k = \lfloor \frac{a+4}{6} \rfloor \in \text{Solutions, } 6k-2 \leq a \leq 6k-1 \text{ and } 6k+3 \leq b \leq 6\cdot 2^n+4,\\
    &\quad \text{or } b = 3, \; k = \lfloor \frac{a+5}{6} \rfloor \in \text{Columns}, \; k > 1 \text{ and } a = 6k-5 \\\\
    
    Right ,& \text{otherwise}
\end{cases}
$$

\medskip

The above piecewise functions can be evaluated in polynomial time as they only involve a constant number of comparison operators, roundings, simple additions and multiplications and calls to the Boolean circuit $C$. The checking of whether an \ITER node  $v$ belongs to the set \text{Columns} can be performed efficiently by using a single call to the circuit $C$ of the set \text{Solutions}, and the checking of whether an \ITER node $v$ belongs to the set \text{Solutions} only requires two calls to the circuit $C$. Any evaluation of $C$ takes linear time in the size of $C$. Finally, once we compute the corner values of the small box, we need to compute the biquintic interpolation which involves solving a linear system and computing a five degree polynomial with numbers that use $\mathcal{O}(\max \{\text{len}(x),\text{len}(y), \text{len}(N)\})$ bits. Note that both of these can be done in time polynomial in the description of the number and thus we conclude that there exists an efficient Turing machine that computes $f$. In a similar manner, once we get the interpolation polynomial at the small box it is easy to also compute the gradient and the hessian of this polynomial, and thus we can get the polynomial-time Turing machines that compute $\nabla f$ and $\nabla^2 f$.

\end{proof}

\subsection{Rescaling function $f$ into the square box $[0,1]^2$}\label{sec: rescaling}

The last step of the reduction is to re-scale the function $f$ so that it is defined on $[0, 1]^2$ instead of $[0, N]^2$. 
We will follow a logic similar to \cite{fearnley2022complexity}.
We must ensure that finding an approximate SOSP of $f$ gives us an approximate SOSP of the rescaled function.
We define the final rescaled function $f_{sc}$ as follows:

$$f_{sc}(x, y) = \frac{1}{N} \cdot f(N \cdot x, N \cdot y), \quad \forall x,y \in [0,1]^2$$

It holds that $f_{sc}$ is also twice continuously differentiable. 
We define the domain of $f_{sc}$ as $\mathcal{X}_{f_{sc}} = [0,1]^2$.
Furthermore, it holds that $\nabla f_{sc}(x, y) =  \nabla f(N \cdot x, N \cdot y)$ and $\nabla^2 f_{sc}(x, y) = N \cdot \nabla^2 f(N \cdot x, N \cdot y)$. Thus, we can easily construct polynomial-time Turing machines for $f_{sc}$, $\nabla f_{sc}$ and $\nabla^2 f_{sc}$ in polynomial time given polynomial-time Turing machines for $f$, $\nabla f$ and $\nabla^2 f$, which, in turn, can be efficiently constructed using Lemma \ref{lem: circuits}. 
Furthermore, since $f$ is $L$-Lipschitz, $L_1$-smooth and $L_2$-Hessian-Lipschitz, we prove that $f_{sc}$ is $(\frac{1}{L})$-Lipschitz, $L_1$-smooth and $(N\cdot L_2)$-Hessian-Lipschitz, as follows.

\bigskip

Let any $v_1 = (x_1,y_1) \in [0,1]^2$ and $v_2 = (x_2,y_2) \in [0,1]^2$. Then, the following hold:

\begin{itemize}

\item (\textit{Lipschitz continuity})
$$|f_{sc}(v_1) - f_{sc}(v_2)| \le \frac{1}{N} \cdot L \|Nv_1 - Nv_2\| =  L \|v_1 - v_2\|$$

\item (\textit{Smoothness})
$$\|\nabla f_{sc}(v_1) - \nabla f_{sc}(v_2)\| = \|\nabla f(Nv_1) - \nabla f(Nv_2)\| \le  L_1 \|Nv_1 - Nv_2\| = N \cdot L_1 \|v_1 - v_2\|$$

\item (\textit{Hessian-Lipschitz continuity})
$$\|\nabla^2 f_{sc}(v_1) - \nabla^2 f_{sc}(v_2)\| = N \cdot \|\nabla^2 f(Nv_1) - \nabla^2 f(Nv_2)\| \le L_{2} \cdot N^2 \|v_1 - v_2\| = N^2 \cdot L_{2} \|v_1 - v_2\|$$

\end{itemize}

\bigskip

Next, in order to study the proximal gradient of both $f$ and $f_{sc}$, we overload the notation of the definition of the proximal gradient (Definition \ref{eq:projection}) to be parameterized by the function, the domain and the smoothness constant as follows: 

$$g_{\pi }(v \mid f, \mathcal{X}, L_1) := L_1 \cdot \left(\pi_{\mathcal{X}}\left(u - \frac{1}{L_1} \nabla f(v)\right) - v\right),\;\; \mbox{with}\;\;\pi_{\mathcal{X}}(v) :=  \arg\min_{w\in \mathcal{X}}\|w -v\|^2.$$

\noindent
We have the following proposition.

\begin{proposition}\label{prop: proximal}
    For any $v \in [0,1]^2$, it holds that $g_{\pi }(v | f_{sc}, \mathcal{X}_{f_{sc}},N \cdot L_1) = g_{\pi }(N \cdot v | f, \mathcal{X}_f,  L_1)$.
\end{proposition}

\begin{proof}

First, we define $L'_1=N \cdot L_1$. 
We have that

$$g_{\pi }(v | f_{sc}, \mathcal{X}_{f_{sc}}, L'_1) = L'_1 \cdot \left(\pi_{\mathcal{X}_{f_{sc}}}\left(v - \frac{1}{L'_1} \nabla f_{sc}(v)\right) - v\right)$$
We substitute $\nabla f_{sc}(v)) =  \nabla f(Nv)$ into the equation: 
$$g_{\pi }(v \mid f_{sc}, \mathcal{X}_{f_{sc}}, L'_1) = L'_1 \cdot \left(\pi_{\mathcal{X}_{f_{sc}}}\left(v - \frac{1}{L'_1} \nabla f(N\cdot v)\right) - v\right)$$
Notice that the term inside the projection is exactly $\frac{1}{N} \left(N\cdot v - \frac{N}{L'_1} \nabla f(N\cdot v)\right)$. \\ 
Now, let $w = \frac{1}{N} \left(N\cdot v - \frac{N}{L'_1} \nabla f(N\cdot v)\right)$. 
The projection onto a box constraint is completely decoupled between coordinates. For any scalar $w_i$, the projection onto $[0, 1]$ is $\min\{\max\{w_i, 0\}, 1\}$. 
If we look at the projection of a scaled point $Nw_i$ onto the interval $[0, N]$, we get:$$\pi_{[0, N]}(N w_i) = \min\{\max\{N\cdot  w_i, 0\}, N\}$$
Because $N > 0$, we can factor it out:
$$\pi_{[0, N]}(N w_i) = N \cdot \min\{\max\{w_i, 0\}, 1\} = N \cdot \pi_{[0, 1]}(w_i)$$
Applying this to the 2D vectors, we get the scaling relationship:
$$\pi_{\mathcal{X}_{f_{sc}}}(w) = \frac{1}{N} \pi_{\mathcal{X}_f}(N\cdot w)$$

Using the fact that $\pi_{\mathcal{X}_{f_{sc}}}(w) = \frac{1}{N} \pi_{\mathcal{X}_f}(N\cdot w)$, we get:

$$\pi_{\mathcal{X}_{f_{sc}}}\left( \frac{1}{N} \left(N\cdot v - \frac{N}{L'_1} \nabla f(N\cdot v)\right) \right) = \frac{1}{N} \pi_{\mathcal{X}_f}\left( N\cdot v - \frac{N}{L'_1} \nabla f(N\cdot v) \right)$$
Substituting this back into the proximal gradient formula for $f_{sc}$ we obtain:

\begin{align*}
g_{\pi }(v \mid f_{sc}, \mathcal{X}_{f_{sc}}, L'_1) & =  L'_1 \cdot \left( \frac{1}{N} \pi_{\mathcal{X}_f}\left( N\cdot v - \frac{N}{L'_1} \nabla f(N\cdot v) \right) - \frac{1}{N} N\cdot v \right) \\
& = \frac{1}{N} \left[ L'_1 \cdot \left( \pi_{\mathcal{X}_f}\left( Nv - \frac{N}{L'_1} \nabla f(N\cdot v) \right) - N\cdot v \right) \right]\\
& = g_{\pi }(N\cdot v \mid f, \mathcal{X}_{f}, L_1)
\end{align*}


\end{proof}

Putting everything together, it is easy to see that the instance of \ISOSP we construct does not admit any violation solutions. 
Finally, using Proposition \ref{prop: proximal} we have that any $(\epsilon_0,{\epsilon_0})$-SOSP of $f_{sc}$ on $[0, 1]^2$ immediately yields an $(\epsilon_0, {\epsilon_0}/N)$-SOSP of $f$ on $[0, N]^2$ which in turn corresponds to an \ITER solution (since for $N>1$).

\medskip

In our construction we have set $\epsilon=10^{-10}$.
This implies that the problem remains \PLS-hard even for constant accuracy with the catch that in that case the Lipschitz and smoothness parameters will scale with the size of the \ITER instance, that is $2^n$. 
On the other hand, if we apply more aggressive rescaling (e.g., $1/N^4$) then all Lipschitz constants will be $\mathcal{O}(1)$, but the SOSP accuracy will be inversely exponential in $n$.


\medskip

\noindent
In the same spirit as above, we first provide the following proposition.

\begin{proposition}\label{prop: proximal}
   If we use $1/(2^{76}N^4)$ rescaling, then for any $v \in [0,1]^2$, it holds that \\ $g_{\pi }(v | f_{sc}, \mathcal{X}_{f_{sc}}, L_1/(2^{76}N^2)) = \frac{1}{2^{76}N^3} g_{\pi }(N \cdot v | f, \mathcal{X}_f,  L_1)$.
\end{proposition}

\medskip

In particular, in order to construct a hard instance where all Lipschitz parameters are less than $1$, we rescale the function $f$ in a similar manner as above but now we divide by a factor of $c_0 \cdot N^4$, where $c_0=2^{76}$, rather than $N$. Then, the Lipschitz, smoothness and Hessian-Lipschitz constants of the scaled function $f_{sc}$ are $L/(c_0 \cdot N^3)$, $L_1/(c_0\cdot N^2)$ and $L_2/(c_0 \cdot N)$, all of which are less than $1$. 
Moreover, the relationship between the proximal gradient and the Hessian of the unscaled and the scaled function is $g_{\pi }(v \mid f_{sc}, \mathcal{X}_{f_{sc}}, L_1/(c_0 \cdot N^2)) = \frac{1}{c_0 \cdot N^3} g_{\pi }(N \cdot v \mid f, \mathcal{X}_f,  L_1)$ and $\nabla^2 f_{sc}(x, y) =  \frac{1}{c_0 \cdot N^2} \cdot \nabla^2 f(N \cdot x, N \cdot y)$. 
Therefore, we show that any $(\epsilon_0^2/(c_0^2 \cdot N^4),\epsilon_0/(c_0 \cdot N^2))$-SOSP of $f_{sc}$ on $[0, 1]^2$ immediately yields an $(\epsilon_0^2/(c_0 \cdot N), {\epsilon_0})$-SOSP of $f$ on $[0, N]^2$ which in turn corresponds to an \ITER solution (since for $N>1$).
This way, by setting $\epsilon = \epsilon_0^2/(c_0^2 \cdot N^4)$, we show that \ISOSP is \PLS-hard even if all Lipschitz constants are equal to 1, $\mathcal{X} = [0,1]^2$ and  $(\epsilon_G, \epsilon_H) = (\epsilon, \sqrt{\epsilon})$.

This concludes the proof of Theorem \ref{thm: interior_sosp}. 
Finally, we provide a conceptual visualization of our construction in Figure \ref{fig:main_figure}.

\end{proof}

\begin{figure}[htbp]
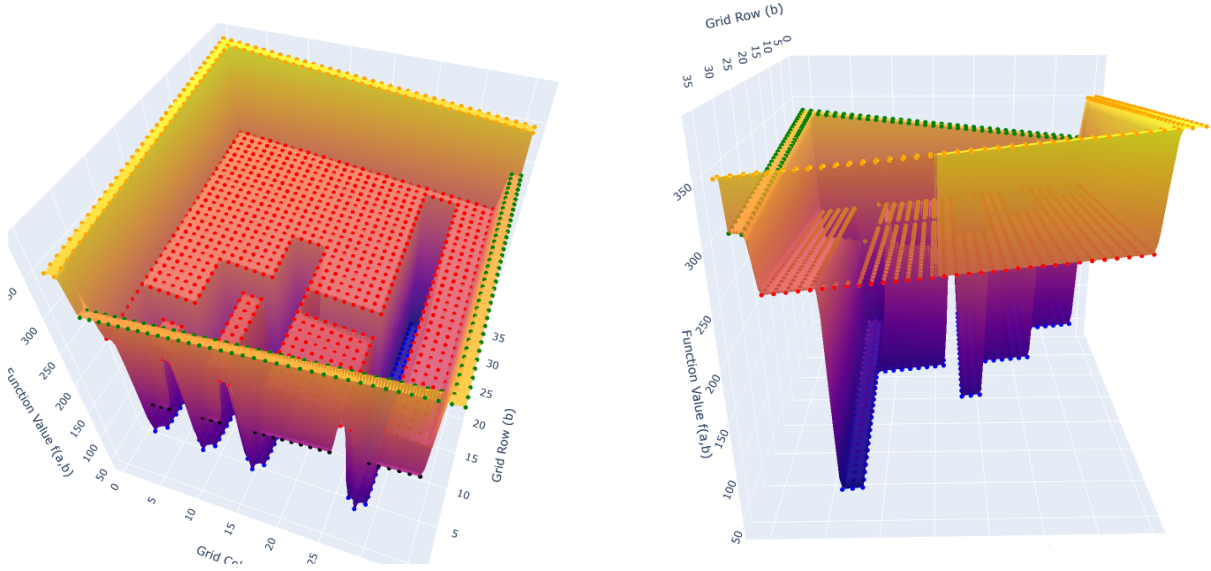

    \centering
    \subfigure{
        \includegraphics[width=0.45\textwidth]{figs/plot2.png}
        \label{fig:left_plot}
    }
    \hfill 
    \subfigure{
    \includegraphics[width=0.45\textwidth]{figs/plot1.png}
        \label{fig:right_plot}
    }
    \caption{Conceptual visualization of our construction.}
    \label{fig:main_figure}
\end{figure}

\section{CONSTRAINED-SOSP is \PLS-complete}

\subsection{Membership in \PLS}\label{sec: membership}

In this section, we prove the membership of \CSOSP in \PLS, which would also imply that \CSOSP is \PLS-complete. We have the following theorem.

\begin{theorem}\label{thm: membership}
    \CSOSP lies in \PLS.
\end{theorem}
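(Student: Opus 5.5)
We reduce \CSOSP to \LOCALOPT, following the outline in Section~\ref{sec: overview_membership}. Given an instance $(\epsilon_G,\epsilon_H,A,b,f,\nabla f,\nabla^2 f,L,L_1,L_2)$, let $\epsilon=\min\{\epsilon_G,\epsilon_H\}$ and $L$ an upper bound on all three Lipschitz constants. The \LOCALOPT nodes are the points of the lattice $\G=\mathcal{X}\cap\bigcup_{I}L_I$ with step $\delta\le \epsilon^5/(C d^2 L^3)$. They are encoded in polynomially many bits by the active set $I$ and the integer coordinates $c_k$, which are bounded because $\mathcal{X}$ is bounded. Strings that do not encode a point of $\G$ are mapped to a fixed valid grid point with maximal potential, so they are never solutions. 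The potential is
\[
p(v)=f(v)+\frac{\epsilon^4}{C d L^2}\,\dim(\mathrm{Null}(A'(v))),
\]
truncated to polynomially many bits at resolution finer than $\epsilon^5$. The neighbour is $g(v)=\mathrm{MapToGrid}(h(v))$, where $h$ is one SNAP step, computed in polynomial time: first evaluate $\|g_\pi(v)\|$ and $\lambda_{\min}(P\nabla^2 f P)$ via a projection QP and an eigendecomposition, then either take a projected gradient step or perform the negative-curvature line search until the first boundary hit. If $v$ is already an $(\epsilon_G,\epsilon_H)$-SOSP, set $g(v)=v$.

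The key claim is that every $v$ with $p(g(v))\ge p(v)$ is either a SOSP or exhibits a violation. We split into three cases.
\begin{itemize}
\item[(a)] $\|g_\pi(v)\|>\epsilon_G$. Lemma~\ref{descent1} gives $f(h(v))\le f(v)-\Omega(\epsilon^2/L)$. The dimension term can grow by at most $d\cdot\epsilon^4/(CdL^2)$, and rounding costs at most $\epsilon_r L$, so $p$ strictly decreases.
\item[(b)] The first-order condition holds but the curvature $\lambda_{\min}<-\epsilon_H$, and the line search does not hit a new constraint. Lemma~\ref{descent2} gives a decrease of $\Omega(\epsilon^3/L^2)$, which dominates both the penalty change and the rounding error by the choice of $C$ and $\delta$.
\item[(c)] The line search hits the boundary. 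Lemma~\ref{descent3} says $f$ does not increase and the number of linearly independent active constraints rises by one, so $\dim\mathrm{Null}(A')$ drops by at least $1$ and $p$ drops by $\epsilon^4/(CdL^2)$. This exceeds the rounding loss $\epsilon_r L\le \epsilon^5/(Cd L^2)$ because $\epsilon<1$ after normalising.
\end{itemize}
Lemma~\ref{lem: polytope_properties} guarantees that MapToGrid deactivates no active constraint, so the null-space dimension computed at $g(v)$ is no larger than at $h(v)$.

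Whenever one of these descent inequalities fails, the failure must come from a broken Taylor or Lipschitz bound. We extract the witnessing pair $(x,y)$ from $v$ and $h(v)$, or from $v$ and an intermediate line-search point, and output it as a violation solution. This keeps the reduction total.

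The main obstacle is the grid. We need MapToGrid to run in polynomial time, to keep the active set exactly (active constraints stay active while inactive ones are not crossed), and to incur at most $\epsilon_r L$ loss in $f$. We would first try to obtain this from Lemma~\ref{lem: polytope_properties} applied to the lattice $L_I$ built on the null space of the current active set. The delicate part is handling nearly-active constraints: if rounding pushes a point outside $\mathcal{X}$ or onto a new face, we re-project onto the enlarged active set and recompute the lattice. Adding constraints can only lower the potential, so this step is harmless.

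A second issue is that exact eigenvectors and square roots are irrational. We would therefore run the line search with a polynomial-precision approximation of the negative curvature direction and absorb the error into the constants of Lemma~\ref{descent2}.
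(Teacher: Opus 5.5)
Your proposal is correct and follows essentially the same route as the paper. It uses the same face-lattice grid with MapToGrid rounding, the same potential $f+\frac{\epsilon^4}{CdL^2}\dim\mathrm{Null}(A'(\cdot))$, a neighbour given by one Simplified-SNAP step followed by rounding, and the same case split via Lemmas~\ref{descent1}--\ref{descent3}, with failed Taylor or Lipschitz bounds returned as violation solutions; your handling of invalid encodings, order-preserving encoding of potentials, and approximate eigenvectors only fills in details the paper treats as routine.
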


\begin{proof}



To prove the statement of the theorem, we will apply a polynomial-time reduction from the canonical \PLS-complete problem \textsc{LocalOpt}. Our proof is based on the efficient algorithm proposed in \cite{snap2020} (Algorithm 4, page 29, arXiv version), dubbed as Simplified SNAP, which computes approximate SOSP in time $\mathcal{O}(\poly(1/\epsilon))$.

\subsubsection{Warm-Up: Membership in \PLS for box constraints}

Suppose we are given any instance of \CSOSP associated with an $L$-Lipschitz, $L_1$-smooth, $L_2$-Hessian-Lipschitz function $f$ defined under the domain $\mathcal{X}=[a_1,b_1]\times\dots\times[a_d,b_d]$ (where $a_i, b_i \in \mathbb{Q}$ for all $i \in \{1,2,\dots,d\}$).
We are also given the polynomial-time Turing machines that evaluate the value, the gradient and the Hessian of $f$.
We construct the \LOCALOPT instance as follows.





\paragraph{Grid.}

Let $\epsilon = \min\{\epsilon_G, \epsilon_H\}$.
We choose the grid step size $\gamma$ such that the following conditions are satisfied:
\begin{itemize}
    \item $\gamma$ should perfectly divide all interval lengths $b_i-a_i$, $i\in[d]$. This way, the box boundary will be covered by the grid.
    \item $\gamma$ should be at most $\frac{\epsilon^5}{1000 \cdot d^{3/2} \cdot L^3_{\text{MAX}}}$, where $L_{\text{MAX}}=\max\{L,L_1,L_2\}$. This way, the grid will be dense enough and sufficient decrease continues to hold after rounding the iterates of Simplified SNAP. Moreover, any approximate SOSP in the continuous domain will still be an approximate SOSP after rounding.
\end{itemize}

Based on the above requirements, we are ready to define $\gamma$. First, we define $\gamma_{\text{GCD}}$ as the greatest common divisor of all the interval lengths $\{b_1 - a_1, b_2 - a_2, \dots, b_d - a_d\}$.
We observe that any value of the form $\gamma = \frac{\gamma_{\text{GCD}}}{k}$, where $k \in \mathbb{N}$, will perfectly divide all intervals of the box constraints. 
To satisfy the upper bound condition, we must choose an integer $k$ large enough such that:$$\frac{\gamma_{\text{GCD}}}{k} \leq \frac{\epsilon^5}{1000\cdot d^{3/2} \cdot L_{\text{MAX}}^3}$$
Solving for $k$ we get that
$$k = \left\lceil \frac{1000\cdot d^{3/2} \cdot L_{\text{MAX}}^3 \cdot \gamma_{\text{GCD}}}{\epsilon^5} \right\rceil$$

\medskip
\noindent
which gives that the grid step size should be 
$$\gamma = \frac{\gamma_{\text{GCD}}}{\left\lceil \frac{1000 \cdot d^{3/2} \cdot L_{\text{MAX}}^3 \cdot \gamma_{\text{GCD}}}{\epsilon^5} \right\rceil}.$$

\noindent
Next we define the grid as follows:
$$\mathcal{G}_\gamma = \left\{ x \in \mathbb{R}^d \;\middle|\; x_i = a_i + \gamma \cdot k_i, \text{ where } k_i \in \left\{0, 1, \dots, \frac{b_i - a_i}{\gamma}\right\} \text{ for all } i \in \{1, 2, \dots, d\} \right\}$$

\paragraph{Potential and neighbor function.}

Now, we define the \textit{potential function} $p:\G_{\gamma} \rightarrow \R$ of \LOCALOPT  as follows: For any $x \in \G_{\gamma}$,

\begin{align}\label{eq:potential}
p(x)= f(x) + \frac{\epsilon^4}{100\cdot d\cdot L_{\text{MAX}}^2} \cdot \text{dim}(\text{Null}(A'(x))) 
\end{align}

\noindent
where $\text{Null}(A'(x))$ denotes the null space of matrix $A'(x)$; i.e., the feasible direction subspace.

To define the {neighbor function} $g: \G_\gamma \rightarrow \G_\gamma$, we consider function $h: \mathcal{X} \rightarrow \mathcal{X}$, which is based on the update rule of Simplified SNAP in \cite{snap2020} as follows:

\begin{align}\label{eq: h}
h(x) = 
\begin{cases} 
\pi_{\mathcal{X}}\left(x - \frac{1}{L_1} \nabla f(x)\right) & \text{if } \|g_\pi(x)\| > \epsilon_G \\ 
\text{Line-search}(x, d(x)) & \text{if } \|g_\pi(x)\| \le \epsilon_G \text{ and } \lambda_{\min}(H_P(x)) < -\epsilon_H \\ 
x & \text{if } \|g_\pi(x)\| \le \epsilon_G \text{ and } \lambda_{\min}(H_P(x)) \ge -\epsilon_H
\end{cases}
\end{align}

where:
\begin{itemize}
    \item $d(x)$ is the descent direction chosen from the negative eigenvector $v(x)$ of $H_P(x)$, signed such that $q_\pi(x)^{\top} \cdot d(x) \le 0$, where $q_\pi$ is the projected gradient at $x$, defined as follows: \begin{equation}\label{eq.compq}q_{\pi}(x) := \pi_{\mathcal{A}(x)}(\nabla f(x)) = P(x)\nabla f(x).\end{equation}
    \item $\text{Line-search}(x, d(x))$ is the line-search defined as Algorithm 2 in \cite{snap2020},
    \item $H_P(x)$ is the projected Hessian matrix, defined as $P(x)\nabla^2 f(x)P(x)$, where $P(x)$ is the projection matrix defined in Equation \ref{eq:P}. 
\end{itemize}

To measure the progress of Simplified SNAP the authors of \cite{snap2020} prove that at each iteration either a sufficient decrease is achieved or a linear constraint becomes tight (without the function increasing). Next, we present the following descent lemmas which were used in \cite{snap2020} to show the convergence of Simplified SNAP to $(\epsilon_G, \epsilon_H)$-SOSP.
\begin{lemma}[Lemma 4 in \cite{snap2020}]\label{descent1}
If $h(x)$ is computed by projected gradient descent with step-size chosen by $1/L_1$, then $f(h(x))\le f(x)-\frac{\epsilon^2_G}{18L_1}$.
\end{lemma}
\begin{lemma}[Lemma 6 in \cite{snap2020}]\label{descent2}
If  $h(x)$ is computed by the line-search procedure (Algorithm 2 in \cite{snap2020}) and the maximal step size is not selected, then the line search procedure achieves the following descent guarantee: $f(h(x))\le f(x)-0.06\epsilon^3_H/L^2_2$.
\end{lemma}
\begin{lemma}[Lemma 7 in \cite{snap2020}]\label{descent3}
If  $h(x)$ is computed by the line-search procedure (Algorithm 2 in \cite{snap2020}) and the maximal step size is chosen, then $\dim(\text{Null}(A'(x))$ is decreased at least by 1.
\end{lemma}

Starting from point $x$ that is not an $(\epsilon_G, \epsilon_H)$-SOSP, there are two cases to examine: 

\begin{enumerate}[label=\textbf{C.\arabic*}]
    \item \label{case1} Either the algorithm will achieve a sufficient decrease, or
    \item \label{case2} The algorithm will increase the number of tight constraints without increasing the value of the potential function.
\end{enumerate}

\noindent
That said, 
we define the \textit{neighbor function} $g:\G_\gamma \rightarrow \G_\gamma$ of \textsc{LocalOpt} as follows:


\begin{align}
    g(x) = \text{Rounding}(h(x))
\end{align}

\noindent
where function $\text{Rounding}(x)$ performs rounding on point $x$, so that each coordinate of $x$ is rounded down to the closest multiple of $\gamma$.

\paragraph{Proof sketch.}

Using the \LOCALOPT instance defined above, we will show that that any solution of the \LOCALOPT instance yields a solution of the \CSOSP instance. 
Let $v \in \G_\gamma$. We will show that 
if $v$ is not a solution of the \CSOSP instance, then $p(g(v)) < p(v)$, that is $v$ is not a solution of \LOCALOPT. 
To do so, we will examine the two cases \ref{case1} and \ref{case2} discussed above:
\begin{enumerate}
    \item In case \ref{case1}, using Lemma \ref{descent1} and Lemma \ref{descent2}, we have that $f(h(v)) \leq f(v)-\min\left\{ \frac{0.06\epsilon_H^3}{L_2}, \frac{\epsilon_G^2}{18L_1} \right\}$. Then, the application of rounding will increase the value of $f$ at most $\gamma L\sqrt{d}$ which is at most $\frac{\epsilon^5L}{1000\cdot  d\cdot L^3_{\text{MAX}}}$. Moreover, any change in the number of inactive constraints will increase the value of the potential function at most $\frac{\epsilon^4}{100L_{\text{MAX}}^2}$. Therefore, for case \ref{case1}, we have that $p(g(v)) \leq p(v)-\min\left\{ \frac{0.06\epsilon_H^3}{L_2}, \frac{\epsilon_G^2}{18L_1} \right\}+\frac{\epsilon^5L}{1000\cdot  d\cdot L^3_{\text{MAX}}}+\frac{\epsilon^4}{100\cdot L_{\text{MAX}}^2} < p(v)$.
    
    \medskip
    
    \item  In case \ref{case2}, by using Lemma \ref{descent3}, as well as by upper bounding the increase in $f$ due to rounding and observing that rounding may not make an active constraint inactive, and thus can not increase the dimension of the free space, we have: $p(g(v)) \leq p(v)-\frac{\epsilon^4}{100\cdot d \cdot L_{\text{MAX}}^2}+\frac{\epsilon^5L}{1000\cdot  d\cdot L^3_{\text{MAX}}} < p(v)$.
\end{enumerate}


\noindent
Moreover, if either the Lipschitz continuity property, first-order or the second-order Taylor theorems do not hold in the applications of Lemma \ref{descent1}, Lemma \ref{descent2} and Lemma \ref{descent3} of the above analysis, then we have found a violation solution. 

Putting everything together, we conclude that $v$ cannot be a solution of \LOCALOPT, unless $v$ is a solution of SOSP. 

\medskip

\paragraph{Implementing $p$ and $g$ via Boolean circuits.}
First, we define the bit complexity of a rational number, vector or matrix as the number of bits required to represent its exact irreducible fractional form, denoted by $\text{len}(\cdot)$.
Next, we show that both $p$ and $g$ can be evaluated using polynomial-time Turing machines. 
More specifically, $p$ can be efficiently evaluated at each point given a Turing machine for $f$, using also the fact that we can identify the inactive constraints at any point using linear programming.
In particular, given a point $x\in\mathcal{G}_{\gamma}$, the evaluation of $f(x)$ takes time and has bit complexity polynomial in $\text{len}(x)$, and the identification of inactive constraints is polynomial in $\text{len}(x)$, $\text{len}(a)$ and $\text{len}(b)$, where $a$ and $b$ are the vectors representing the box constraints.

Moreover, $g$ can be efficiently evaluated at each point of the domain, given the polynomial-time Turing machines for the gradient and the Hessian of $f$, and using the fact that rounding, projections, eigenvalues and eigenvectors can be computed in polynomial time.
More specifically, $g$ is a composition of the above operators.
The first step in evaluating $g$ is to compute $\nabla f$ or $\nabla^2 f$ at a given point $x$ depending on the selected update rule.
Both of these evaluations take time and have bit complexity polynomial in $\text{len}(x)$.
Next, a projection step is required which can be computed in polynomial time and bit complexity polynomial in the length of the input (i.e., either $x-1/L_1 \cdot\nabla f(x)$ or $\nabla^2 f(x)$) and $\text{len}(a)$ and $\text{len}(b)$.
In both cases, the length of the above inputs is polynomial in $\text{len}(x)$ and $\text{len}(1/L_1)$.
Moreover, in the case of Negative Curvature Descent, the computation of eigenvalues and eigenvectors of the projected Hessian $H_P(x)$ is required.
This operation cannot always be done exactly, but it can performed up to any desired accuracy $\delta$ in time and bit complexity polynomial in $\text{len}(H_P(x))$ and $\log(1/\delta)$.
Note that the SNAP algorithm (based on which $g$ is constructed) only requires an oracle for approximate minimum eigenvalue and eigenvector with accuracy $\delta=\poly(\epsilon)$, such that the approximate eigenvector lies in the null space of the active constraints. 
To implement this oracle, we first compute an approximate minimal eigenpair of $H_P(x)$, denoted by $(\lambda, v)$, with accuracy $\delta = \poly(\epsilon, 1/\|P(x)\|)$, where $P(x)$ is the projection matrix on the null space of the active constraints of $x$. 
Then, we return $P(x)v$ which is guaranteed to be $\epsilon'$-close to an exact minimum eigenvector of $H_P(x)$, where $\epsilon' = \poly(\epsilon)$.
As for the line search procedure of SNAP, the computation of the maximum allowed step size can be done efficiently via linear programming in time and bit complexity polynomial in $\text{len}(x)$, $\text{len}(a)$, $\text{len}(b)$ and $\text{len}(v)$. 
Moreover, the line-search procedure performs $\mathcal{O}(\log(L_{MAX}\cdot d/\epsilon))$ iterations, and at each iteration a check and an update are performed which also exhibit time and bit complexity polynomial in $\text{len}(x)$, $\text{len}(a)$, $\text{len}(b)$, $\text{len}(v)$ and $\log(1/\epsilon)$. 
As for the checking that is used to decide the update rule, it can be done in time and bit complexity polynomial in $\text{len}(x)$, $\text{len}(a)$, $\text{len}(b)$, $\text{len}(L_1)$ and $\log(1/\epsilon)$ using similar arguments.

Putting everything together, both the potential function $p$ and the neighbor function $g$ can be evaluated efficiently via Turing machines at any point $x \in \G_{\gamma}$ in time and bit complexity polynomial in $\text{len}(x)$, $\text{len}(a)$, $\text{len}(b)$, $\text{len}(L_{MAX})$ and $\log(1/\epsilon)$.
To finalize the reduction, we must convert the Turing machines $p$ and $g$ into Boolean circuits and embed their respective domains and ranges into the space $[2^n]$ for a suitable integer $n \in \mathbb{N}$.
In doing so, we will use standard arguments, similar to \cite{hollender2023computational}.
First, observe that the length of the binary representation for any point $x \in \G_\gamma$ is bounded by $\mathrm{len}(x) \le d \log(\max_i\{b_i - a_i\}/\gamma)$. 
Consequently, the bit-length of the output $f(x)$ for any $x \in \G_\gamma$ satisfies $\mathrm{len}(f(x)) \le q(\mathrm{len}(x))$, where $q$ is a polynomial specified alongside the description of the Turing machine $C_f$. 
Thus, we can then select $n = q(\mathrm{len}(x))$ and we can map the set $[2^n]$ to describe both $\G_\gamma$ and the possible outputs of the Turing machine for $p$ that we described above. 
This way, we can make $p, g$ to be mappings from $[2^n]$ to itself and have running time $r(n)$ for some polynomial $r$. 
We can now use classical transformations of Turing machines with running time $r(n)$ to Boolean circuits with size $r(n)\log(r(n))$.
This completes the polynomial-time reduction from \CSOSP to \textsc{LocalOpt}.






\newpage

\subsubsection{Extending the proof for polytope constraints}\label{sec:membership_polytope}

Now, we will present a methodology for constructing the grid for polytope constraints $Ax \le b$.
To achieve a fully polynomial-time algorithm with respect to the succinct hyperplanes H-representation ($A \in \mathbb{Q}^{m \times d}, b \in \mathbb{Q}^m$), we define the grid implicitly as a collection of algebraic lattices on the faces of the polytope, and we use a recursive Ray-Shooting algorithm to execute the mapping.
Instead of one single $d$-dimensional Cartesian grid, the implicit grid $\G$ is constructed as a union of lower-dimensional lattices, one for every valid face of the polytope $\mathcal{X} = \{x \in \mathbb{R}^d \mid Ax \le b\}$.

Let the constraints be indexed $1$ through $m$, with hyperplanes $H_i = \{x \mid a_i^T x = b_i\}$. 
For any subset $I \subseteq \{1, \dots, m\}$, define the affine subspace $E_I = \{x \mid A_I x = b_I\}$. 
The face corresponding to $I$ is $F_I = \mathcal{X} \cap E_I$.
For every index set $I$, we deterministically construct a coordinate system for $E_I$ as follows:
\begin{itemize}
    \item We define the reference point ($x_I$) to be the minimum-norm vector satisfying $A_I x_I = b_I$. This is computable via orthogonal projection of the origin.
    \item We define the orthogonal basis ($B_I$) to be a set of orthogonal basis vectors $v_1, \dots, v_d$ spanning the null space $\ker(A_I)$. They can be generated efficiently via a deterministic, rational Gram-Schmidt process on standard basis vectors.
\end{itemize}
Based on the above, we define an infinite algebraic lattice $L_I$ on this subspace with a step size $\delta$:
\begin{equation}
    L_I = \left\{ x_I + \sum_{k=1}^d c_k v_k \ \Bigg| \ c_k \in \delta \mathbb{Z} \right\}
\end{equation}

\noindent
The grid $\G$ is the exact union of these lattices, strictly bounded by $\mathcal{X}$:
\begin{equation}\label{def:grid_polytope}
    \G = \mathcal{X} \cap \left( \bigcup_{I \subseteq \{1, \dots, m\}} L_I \right)
\end{equation}

Since $x_I$ and $B_I$ depend only on the matrix $A_I$ and not on any input point $x$, $\G$ is a finite set of points.
Let $I(x)$ be the set of active constraints at $x$. The mapping projects $x$ to its nearest point $\tilde{x}$ on the canonical lattice $L_{I(x)}$.
\begin{itemize}
    \item If the line segment $[x, \tilde{x}]$ lies entirely inside $\mathcal{X}$, then $y = \tilde{x}$.
    \item If the line segment exits $\mathcal{X}$, the point slides along the segment until it hits a new boundary, permanently activating a new constraint $j$. The mapping then recursively switches to projecting onto the lower-dimensional lattice $L_{I(x) \cup \{j\}}$.
\end{itemize}

Given the matrix $A$, vector $b$, input point $x \in \mathcal{X}$, and \textit{rounding tolerance} $\epsilon_r$, we map $x \to y$ in time polynomial in the succinct description of  $\mathcal{X}$. To ensure the accumulated distance of recursive bounces never exceeds $\epsilon$, we set our internal lattice step-size parameter to $\delta = \frac{\epsilon_r}{d\sqrt{d}}$.

\noindent
Next, we present the following algorithm which implements the mapping to the grid under polytope constraints.

\begin{tcolorbox}[colback=white!10, colframe=black!80!black, arc=2mm, boxrule=1.5pt]

\noindent \textbf{Algorithm} \textsf{MapToGrid($x, A, b, \delta$)}:

\begin{enumerate}
    \item \textbf{Identify Active Constraints:} Compute $I(x) = \{i \mid a_i^{\top} x = b_i\}$. If $|I(x)| = d$ (then $x$ is a 0-dimensional vertex), return $x$.
    
    \item \textbf{Construct Local Subspace:} Extract $A_I$ and $b_I$. Compute the canonical reference point $x_I$ and the orthogonal basis $B_I = \{v_1, \dots, v_d\}$ for $\ker(A_I)$.
    
    \item \textbf{Ideal Projection:} Express $x$ in this canonical basis: $x = x_I + \sum_{k=1}^d c_k v_k$. \\
    Round each coefficient $c_k$ to the nearest multiple of $\delta$ to get $\tilde{c}_k$. \\
    Construct the ideal target grid point: $\tilde{x} = x_I + \sum_{k=1}^d \tilde{c}_k v_k$. \\
    \textit{(Note: $\tilde{x}$ is guaranteed to satisfy $A_I \tilde{x} = b_I$, meaning $\mathcal{A}(x)$ is perfectly preserved.)}
    
    \item \textbf{Feasibility of Ray-Shooting:} Check all inactive constraints $j \notin I(x)$ against the ideal target $\tilde{x}$. \\
    If $A \tilde{x} \le b$, the ideal target is valid. Return $\tilde{x}$. \\
    \textit{(Note: If $a_j^{\top} \tilde{x} > b_j$ for some $j$, the target is outside the polytope.)}
    
    \item \textbf{Ray-Shooting:} The line segment from $x$ to $\tilde{x}$ is parameterized as $x(t) = x + t(\tilde{x} - x)$ for $t \in [0, 1]$. \\
    For every violated constraint $j$, calculate the exact step $t_j$ the ray hits the boundary:
    \begin{equation}
        t_j = \frac{b_j - a_j^{\top} x}{a_j^{\top} (\tilde{x} - x)}
    \end{equation}
    Find the earliest collision: $t_{min} = \min \{t_j\}$. \\
    Update the point to exactly where it hit the boundary: $x_{new} = x + t_{min}(\tilde{x} - x)$. \\
    Recurse: Call \textsf{MapToGrid($x_{new}, A, b, \delta)$}.
\end{enumerate}

\end{tcolorbox}

Next, we analyze the complexity of the algorithm. 
The per iteration complexity of the loop is dominated by the complexity of generating an orthogonal basis for $\ker(A_I)$. This is done via Gaussian Elimination and Gram-Schmidt, which are exactly bounded by $\mathcal{O}(m d^2)$ arithmetic operations.
Regarding the iteration complexity we observe that every time Step 5 triggers, $x_{new}$ hits a new hyperplane, permanently adding it to $I(x)$. Because $I(x)$ can grow to a maximum of $d$ active constraints, the recursion depth is strictly bounded by $d$. That said, the total time complexity is strictly bounded by $\mathcal{O}(m d^3)$ algebraic operations, making it strongly polynomial with respect to the matrix dimensions.

In the following lemma, we show the three main properties of our algorithm.

\begin{lemma}[Useful properties of our mapping]\label{lem: polytope_properties}
    For any $x\in\mathcal{X}$, let $y = \text{MapToGrid}(x,A,b,\delta)$, where $\delta = \frac{\epsilon_r}{d\sqrt{d}}$. Then, the following properties hold:

\begin{enumerate}
    \item $\mathcal{A}(x) \subseteq \mathcal{A}(y)$.
    \item $y \in \mathcal{X}$.
    \item $\|x - y\| \le \epsilon_r$.
\end{enumerate}
    
\end{lemma}

\begin{proof}

We prove each property as follows:

\begin{itemize}
\item By definition, $V$ spans the null space of $A_I$. This means $A_I v_k = \mathbf{0}$ for all $k$.
The ideal target is $\tilde{x} = x_I + \sum \tilde{c}_k v_k$.

If we multiply by $A_I$ we obtain the following:

$$A_I \tilde{x} = A_I x_I + \sum \tilde{c}_k (A_I v_k) = b_I + \mathbf{0} = b_I$$

Therefore, $\tilde{x}$ perfectly satisfies all constraints active at $x$.

If ray-shooting occurs, $x_{new}$ is a convex combination of $x$ and $\tilde{x}$: $x_{new} = (1-t)x + t\tilde{x}$.
Since $A_I x = b_I$ and $A_I \tilde{x} = b_I$, it follows that $A_I x_{new} = b_I$. 
Thus, active constraints are never lost during a bounce. 
The recursive call inherits these constraints, ensuring $\mathcal{A}(x) \subseteq \mathcal{A}(y)$.

\item The input $x$ is assumed to be in $\mathcal{X}$.
In Step 4, the algorithm only returns $\tilde{x}$ if $A\tilde{x} \le b$, which guarantees $\tilde{x} \in P$.

If $A\tilde{x} \not\le b$, the ray $x + t(\tilde{x} - x)$ starts inside $\mathcal{X}$ at $t=0$ and exits $\mathcal{X}$ before $t=1$. 
Since $\mathcal{X}$ is convex, the ray intersects the boundary of $\mathcal{X}$ exactly once at $t_{min}$. 
Therefore, $x_{new}$ lies exactly on the boundary of $\mathcal{X}$, so $A x_{new} \le b$. 
Because every recursive step passes a valid point $x_{new} \in P$, the final returned point must be in $\mathcal{X}$.

\item We analyze the maximum distance traveled by the point across the entire recursive execution of the algorithm. 
Let the algorithm execute a total of $T$ recursive steps, where $T \le d$. In any given step $i \in \{1, \dots, T\}$, let the current point be $x^{(i)}$ and the local active constraint set be $I_i$. 
Let $d_i = d - |I_i|$ be the dimension of the local affine subspace $E_{I_i}$. 
In step $i$, the ideal target $\tilde{x}^{(i)}$ is constructed by rounding the $d_i$ coefficients $c_k$ in the orthonormal basis $B_{I_i}$ to the nearest multiple of $\delta$. 
Since rounding to the nearest multiple of $\delta$ incurs a maximum coordinate error of $|c_k - \tilde{c}_k| \le \frac{\delta}{2}$, the squared Euclidean distance from the current point to the ideal target is:

$$\|x^{(i)} - \tilde{x}^{(i)}\|^2 = \sum_{k=1}^{d_i} (c_k - \tilde{c}_k)^2 \|v_k\|^2$$

Because the basis vectors $v_k$ are orthonormal ($\|v_k\| = 1$), we have:

$$\|x^{(i)} - \tilde{x}^{(i)}\|^2 \le \sum_{k=1}^{d_i} \left( \frac{\delta}{2} \right)^2 = d_i \frac{\delta^2}{4}$$

Since the local subspace dimension $d_i$ is strictly bounded by the global dimension $d$ ($d_i \le d$), we can upper bound the distance uniformly across all possible subspaces:

$$\|x^{(i)} - \tilde{x}^{(i)}\|^2 \le d \frac{\delta^2}{4}$$

Substituting our fixed global $\delta$:

$$\|x^{(i)} - \tilde{x}^{(i)}\|^2 \le d \frac{1}{4} \left( \frac{2 (\epsilon_r / d)}{\sqrt{d}} \right)^2 = \left( \frac{\epsilon_r}{d} \right)^2$$

Taking the square root, the maximum distance to the ideal target in any single step is strictly bounded by the per-step budget:

$$\|x^{(i)} - \tilde{x}^{(i)}\| \le \frac{\epsilon_r}{d}$$

Now we evaluate the distance traveled in step $i$:

\begin{itemize}
\item If $\tilde{x}^{(i)}$ is feasible, the point maps directly to it ($x^{(i+1)} = \tilde{x}^{(i)}$). The distance traveled is exactly $\|x^{(i)} - \tilde{x}^{(i)}\| \le \frac{\epsilon_r}{d}$. The algorithm terminates.

\item If ray-shooting occurs, the point stops at $x^{(i+1)}$ on the boundary. Because $x^{(i+1)}$ is a convex combination $(1-t)x^{(i)} + t\tilde{x}^{(i)}$ with $t \in [0, 1)$, the distance traveled is $\|x^{(i)} - x^{(i+1)}\| = t \|x^{(i)} - \tilde{x}^{(i)}\| < \frac{\epsilon_r}{d}$. The algorithm recurses.
\end{itemize}

In either case, the distance traveled in a single step $i$ is strictly bounded by $\frac{\epsilon_r}{d}$. 
Because every ray-shooting bounce permanently adds at least one linearly independent constraint to the active set, the dimension of the null space drops by at least $1$ per recursive call. 
Therefore, the maximum number of steps is bounded by the total dimensions: $T \le d$. 
By the triangle inequality, the total distance from the initial input point $x$ to the final grid point $y$ is bounded by the sum of the distances traveled in each step:

$$\|x - y\| \le \sum_{i=1}^{T} \|x^{(i)} - x^{(i+1)}\| \le \sum_{i=1}^{d} \frac{\epsilon_r}{d} = \epsilon_r$$

Thus, the final mapping distance never exceeds the target budget $\epsilon_r$, proving the property holds for a fixed global step size.

\end{itemize}
\end{proof}

\noindent
Next, we show an upper bound of the cardinality of the constructed grid.

\begin{lemma}[Upper Bound on Implicit Grid Cardinality]\label{lem: polytope_cardinality}
Let $\mathcal{X} = \{x \in \mathbb{R}^d \mid Ax \le b\}$ be a bounded convex polytope defined by $m$ half-spaces in $\mathbb{R}^d$. Let $D$ denote the $L_2$-diameter of $\mathcal{X}$, defined as $D = \sup_{x,y \in \mathcal{X}} \|x-y\|_2$. For a given rounding tolerance $\epsilon_r > 0$, using the lattice step size $\delta = \frac{\epsilon_r}{d\sqrt{d}}$, the total number of grid points $\mathcal{G}$ generated implicitly by the Map-To-Grid algorithm is bounded by:

$$|\mathcal{G}| \le \sum_{d'=0}^d \binom{m}{d-d'} \left( \frac{D \cdot d \sqrt{d}}{\epsilon_r} + 1 \right)^{d'} = \mathcal{O} \left( \left( \frac{D \cdot d\sqrt{d} \cdot m}{\epsilon_r} \right)^d \right)$$
\end{lemma}

\begin{proof}

Let $I \subseteq \{1,\dots,m\}$ be an index set of active constraints defining a specific $d'$-dimensional face of the polytope, denoted as $F_I = \mathcal{X} \cap \{x \mid A_I x = b_I\}$. The algorithm constructs an affine subspace $E_I$ of dimension $d'$ passing through a canonical reference point $x_I$, spanned by an orthogonal basis $V = \{v_1,\dots,v_{d'}\}$. 

The algebraic lattice $L_I$ on this subspace is defined as:
$$L_I = \left\{ x_I + \sum_{k=1}^{d'} c_k v_k \;\middle|\; c_k \in \delta\mathbb{Z} \right\}$$
where we apply the algorithm's global step size $\delta = \frac{\epsilon_r}{d\sqrt{d}}$.

\vspace{0.3cm}
We must bound the number of points in $L_I \cap \mathcal{X}$. Because the maximum $L_2$ distance between any two points in $\mathcal{X}$ is $D$, the maximum geometric span of $\mathcal{X}$ projected onto any basis vector $v_k$ cannot exceed $D$. 
Therefore, for any valid grid point in the polytope, the coefficient $c_k$ is restricted to a continuous interval of length at most $D$. Because the grid points are spaced exactly $\delta$ apart along the $v_k$ axis, the maximum number of valid discrete values for each coefficient $c_k$ is strictly bounded by:
$$N_{c_k} \le \frac{D}{\delta} + 1  \le \frac{D \cdot d \sqrt{d}}{\epsilon_r} + 1$$

\vspace{0.3cm}
\noindent
Since the subspace is $d'$-dimensional, the lattice points are formed by the Cartesian product of these $d'$ independent coordinate sets. Thus, the total number of grid points mapping to this specific face $F_I$ is bounded by:
$$|L_I \cap \mathcal{X}| \le \prod_{k=1}^{d'} \left( \frac{D}{\delta} + 1 \right) = \left( \frac{D \cdot d \sqrt{d}}{\epsilon_r} + 1 \right)^{d'}$$

\vspace{0.3cm}
\noindent
Let $f_{d'}$ be the number of $d'$-dimensional faces of $\mathcal{X}$. In the worst-case scenario (a simple, non-degenerate polytope), a $d'$-dimensional face is uniquely defined by the exact intersection of $d-d'$ hyperplanes chosen from the $m$ available constraints. Thus, the number of $d'$-faces is combinatorially bounded by:
$$f_{d'} \le \binom{m}{d-d'}$$ 

\vspace{0.3cm}
\noindent
The global implicit grid $\mathcal{G}$ is the union of the bounded lattices over all valid faces of $\mathcal{X}$, from dimension $d'=0$ (vertices) up to $d'=d$ (the interior):
$$\mathcal{G} = \bigcup_I (L_I \cap \mathcal{X})$$

\noindent
By applying the union bound, the total number of grid points $|\mathcal{G}|$ cannot exceed the sum of the maximum points generated across all possible faces of all dimensions. Summing from $d'=0$ to $d$:
$$|\mathcal{G}| \le \sum_{d'=0}^d f_{d'} \cdot \max_{I: \dim(F_I)=d'} |L_I \cap \mathcal{X}|$$

\noindent
Substituting our face count and points-per-face bounds into the summation gives the final result:
$$|\mathcal{G}| \le \sum_{d'=0}^d \binom{m}{d-d'} \left( \frac{D \cdot d\sqrt{d}}{\epsilon_r} + 1 \right)^{d'} = \mathcal{O}\left( \left( \frac{D \cdot d\sqrt{d} \cdot m}{\epsilon_r} \right)^d \right)$$

\end{proof}

\noindent

    


\paragraph{Proof sketch.}

The proof of membership under polytope constraints uses arguments very similar to the ones used in the box constraints setting. 
Suppose we are given an instance of \CSOSP associated with an $L$-Lipschitz, $L_1$-smooth, $L_2$-Hessian-Lipschitz function $f$ defined under the domain $\mathcal{X}=\{x: Ax\le b\}$ (where the entries of $A, b$ are rational numbers of fixed accuracy).
We are also given the polynomial-time Turing machines that evaluate the value, the gradient and the Hessian of $f$.
Let $\epsilon = \min\{\epsilon_G, \epsilon_H\}$ and $L_{MAX} = \max\{ L, L_1,L_2\}$.
We construct an instance of \textsc{LocalOpt} as follows.

First, we construct a grid $\G$ according to Equation \eqref{def:grid_polytope} using the matrix $A$ and vector $b$ of the \CSOSP instance and setting 

$$\epsilon_r = \frac{\epsilon^5}{1000 \cdot d^{2} \cdot L^3_{\text{MAX}}} \quad \text{and} \quad \delta=\frac{\epsilon_r}{d\sqrt{d}} .$$

We define the potential function $p:\G \rightarrow \R$ as in Equation \eqref{eq:potential}. 
We also define $h$ (i.e., the function defining the update rule of Simplified SNAP) using  Equation \eqref{eq: h}.
Finally, the neighbor function $g: \G \rightarrow \G$ is defined as
\begin{align}
    g(x) = \text{Rounding}(h(x))
\end{align}

\noindent
where function $\text{Rounding}(x)$ is the mapping from $\mathcal{X}$ to $\G$.
This mapping is efficiently implemented using the algorithm $\text{MapToGrid}$ that we defined above.

It remains to show that using the \LOCALOPT instance defined above, any solution of the \LOCALOPT instance yields a solution of the \CSOSP instance.  
It is easy to see that using Lemma \ref{lem: polytope_properties}, the application of rounding will increase the value of $f$ at most $\epsilon_r \cdot L$ which is at most $\frac{\epsilon^5L}{1000\cdot  d\cdot L^3_{\text{MAX}}}$ and will not deactivate any active constraints.
That said, the exact same arguments used in the proof sketch of the box constraints case to examine the two cases \ref{case1} and \ref{case2}, as well as the arguments for the violations of the Lipschitz property or the Taylor theorems, hold here as well.

Finally, we will show how we can implement the potential function $p$ and the neighbor function $g$ via Boolean circuits.
First, we show how to implement $p$ and $g$ via polynomial-time Turing machines.
The arguments used in the case of box constraints carry over to the case of general polytope constraints.
We only need to replace $\text{len}(a)$ and $\text{len}(b)$ (where $a$ and $b$ were the vectors representing the box constraints) with $\text{len}(A)$ and $\text{len}(b)$ (where $A$ and $b$ define the polytope constraints $Ax\le b)$.
Moreover, for each point $x$, rounding is more complicated here but it is still implementable in time and bit complexity polynomial in $\text{len}(A)$, $\text{len}(b)$, $\text{len}(x)$, $\text{len}(L_{MAX})$ and $\text{len}(1/\epsilon)$, as we discussed in the complexity analysis of the algorithm $\text{MapToGrid}$ which we use to implement rounding. 
To find the value for $n$ defining the set $2^n$, which is required for the implementation of the Boolean circuits of \LOCALOPT, we use the same arguments as in the box constraints case. 
The only difference is how to bound the length of the binary representation of points in $\G$. 
We observe that all points $x\in\G$ can be written as $\tilde{x} = x_I + \sum_{k=1}^d \tilde{c}_k v_k$, where $x_I$ and $v_k$, for $k \in [d]$ have bit complexity polynomial in $\text{len}(A)$, $\text{len}(b)$, and the rounded coefficients $\tilde{c}_k$ have bit complexity polynomial in $\text{len}(A)$, $\text{len}(b)$, $\text{len}(L_{MAX})$ and $\text{len}(1/\epsilon)$.
Therefore, for any point $x\in\G$ we have that $\mathrm{len}(x) = \poly(\text{len}(A), \text{len}(b), \log(L_{MAX}), \log(1/\epsilon)))$.  
This completes the polynomial-time reduction from \CSOSP to \textsc{LocalOpt}.



\end{proof}

Putting everything together, in the following corollary we show that \CSOSP is \PLS-complete.

\begin{corollary}[\PLS-completeness]\label{cor:pls-complete}
    \CSOSP is \PLS-complete, even when the domain is the unit square box $[0,1]^2$, and also even if one considers the promise-version of the problem, i.e., only instances without violations.
\end{corollary}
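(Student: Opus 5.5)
The corollary follows by chaining the three reductions of Figure~\ref{fig: reductions}, and I would prove its two halves separately.

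\textbf{Membership.} This is exactly Theorem~\ref{thm: membership}. It gives a polynomial-time reduction from \CSOSP (total version, with violation solutions) to \LOCALOPT. Since \PLS is closed under polynomial-time reductions, \CSOSP $\in$ \PLS for general polytope domains, and in particular for $[0,1]^2$. The promise version is a restriction of the total problem, so it lies in \PLS as well.

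\textbf{Hardness.} I would compose \ITER $\le$ \ISOSP from Theorem~\ref{thm: interior_sosp} with a trivial reduction \ISOSP $\le$ \CSOSP that maps an instance to itself: same $A,b$, same Turing machines, same $\epsilon_G,\epsilon_H$ and same Lipschitz constants. I then check that every solution of the resulting \CSOSP instance yields a solution of the \ISOSP instance, and therefore of \ITER. There are two kinds of solutions.
\begin{itemize}
\item A violation solution (non-Lipschitzness, or a failure of the first- or second-order Taylor bound) is literally a violation solution of \ISOSP.
\item An $(\epsilon_G,\epsilon_H)$-SOSP $x^*$ in the sense of Definition~\ref{def:approx_sosp1} splits by its active set. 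If $\mathcal{A}(x^*)=\emptyset$, then $A'(x^*)$ is empty, so condition \eqref{eq.cond12sosp} ranges over all unit $y$ and coincides with \eqref{eq.cond12interior}; hence $x^*$ is an interior \ISOSP solution. If $\mathcal{A}(x^*)\neq\emptyset$, then $x^*$ is precisely the ``approximate SOSP on the boundary'' violation of Definition~\ref{def:isosp}.
\end{itemize}
So every \CSOSP solution is an \ISOSP solution, and \PLS-hardness transfers with the domain still fixed to $[0,1]^2$.

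\textbf{Promise version.} The hard instances built in Theorem~\ref{thm: interior_sosp} satisfy the promises: by Lemma~\ref{lem:lipschitz} and the exactness of the interpolated gradient and Hessian there are no violations, and by Section~\ref{sec:boundary} no approximate FOSP lies on the boundary. Hence the identity map sends \ITER into promise-\CSOSP instances whose solutions are all interior SOSPs located around \ITER solutions.

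\textbf{Main obstacle.} The delicate step is this promise-preservation. One must make sure the boundary analysis really excludes every boundary SOSP under the \emph{constrained} second-order condition, not merely under the interior one. This holds because the boundary boxes lie in Groups 1--2, where the argument rests on a large proximal-gradient component rather than on curvature, so it rules out even approximate FOSPs there. Once that is confirmed, the rest is bookkeeping, and combining membership with hardness gives \PLS-completeness.
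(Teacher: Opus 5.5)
Your proposal is correct and follows essentially the paper's route: membership is Theorem~\ref{thm: membership}, and hardness comes from reusing the hard $[0,1]^2$ instances of Theorem~\ref{thm: interior_sosp} as \CSOSP instances via the identity map. Your explicit check that every \CSOSP solution maps back to an \ISOSP solution (interior SOSPs directly, boundary SOSPs via the ``approximate SOSP on the boundary'' clause, and other violations verbatim) is the direction a reduction needs, and it is more careful than the paper's sketch, which only observes that interior approximate SOSPs satisfy Definition~\ref{def:approx_sosp1}.
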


\begin{proof}[Proof sketch]
    The proof follows by directly combining Theorem \ref{thm: interior_sosp} and Theorem \ref{thm: membership}. 
    More specifically, the membership in \PLS comes directly from Theorem \ref{thm: membership}.
    The \PLS hardness can be derived by reducing \CSOSP   to \ITER.
    This can be proved by following the trivial observation that the hard instance we constructed for showing that \CSOSP is \PLS-hard directly applies to \ISOSP as well.
    This holds because any approximate SOSP that lies in the interior of the domain $\mathcal{X}$ satisfies both Conditions \ref{eq.cond11} and \ref{eq.cond12} of Definition \ref{def:approx_sosp1}.
    Notably, for the same reasons as in the \PLS-hardness of \ISOSP, the hardness of \CSOSP continues to hold even when the domain is the unit square box $[0,1]^2$, and also even if one considers the promise-version of the problem, i.e., only instances without violations.
\end{proof}

\section{No continuous algorithm for finding $\epsilon$-SOSPs}\label{sec: continuous_algo}

Let $\mathcal{F}$ be the class of functions $f: [0,1]^2 \rightarrow \mathbb{R}$ that are $1$-Lipschitz continuous, $1$-smooth and $1$-Hessian Lipschitz and for which there exist polynomial-time Turing machines that evaluate $f$, $\nabla f$ and $\nabla^2 f$. We consider the class of search problems, where given a function $f \in \mathcal{F}$, our goal is to find an $(\epsilon, \sqrt{\epsilon})$-SOSP according to Definition \ref{def:approx_sosp1}. Next, we give the formal definition of a continuous algorithm for solving such problems:

\begin{definition}\label{continuous_alg}
    A continuous algorithm for finding approximate SOSP for functions in $\mathcal{F}$ is defined as a function $\mathcal{A}lg(f,\epsilon,x)$ that takes as input a function $f \in \mathcal{F}$, a target accuracy $\epsilon$ and a point $x \in [0,1]^2$ and returns a point $x' \in [0,1]^2$ such that, for any $f \in \mathcal{F}$ and any $\epsilon$ of finite precision:
    \begin{itemize}
        \item $\mathcal{A}lg(f,\epsilon,\cdot)$ is $L$-Lipschitz-continuous for some constant $L$,
        \item $\mathcal{A}lg(f,\epsilon,\cdot)$ can be evaluated efficiently at each point $x\in[0,1]^2$ in time polynomial in the bit complexity of $x$ and $\log(1/\epsilon)$, using a Turing machine,
        \item There exists a constant $c$ such that any $x^\star \in [0,1]^2$ satisfying $\|x^\star - \mathcal{A}lg(f,\epsilon, x^\star)\|_2 < c\cdot\epsilon $ is an $(\epsilon, \sqrt{\epsilon})$-SOSP of $f$, that is  $c\cdot\epsilon$-approximate fixed points of the algorithm are $\epsilon$-approximate SOSPs.
    \end{itemize}
\end{definition}

This is a natural definition of an efficient continuous update rule for finding SOSPs. For example, one can easily see that Projected Gradient Descent satisfies an analogous definition for the problem of finding approximate KKT points.

Moreover we consider the promise version of Brouwer, a problem that is complete for the class \PPAD. We give its formal definition below.

\begin{tcolorbox}[colback=white!10, colframe=black!80!black, arc=2mm, boxrule=1.5pt]
\begin{definition}\label{def: brouwer} \textsc{Brouwer}: \\
\textbf{Input:}
\begin{itemize}
    \item Constants $L$ and $\gamma$ 
    \item A polynomial-time Turing machine $\mathcal{C}_M$ evaluating an $L$-Lipschitz function $M : [0,1]^3 \rightarrow [0,1]^3$
\end{itemize}
\textbf{Goal:} Find a point $p^\star \in [0,1]^d$ such that:
\begin{subequations}\label{eq.cond:brouwer}
\begin{align}
&\bullet \quad \|p^\star - M(p^\star)\|_2 < \gamma & \textrm{(approx. fixed-point condition)}\label{eq.cond1brouwer}
\end{align}
\end{subequations}
\end{definition}
\end{tcolorbox}

\medskip

\noindent
Next, based on our main result (Theorem \ref{thm: interior_sosp}) and the \PPAD-membership of \textsc{Brouwer} (e.g., see \cite{daskalakis2009complexity}), we show the following theorem:

\begin{theorem}[There exists no continuous algorithm for finding SOSPs]\label{thm:main}
    Unless \PLS $\subseteq$ \PPAD (or equivalently \PLS $=$ \CLS), there exists no continuous algorithm according to Definition \ref{continuous_alg} for finding $(\epsilon,\sqrt{\epsilon})$-SOSP, even if the objective function is promised to be 1-Lipschitz, 1-smooth and 1-Hessian-Lipschitz and the domain $\mathcal{X}$ is fixed to be the unit square box $[0,1]^2$.
\end{theorem}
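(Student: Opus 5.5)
We argue by contradiction. Suppose a continuous algorithm $\mathcal{A}lg$ as in Definition \ref{continuous_alg} exists. We will use it to place a \PLS-hard problem inside \PPAD by reducing to \textsc{Brouwer}. Concretely, we chain two reductions. First, Theorem \ref{thm: interior_sosp} reduces \ITER to finding an $(\epsilon,\sqrt{\epsilon})$-SOSP of some $f_{sc}\in\mathcal{F}$ on $[0,1]^2$ with $\epsilon=\epsilon_0^2/(c_0^2N^4)$; the final rescaling step of Section \ref{sec: rescaling} guarantees all Lipschitz constants are at most $1$. Second, this SOSP problem reduces to \textsc{Brouwer} through the map $M(x)=\mathcal{A}lg(f_{sc},\epsilon,x)$.

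Since \textsc{Brouwer} lies in \PPAD \cite{daskalakis2009complexity}, the composition gives $\ITER\in\PPAD$. Because \ITER is \PLS-complete, this yields $\PLS\subseteq\PPAD$. Combined with $\CLS=\PLS\cap\PPAD$ \cite{fearnley2022complexity}, that is equivalent to $\PLS=\CLS$.

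The steps, in order, are as follows.

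(i) Given an \ITER circuit $C$, build in polynomial time the Turing machines for $f_{sc}$, $\nabla f_{sc}$ and $\nabla^2 f_{sc}$, using Lemma \ref{lem: circuits} and the rescaling. Note that $\epsilon$ has bit length $O(n)$, so $\log(1/\epsilon)$ is polynomial.

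(ii) Define $M:[0,1]^2\to[0,1]^2$ by $M(x)=\mathcal{A}lg(f_{sc},\epsilon,x)$. Then pad it to $[0,1]^3$ as $M(x,z)=(\mathcal{A}lg(f_{sc},\epsilon,x),0)$, so that it matches Definition \ref{def: brouwer}. By the first two bullets of Definition \ref{continuous_alg}, $M$ is $L$-Lipschitz for a constant $L$ and is evaluable in polynomial time. Set $\gamma=c\cdot\epsilon$.

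(iii) Take any Brouwer solution $p^\star=(x^\star,z^\star)$. It satisfies $\|x^\star-\mathcal{A}lg(f_{sc},\epsilon,x^\star)\|\le\|p^\star-M(p^\star)\|<c\epsilon$. By the third bullet of Definition \ref{continuous_alg}, $x^\star$ is therefore an $(\epsilon,\sqrt{\epsilon})$-SOSP of $f_{sc}$. By Theorem \ref{thm: interior_sosp}, it lies in a Group X box, so it maps back to an \ITER solution $k^\star=\lfloor (6\cdot\text{first coordinate of } Nx^\star+3)/6\rfloor$ or nearby; we check the constantly many candidate $k$ with the circuit.

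(iv) The instance has no violations, and the reductions are promise-preserving. So the promise version of \textsc{Brouwer}, which is in \PPAD, suffices.

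The main obstacle is the interplay between the Brouwer precision and the Lipschitz constant of $\mathcal{A}lg$. \textsc{Brouwer} membership in \PPAD needs $\gamma$ and $L$ given in binary. Here $\gamma=c\epsilon$ is exponentially small, which is allowed since the standard \PPAD membership proof handles $\gamma$ with $\log(1/\gamma)$ polynomial. The worry is whether $\mathcal{A}lg$ might effectively need a Lipschitz constant depending on $\epsilon$; Definition \ref{continuous_alg} rules this out by demanding a constant $L$. I would verify carefully that the grid-simplicial-approximation argument only needs $L$ and $\gamma$ of polynomial bit length. A secondary subtlety is decoding $x^\star$ back to an \ITER solution, since it may lie on the boundary between Group X boxes. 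Here I would rely on the classification showing that every SOSP lies in Boxes $(6k^\star-3\ldots6k^\star-1,\,6k^\star+2)$, so rounding $Nx^\star$ identifies $k^\star$ up to a constant-size ambiguity, which is resolved by querying $C$.
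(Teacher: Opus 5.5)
Your proposal is correct and matches the paper's proof. You wrap $\mathcal{A}lg(f,\epsilon,\cdot)$ into a \textsc{Brouwer} instance on $[0,1]^3$ with $\gamma=c\epsilon$ (the paper pads with the identity $z\mapsto z$ instead of $z\mapsto 0$, which makes no difference), note that its approximate fixed points are $(\epsilon,\sqrt{\epsilon})$-SOSPs, and conclude via the \PLS-hardness of Theorem~\ref{thm: interior_sosp} on the rescaled instances whose Lipschitz constants are all at most $1$. The one slip is cosmetic: your decoding should read $k^\star=\lfloor (Nx^\star_1+3)/6\rfloor$ without the extra factor $6$, and the decoding is in any case already supplied by the reduction behind Theorem~\ref{thm: interior_sosp}.
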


\begin{proof}
Suppose for the sake of contradiction that there exist a continuous algorithm for finding $(\epsilon,\sqrt{\epsilon})$-SOSP, according to Definition \ref{continuous_alg}. 
Given an instance of the \ISOSP problem (see Definition \ref{def:isosp}) for some function $f \in \mathcal{F}$ and some target accuracy $\epsilon$ we  construct an instance of $\textsc{Brouwer}$ setting $M(x,y,z)=(\mathcal{A}lg(f,\epsilon,x,y),z)$ for all $(x,y,z) \in [0,1]^3$ and $\gamma=c \cdot \epsilon $. Note that due to Definition \ref{continuous_alg} the constructed $M$ is Lipschitz continuous and can be evaluated using a polynomial-time Turing machine.

For any approximate fixed point of $M$, that is for any $(x^\star,y^\star,z^\star) \in [0,1]^3$ such that $\|(x^\star,y^\star,z^\star) - M(x^\star,y^\star,z^\star)\|_2 < \gamma$ it holds that $x^\star,y^\star$ is a  $c\cdot\epsilon$-approximate fixed point of $\mathcal{A}lg(f,\epsilon, \cdot)$ and thus due to Definition \ref{continuous_alg} $x^\star,y^\star$ is an $(\epsilon,\sqrt{\epsilon})$-SOSP for $f$. This would imply that the  \ISOSP problem can be reduced to $\textsc{Brouwer}$ and thus that $\ISOSP$ belongs to \PPAD.
However, as we showed in Theorem \ref{thm: interior_sosp}, \ISOSP is \PLS-hard, even if $f$ is 1-Lipschitz, 1-smooth and 1-Hessian-Lipschitz and the domain is $[0,1]^2$ and even for the promise version of the problem, since there are no violations in our construction. This brings us to a contradiction, unless
$\PLS \subseteq \PPAD$.
\end{proof}
\section{Conclusion} In conclusion, this paper settles the complexity of finding second-order stationary points in constrained non-convex optimization by proving that computing an $\epsilon$-approximate SOSP under polytope constraints is PLS-complete. This establishes a fundamental barrier, demonstrating that no deterministic, continuous iterative algorithm can efficiently compute these points, even in simple domains like a 2D unit square. While these results clearly resolve the computational landscape within the white-box model, it remains a compelling open question what happens within the black-box model. Furthermore, another interesting open question is the computational complexity of finding strong approximations of FOSPs and SOSPs. To the best of our knowledge, it is currently known only that the strong approximation of a FOSP is SQRT-SUM hard \cite{etessami2010complexity, daskalakis2011continuous}. It is noteworthy that it seems unlikely that \FIXP \cite{etessami2010complexity} is the right complexity class to capture the strong approximation of FOSPs and SOSPs; therefore, a new subclass of \FIXP needs to be defined.

\subsection*{Acknowledgements}
Ioannis Panageas was supported by NSF grant CCF-2454115. This research was supported in part by 
project MIS 5154714 of the National Recovery and Resilience Plan Greece 2.0 funded by the European Union under the NextGenerationEU Program.

\bibliographystyle{alpha}  
\bibliography{main}

\newpage
\appendix

\end{document}